\newtheorem{thm}{Theorem}[section]
\newtheorem{prop}[thm]{Proposition}
\newtheorem{defn}[thm]{Definition}
\theoremstyle{definition}
\newtheorem{conjecture}[thm]{Conjecture}
\theoremstyle{remark}
\NewDocumentCommand{\binomial}{omm}
{%
	\genfrac(){0pt}{}{#2}{#3}%
	\IfValueT{#1}{_{\!#1}}%
}
\NewDocumentCommand{\eulerian}{omm}
{%
	\genfrac<>{0pt}{}{#2}{#3}%
	\IfValueT{#1}{_{\!#1}}%
}
\def \s {\sigma}
\newenvironment{claim}[1]{\par\noindent\underline{Claim:}\space#1}{}
\def\yz#1\yz {{\color{blue} [[YZ: #1]] }}
\def\yzg#1\yzg {{\color{gray} [[YZ: #1]] }}
\def\yzz#1\yzz {{\color{gray} [[To be verified: #1]] }}
\def\ne#1\ne {{\color{green} [[NE: #1]] }}
\newcommand{\matindex}[1]{\mbox{\scriptsize#1}}
\title{Generalized Color Orderings: CEGM Integrands and Decoupling Identities}
\author[a]{Freddy Cachazo,}
\author[b]{Nick Early,}
\author[a]{and Yong Zhang}
\affiliation[a]{Perimeter Institute for Theoretical Physics, Waterloo, ON N2L 2Y5, Canada.}
\affiliation[b]{Max Planck Institute for Mathematics in the Sciences, Leipzig, Germany.}
\emailAdd{fcachazo@pitp.ca}
\emailAdd{nick.early@mis.mpg.de}
\emailAdd{yzhang@pitp.ca}
\abstract{
In a recent paper, we defined generalized color orderings (GCO) and Feynman diagrams (GFD) to compute color-dressed generalized biadjoint amplitudes. In this work, we study the Cachazo-Early-Guevara-Mizera (CEGM) representation of generalized partial amplitudes and ``decoupling" identities. This representation is a generalization of the Cachazo-He-Yuan (CHY) formulation as an integral over the configuration space $X(k,n)$ of $n$ points on $\mathbb{CP}^{k-1}$ in generic position. 

Unlike the $k=2$ case, Parke-Taylor-like integrands are not enough to compute all partial amplitudes for $k>2$. Here we give a set of constraints that integrands associated with GCOs must satisfy and use them to construct all $(3,n<9)$ integrands, all $(3,9)$ integrands up to four undetermined constants, and $95 \%$ of $(4,8)$ integrands up to 24 undetermined constants. 

$k=2$ partial amplitudes are known to satisfy identities. Among them, the so-called $U(1)$ decoupling identities are the simplest ones. These are characterized by a label $i$ and a color ordering in $X(2,|[n]\setminus \{i\}|)$. Here we introduce decoupling identities for $k>2$ determined combinatorially using GCOs. Moreover, we identify the natural analog of $U(1)$ identities as those characterized by a pair of labels $i\neq j$, and a pair of GCOs, one in $X(k,|[n]\setminus \{i\}|)$ and the other in $X(k-1,|[n]\setminus \{j\}|)$. We call them {\it double extension} identities. 

We also provide explicit connections among different ways of representing GCOs, such as configurations of lines, configurations of points, and reorientation classes of uniform oriented matroids (chirotopes).

}
\begin{document}
	\maketitle
	\addtocontents{toc}{\protect\setcounter{tocdepth}{1}}
	\def \tr {\nonumber\\}
	\def \nn {\nonumber}
	\def \la {|}
	\def \ra {|}
	\def \lan {\langle}
	\def \ran {\rangle}
	\def \dd {\Theta}
	\def\hset{\texttt{h}}
	\def\gset{\texttt{g}}
	\def\sset{\texttt{s}}
	\def \be {\begin{equation}}
		\def \ee {\end{equation}}
	\def \ba {\begin{eqnarray}}
		\def \ea {\end{eqnarray}}
	\def \bg {\begin{gather}}
		\def \eeg {\end{gather}}
	\def \k {\kappa}
	\def \h {\hbar}
	\def \r {\rho}
	\def \l {\lambda}
	\def \be {\begin{equation}}
		\def \en {\end{equation}}
	\def \bes {\begin{eqnarray}}
		\def \ens {\end{eqnarray}}
	\def \red {\color{Maroon}}
	\def \pt {{\rm PT}}
	\def \s {\mathfrak{s}}
	\def \t {\mathfrak{t}}
	\def \v {\mathfrak{v}}
	\def \C {\textsf{C}}
	\def \tp {||}
	\def \p {x}
	\def \x {z}
	\def \V {\textsf{V}}
	\def \ls {{\rm LS}}
	\def \ma {\Upsilon}
	\def \SL {{\rm SL}}
	\def \GL {{\rm GL}}
	\def \w {\omega}
	\def \e {\epsilon}
	
	\numberwithin{equation}{section}


\section{Introduction}

Tree-level scattering amplitudes in the biadjoint cubic scalar theory with group $SU(N)\times SU(\tilde N)$ can be color-decomposed as (see \cite{Mangano:1990by} for a review)
\begin{equation}\label{coBS}
    {\mathcal M}(\{ k_i,a_i,{\tilde a}_i\} ) = \sum_{\alpha,\beta \in S_{n}/{\mathbb{Z}_n}} {\text {tr}}\left( T^{a_{\alpha (1)}}T^{a_{\alpha (2)}} \cdots T^{a_{\alpha (n)}}\right){\text {tr}}\left( T^{{\tilde a}_{\beta (1)}}T^{{\tilde a}_{\beta (2)}} \cdots T^{{\tilde a}_{\beta (n)}}\right)m_n(\alpha,\beta). 
\end{equation}
Here the $T^{\alpha_i},T^{{\tilde a}_j}$ are generators of $SU(N)$ and $SU(\tilde{N})$, respectively, $\alpha$ and $\beta$ are called {\it color orderings}, $m_n(\alpha,\beta)$ are {\it partial amplitudes} while ${\mathcal M}(\{ k_i,a_i,{\tilde a}_i\} ) $ is the {\it color-dressed amplitude}. 

In \cite{Cachazo:2022pnx}, color-dressed generalized amplitudes were introduced. This was done for any $(k,n)$ by defining generalized color orderings (GCO) using arrangements of $(k{-}2)$-planes (see e.g. \cite{bjorner1999oriented}) in $\mathbb{RP}^{k-1}$ and generalized Feynman diagrams (GFD) using arrangements of metric trees \cite{{herrmann2008draw,Borges:2019csl,Cachazo:2019xjx}}. The result is expressed as  
\begin{equation}\label{coBSintro}
    {\mathcal M}^{(k)}_n = \sum_{I,J} {\bf c}(\Sigma_I){\bf c}(\Sigma_J)\, m_n^{(k)}(\Sigma_I,\Sigma_J), 
\end{equation}
where the sum is over all $(k,n)$ generalized color orderings, $\Sigma_I$ denotes a generalized color ordering, while ${\bf c}(\Sigma_I)$ is the corresponding color factor. Here as in \cite{Cachazo:2022pnx}, ${\bf c}(\Sigma_I)$ are treated as formal variables.

The $(2,n)$ biadjoint amplitudes, $m_n(\alpha,\beta)$, admit a Cachazo-He-Yuan (CHY) formulation as an integral over $X(2,n)$, the configuration space of $n$ points in $\mathbb{CP}^1$, with integrand given by the so-called Parke-Taylor factors \cite{Cachazo:2013hca,Cachazo:2013iea}. For later convenience, we denote the set of all $(k,n)$ GCOs as ${CO_{k,n}}$.

In 2019, a generalization of the CHY formalism to $X(k,n)$, the configuration space of $n$ points in $\mathbb{CP}^{k-1}$, was introduced by the first two authors, Guevara and Mizera (CEGM) \cite{CEGM} (see also \cite{Drummond:2019qjk,Arkani-Hamed:2019mrd}),    which has been studied extensively in the literature ({\it c.f.} \cite{GarciaSepulveda:2019jxn,Cachazo:2019ble,Drummond:2020kqg,He:2020ray,Abhishek:2020sdr}). Unlike the $k=2$ case, not all $(k,n)$ partial amplitudes, $m_n^{(k)}(\Sigma_I,\Sigma_J)$, can be computed using Parke-Taylor factors as integrands in the CEGM formulation \cite{Cachazo:2022pnx}.    

Only when both $\Sigma_I$ and $\Sigma_J$ belong to the special type of $(k,n)$ color orderings called type 0, the partial amplitude is computed using higher $k$ Parke-Taylor (PT) factors, where type 0 corresponds to GCOs obtained from $(2,n)$ color orderings by ``pruning'' several labels at a time (i.e. they are {\it descendants} of a $(2,n)$ color orderings).

In this work, we present a series of constraints that CEGM integrands, associated with $(k,n)$ generalized color ordering, must satisfy and which we implement in an algorithm in section \ref{sec2}. In a nutshell, a $(k,n)$ color ordering, $\Sigma$, is a collection of ${n \choose k-2}$ orders of type $(2,n-k+2)$, satisfying a certain compatibility conditions. The integrand associated with it, $I_{\Sigma}$, is a rational function of the Plucker coordinates $\Delta_{i_1i_2\cdots i_k}$ associated with $X(k,n)$, with only simple poles determined by the simplices in the arrangement of hyperplanes defined by $\Sigma$. The numerator is constrained by the torus weight and by the condition that under a simplex flip (see  \cref{sec3}) that connects $\Sigma$ to $\Sigma'$, the residues of $I_\Sigma$ and $I_{\Sigma'}$ coincide. This simple observation determines all integrands for $k=3$ and $n<9$. Modulo relabeling, there are $4$ for $n=6$, $11$ for $n=7$, and $135$ for $n=8$. For $n=9$ we find that all $4381$ types are fixed up to four constants, which are independent of the coordinates on $X(3,9)$. We also determine $2469$ of the $2604$ types of $(4,8)$ integrands up to 24 undetermined constants.

Using the integrands we found, we have verified all $(3,6), (3,7), (4,7)$ and $(3,8)$ color-ordered amplitudes obtained from CEGM integrals agree with those obtained using GFDs explained in \cite{Cachazo:2022pnx}, which is a strong consistency check both for the integrands and the GFDs.

In the second part of this work, we study the generalization of decoupling identities. In $(2,n)$ amplitudes, if one of the two generators assigned to, say particle $n$, commutes with those of the other particles, then the color-dressed amplitude can be shown to vanish. This implies identities among the partial amplitudes $m_n(\alpha,\beta)$ known as $U(1)$ decoupling identities\footnote{The name $U(1)$ comes from the common practice of assigning $T=\mathbb{I}$ to the particle being decoupled. If the group under consideration was $U(N)$, then $T=\mathbb{I}$ would be the generator of the $U(1)$ subgroup.}. 

The way partial amplitudes organize is by the particular $(2,n-1)$ color ordering a given $(2,n)$ ordering descends to after projecting out label $n$. 

The CHY formulation gives a simple geometric interpretation of this identity in terms of a fibration of $X(2,n)$ over $X(2,n-1)$. Consider the real model of $X(2,n-1)$ as the boundary of a disk with $n-1$ points on it. The points split the boundary into $n-1$ segments. Placing point $n$ on a given segment corresponds to a given color ordering and hence to a Parke-Taylor factor. The poles in the Parke-Taylor factor that depend on $n$ correspond to the points defining the segment. An identity is obtained by adding all Parke-Taylor factors corresponding to all color orderings obtained while point $n$ traverses the circle. 

The $(3,n)$ analog of this procedure reveals a rich structure which we fully explore for $n \leq 9$.

We identify the analog of the $U(1)$ decoupling identities, which we call ``fundamental" identities, and make a proposal for all $k$ and $n$ (see section \ref{ref4B}). Fundamental identities are characterized by the choice of two labels $i\neq j$ and two generalized color orderings, one is a $(k,n-1)$ GCO, $\Sigma_1$, on the set $[n]\setminus \{ i\}$ and the other is a $(k-1,n-1)$ GCO, $\Sigma_2$, on the set $[n]\setminus \{ j\}$. In other words, one constructs a, possibly empty, set of $(k,n)$ GCOs that ``project" to $\Sigma_1$ under a $k$-preserving projection and to $\Sigma_2$ under a $k$-decreasing projection. If the set is not empty then the integrands corresponding to the GCOs in it satisfy a fundamental identity.  

We also discuss how the duality between $(k,n)$ and $(n-k,n)$ reveals even more identities and apply it to the case $k=n-2$ for which we identify all identities as shuffle identities in the dual $(2,n)$ system.

The rest of this paper is organized as follows: A brief review of the CEGM formulation is given in \cref{sec2}. Then in \cref{sec3}, we show how to construct integrands that geometrize the combinatorial properties of GCOs, and in \cref{sec4}, we discuss how to derive higher $k$ irreducible decoupling identities in a purely combinatorial way
using GCOs. In \cref{sec:translation}, we point out ways to relate configurations of points to GCOs and to chirotopes.  In \cref{ref4B}, we introduce double extension identities and argue that they are fundamental and are the natural analog of $U(1)$ identities. We end with a discussion of future directions in \cref{sec5}. Most data is presented either in the appendices or in an ancillary file.

\section{Review of CEGM Formulation\label{sec2}}

In this section, we review the $(k,n)$ CEGM formalism \cite{CEGM} which is a generalization of the CHY formalism obtained for $k=2$. The construction is based on the configuration space of $n$ points on $\mathbb{CP}^{k-1}$, usually denoted by $X(k,n)$, and defined as
\be                
X(k,n) : = {\rm SL}(k,\mathbb{C})\backslash M^*(k,n) / {\rm T}^n , 
\ee 
where $M^*(k,n)$ is the space of $k\times n$ matrices with non-vanishing maximal minors and ${\rm T}^n = \left(\mathbb{C}^*\right)^{n}$ is the algebraic torus that acts on each column by rescalings. 

The key ingredient in the CEGM formulation is the scattering equations. These are the equations which determine the critical points of 
\be\label{cegmPot} 
	{\cal S}^{(k)}_n := \!\!\!\!\sum_{a_1<a_2<\ldots <a_k}\!\!\!\! \s_{a_1a_2\ldots a_k}\log |\Delta_{a_1a_2\ldots a_k}| \quad {\rm with} \quad \Delta_{a_1a_2\ldots a_k} := {\rm det}\left[ \begin{array}{cccc}
		M_{1a_1} & M_{1a_2} & \cdots & M_{1a_{k}}\\
		M_{2a_1} & M_{2a_2} & \cdots & M_{2a_{k}}\\
		\vdots   & \vdots   &  \ddots & \vdots \\      
		M_{k a_1} & M_{k a_2} & \cdots &  M_{k a_k}
	\end{array}  \right] .
\ee 

Here $\s_{a_1a_2\ldots a_k}$ are the generalized Mandelstam invariants and partial amplitudes are rational functions of them. They are assumed to be entries of a generic completely symmetric rank $k$ tensor subject only to a ``masslessness" condition, $\s_{a_1a_2\ldots a_{k-2},b,b}=0$ and ``momentum conservation", $\sum_{a_1,a_2,\ldots ,a_{k-1}=1}^n\s_{a_1a_2\ldots a_{k-1},b}=0$ for all $b$.  

A CEGM integral has the form
\be\label{fullInt}   
\int_{X(k,n)} d\mu_{k,n} \, {\cal I}_L {\cal I}_R ,
\ee  
with $d\mu_{k,n}$ a measure that localizes the integral to the solutions to the scattering equations. The simplest way to write down the measure is by choosing a chart of $X(k,n)$. We postpone its definition to \eqref{defMeasure} and instead discuss its transformation properties under the torus action, 
\be
M_{ia}\to t_i M_{ia}, \quad d\mu_{k,n} \to \left(\prod_{i=1}^n t_i\right)^{2k}d\mu_{k,n}.
\ee 
In order to have a well-defined integral it must be that the full integrand in \eqref{fullInt}, ${\cal I}_L {\cal I}_R$, transforms opposite to $d\mu_{k,n}$. It is standard to assign the same weight to ${\cal I}_L$ and to ${\cal I}_R$ and so 
\be\label{torusS} 
M_{ia}\to t_i M_{ia}, \quad {\cal I}_L\to \left(\prod_{i=1}^n t_i\right)^{-k}{\cal I}_L.
\ee 
This property is one of the keys to finding integrands in the next section.

Let us introduce an explicit parameterization of $X(k,n)$ given by 
\be\label{cegmPara} 
	M := \left[ \begin{array}{ccccccccc}
		1  &    0   & \cdots & 0      & 1 & y_{1,k+1}  & y_{1,k+2} & \cdots & y_{1,n}  \\
		0  &    1   & \cdots & 0      & 1 &  y_{2,k+1}  & y_{2,k+2} & \cdots & y_{2,n} \\
	\vdots & \vdots & \ddots & \vdots & \vdots & \vdots & \vdots & \ddots & \vdots \\
		0  &    0   & \cdots & 1      & 1 &    1  & 1   & \cdots & 1 
\end{array}  \right] .
\ee 
Here we have used $SL(k,\mathbb{C})$ and the torus action to fix the form of the matrix representative. Now it is easy to write down the measure
\be\label{defMeasure} 
d\mu_{k,n} := \prod_{i=1}^{k-1}\prod_{a=k+1}^{n}dy_{ia}\,\delta
\left( \frac{\partial 	{\cal S}^{(k)}_n}{\partial y_{ia}}\right). 
\ee 
Some advanced techniques to solve the scattering equations are developed in \cite{Agostini:2021rze,Sturmfels:2020mpv,Cachazo:2019ble} and the CEGM formula becomes 
\be   
\int_{X(k,n)} d\mu_{k,n} {\cal I}_L {\cal I}_R = \sum_{\rm solutions} \left({\rm det}\left[ \frac{\partial^2 {\cal S}^{(k)}_n}{\partial y_{ia}\partial y_{jb}}\right]\right)^{-1} \, {\cal I}_L {\cal I}_R,
\ee  
where the sum is over the solutions to the equations $\partial {\cal S}^{(k)}_n / \partial y_{ia} = 0$ and the integrand is evaluated on the solutions. 

Our proposal is that there exists an integrand ${\cal I}(\Sigma )$ associated with each $(k,n)$ color ordering $\Sigma$ such that 
\be\label{cegm3n}
m^{(k)}_n(\Sigma, \tilde\Sigma) = \sum_{\rm solutions} \left({\rm det}\left[ \frac{\partial^2 {\cal S}^{(k)}_n}{\partial y_{ia}\partial y_{jb}}\right]\right)^{-1} \, {\cal I}(\Sigma ){\cal I}(\tilde\Sigma )
\ee
agrees with the evaluation of $m^{(k)}_n(\Sigma, \tilde\Sigma)$ as a sum over generalized Feynman diagrams (up to an overall sign). Moreover, in analogy with the CHY construction, once the integrands in the CEGM formula \eqref{cegm3n} are fixed, then $m^{(k)}_n(\Sigma, \tilde\Sigma)$ is also fixed and the sign needed to make the sum over generalized Feynman diagrams agree with it can be uniquely determined for all partial amplitudes. We have  verified this proposal  for $(3,6),(3,7), (4,7) $ and $(3,8)$ using the integrands constructed in \cref{sec3}.


In the original CEGM paper \cite{CEGM}, most of the integrands studied were combinations of Parke-Taylor factors, which are
defined as
\be\label{PTform}
	{\rm PT}^{(k)}(\alpha ) := \frac{1}{\Delta_{\alpha_1,\alpha_2,\ldots ,\alpha_k}\Delta_{\alpha_2,\alpha_3,\ldots ,\alpha_{k+1}}\cdots \Delta_{\alpha_n,\alpha_{1},\cdots ,\alpha_{k-1}}}.
\ee
These integrands are associated with a special type of $(k,n)$ color orderings known as descendants of a $(2,n)$ ordering. Given $\alpha$, a $(2,n)$ color ordering, one constructs a $(k,n)$ ordering by computing the array of dimension $n^{k-2}$ with component $l_1,l_2,\ldots ,l_{k-2}$ obtained by deleting those labels from $\alpha$ to get a $(2,n-k+2)$ color ordering. These are called type 0 GCOs and they are in bijection with $(2,n)$ orderings. Therefore, there are $(n-1)!/2$ of them. 

The rest of this work is devoted to developing techniques for computing the integrand associated with generalized color orderings of other types and to the study of the linear relations they satisfy.

\section{CEGM Integrands from GCOs\label{sec3}}

This section is devoted to the computation of CEGM integrands associated with generalized color orderings. We present constraints that integrands must satisfy and use them to develop an algorithm that aids in their computation. For $(3,n<9)$ GCOs, the algorithm completely determines all integrands.   

In a nutshell, the connection between generalized color orderings and CEGM integrands is found by identifying the poles in the integrands with the $(k-1)$-simplices in the GCOs and by imposing that whenever two GCOs are connected via a simplex-flip, the residues of the corresponding integrands must agree up a sign.

\subsection{Review of Generalized Color Orderings (GCOs) \label{sec3d1}}

Let us start by recalling the appropriate definitions given in \cite{Cachazo:2022pnx} regarding generalized color ordering as well as a short review of their most important properties.   

\begin{defn}\label{GCOkkkk}
For $k\ge 3$, a $(k,n)$ generalized {\it color ordering} is an ${n\choose k-2}$-tuple 
\be
\label{defineGCO}
\Sigma^{[k]} =  \{ \sigma^{(i_1,i_2,\cdots,i_{k-2})}| \{i_1,\cdots, i_{k-2}\} \subset [n] \}\,,
\ee   
where $\sigma^{(i_1,i_2,\cdots,i_{k-2})}$ is a $(2,n-k+2)$ color ordering constructed as follows. 

Let 
$\{ H_{1},H_{2},\ldots ,H_{n} \}$
be an arrangement of $n$ projective $({k-}2)$-planes in generic position in $\mathbb{RP}^{(k-1)}$. Intersecting any $(k{-}2)$ such $H$'s, $\{ H_{i_1},H_{i_2},\ldots ,H_{i_{k-2}} \}$, produces a line, $L^{(i_1,i_2,\cdots,i_{k-2})}$. The line so defined intersects the remaining $(n{-}k{+}2)$ $H$'s each on a point, resulting in a sequence of points on the line which defines a $(2,n{-}k{+}2)$ 
color ordering $\sigma^{(i_1,i_2,\cdots, i_{k-2})}$.
\end{defn}

In the definition we chose to construct $\Sigma^{[k]}$ out of $(2,n{-}k{+}2)$ color orderings. However, it is sometimes convenient to 
note that since each $H_i$ is an $\mathbb{RP}^{k-2}\subset \mathbb{RP}^{k-1}$, then $H_i\cap H_j$ is an $({k-}3)$-plane for all $j\in [n]\setminus \{i\}$. This means that we have an arrangement of $n-1$ $({k-}3)$-planes in $\mathbb{RP}^{k-1}$, i.e. a $(k{-}1,n-1)$ color ordering,
\be
\label{component}
\Sigma^{(i),[k-1]} =  \{ \sigma^{(i,i_2,\cdots,i_{k-2})}| \{i_2,\cdots, i_{k-2}\} \subset [n] \setminus \{i\} \} . 
\ee 
Clearly, we have\footnote{The union implies that duplicates are not included.}
\be
\Sigma^{[k]}=\bigcup_{i=1}^n
\Sigma^{(i),[k-1]}\,.
\ee 

By definition,  removing a $(k{-}2)$-plane, say $H_{i}$, from the arrangement with $n>k+2$ must result in another  arrangement but with $(n-1)$ $(k{-}2)$-planes. Therefore, the operation must give a $(k,n-1)$ color ordering, on the set $[n]\setminus \{i\}$, i.e. label $i$ does not participate. This operation is called a ($k$-preserving) projection \cite{Cachazo:2022pnx},
\be
\label{projectionkkkk}
\pi_i\left(\Sigma ^{[k]}\right): = \left\{ \pi_i\left( 
\sigma^{(i_1,i_2,\cdots,i_{k-2})}
\right)
| \{i_1,\cdots, i_{k-2}\} \subset [n] \setminus \{i\} 
\right\}\,.
\ee 
The projection $\pi_i$ acting on a $k=2$ ordering $\sigma^{(i_1,i_2,\cdots,i_{k-2})}$ just means to remove the label $i$ from the set regardless of its position.

The projection operation is the key to finding the dual of a GCO \cite{Cachazo:2022pnx} under the $X(k,n)\simeq X(n-k,n)$ duality. More important for us here is that it is also the key to defining pseudo-GCOs recursively. The reason the ``pseudo" qualifier has to be added is that it is known that the recursive procedure can give rise to arrangements that are not realizable\footnote{Realizability here means that it is possible to find an arrangement of hyperplanes with the prescribed properties. When this fails, bending the hyperplanes makes the arrangement possible. Such curved hyperplanes are called pseudo-hyperplanes.} (see \cite{bjorner1999oriented} for details).

\begin{defn}\label{generalpseudoGCO}
  A ${n \choose k-2}$-tuple of standard color orderings with $n>k+2$ is said to be a $(k,n)$ pseudo-GCO if all its projections are $(k,n-1)$ pseudo-GCOs, while $(k,k+2)$ pseudo-GCOs are all descendants of $(2,k+2)$ color orderings.
\end{defn}

A GCO must be a pseudo-GCO and a pseudo-GCO which is not a GCO is called a non-realizable pseudo-GCO.

Let us now generalize the notion of a triangle flip introduced for $(3,n)$ color orderings in \cite{Cachazo:2022pnx} to all $k$.

{\it A (combinatorial) $(k-1)$-simplex} of $\Sigma^{[k]}$ is any subset of $k$ elements $\{i_1,i_2,\ldots ,i_k \}\subset [n]$ such that for any partition of it of the form $\{ a,b \} \cup \{ c_1,c_2,\ldots ,c_{k-2} \} = \{i_1,i_2,\ldots ,i_k \}$ the labels $a$ and $b$ are consecutive in the $(2,n-k+2)$ ordering $\sigma^{(c_1,c_2,\ldots ,c_{k-2})}$.

{\it A (combinatorial) simplex flip} of a GCO $\Sigma^{[k]}$ is an operation that produces a new pseudo-GCO (potentially a GCO) as follows. Given a simplex $\{i_1,i_2,\ldots ,i_k \}$ of $\Sigma^{[k]}$ as above, send $\sigma^{(c_1,c_2,\ldots ,c_{k-2})}\to \sigma^{(c_1,c_2,\ldots ,c_{k-2})}|_{a \leftrightarrow b}$ if $\{ a,b \} \cup\{ c_1,c_2,\ldots ,c_{k-2} \} = \{i_1,i_2,\ldots ,i_k \}$ and leave all other $k=2$ orderings invariant.

\subsection{Constructing Integrands \label{sec3d2}}

Now we are ready to start building the connection between GCOs and CEGM integrands. 

Let $\Sigma^{[k]}$ be a $(k,n)$ color ordering and ${\cal I}(\Sigma^{[k]})$ its CEGM integrand, then we claim that all poles of ${\cal I}(\Sigma^{[k]})$ are simple and of the form $1/\Delta_{i_1,i_2,\ldots ,i_{k}}$. Moreover, if a simplex flip of $\Sigma^{[k]}$ along $\{j_1,j_2,\ldots ,j_{k}\}$ leads to another GCO then $1/\Delta_{j_1,j_2,\ldots ,j_{k}}$ is a pole of ${\cal I}(\Sigma^{[k]})$.  

Having defined the poles, the final step is to obtain information regarding the residues of ${\cal I}(\Sigma^{[k]})$ to find its numerator. 

A preliminary step is to note that the number of poles of ${\cal I}(\Sigma^{[k]})$ is $n+p$, with $p\geq 0$, and therefore by using the torus transformation of ${\cal I}(\Sigma^{[k]})$ given in \eqref{torusS} with all $t_i=t$, one has that the numerator must be a polynomial of degree $p$ in the Plucker variables. This means that whenever the number of poles in ${\cal I}(\Sigma^{[k]})$ is $n$ or $n+1$,  we can completely determine it using the known torus transformation of ${\cal I}(\Sigma^{[k]})$. When $p=0$ the numerator is $1$ and when $p=1$ the numerator must be a single Plucker variable, $\Delta_{a_1,a_2,\ldots ,a_k}$, with $\{a_1,a_2,\ldots ,a_k \}$ the labels that participate in $k+1$ Plucker variables in the denominator. 

Let us call the GCOs integrands with $p=0,1$ {\it basic integrands}. All GCOs of type 0 and type I give rise to basic integrands.

The algorithm for computing ${\cal I}(\Sigma^{[k]})$ given a $GCO$ is based on the observation that if $\Sigma^{[k]}$ and $\tilde\Sigma^{[k]}$ are related by a simplex flip, then they share a pole and their residues at the pole must agree (up to a sign). 

\vspace{2mm}
\noindent \paragraph{Algorithm I: Building a System of Integrands}  ~

{\it Input}: All $(k,n)$ GCOs.

{\it Output}: Rational functions associated with each GCO with poles in correspondence with simplices of the GCOs and matching residues whenever there exists a simplex flip.

\begin{enumerate}
    \item Construct a list $L$ with all basic integrands with their corresponding GCOs. 
    \item Consider a GCO in $L$ and perform all possible simplex flips. Select one of the GCO generated in the process for which the integrand has the smallest $p$ and is not already in $L$. Call it $\Sigma_{\rm test}$.
    \item Construct an ansatz for ${\cal I}(\Sigma_{\rm test}) = P_{\rm test}(\Delta)/Q_{\rm test}(\Delta)$ where $Q_{\rm test}(\Delta)$ is a product of all Plucker variables associated with simplices of $\Sigma_{\rm test}$ and $P_{\rm test}(\Delta)$ is a general polynomial of degree $d$ in Plucker variables with unknown coefficients in which each monomial is has a transformation under the torus action determined by \eqref{torusS}.  Of course, we just need to keep such monomials in the polynomial $P_{\rm test}(\Delta)$ that are linearly independent\footnote{There are various ways to construct bases of the graded components of the Plucker algebra, see for example \cite{fulton1996}.  A standard prescription is to take all semi-standard Young tableaux for a given multi-degree with columns $J_1,\ldots, J_d$, say; the basis elements are then products of Pluecker coordinates $\Delta_{J_1}\cdots \Delta_{J_d}$.}.

    \item 
    Denote a GCO generated by any simplex flip as $\Sigma_{\rm other}$.
    Even though the explicit form of $P_{\rm other}(\Delta)$ might not be known, we can already utilize the requirements of matching the residues of ${\cal I}(\Sigma_{\rm test})$ and ${\cal I}(\Sigma_{\rm old})$ at the corresponding pole,  say $1/\Delta_{\rm flip}$,
    as follows. Require 
   \be
   \label{firstutilization}
 \left. P_{\rm test}(\Delta)\right|_{\Delta_{\rm flip},\Delta'=0} =0 \quad \forall ~ \Delta' ~s.t. ~
  \left( \left. {\hat Q}_{\rm test}(\Delta)\right|_{\Delta_{\rm flip},\Delta'=0}\!\neq 0 ~ \wedge ~
 \left. {\hat Q}_{\rm other}(\Delta)\right|_{\Delta_{\rm flip},\Delta'=0} \!\! = 0 \right),
\ee
where $\hat Q:=Q/\Delta_{\rm flip}$, i.e. the polynomial $Q$ with $\Delta_{\rm flip}$ removed. Note that the above equations are sign agnostic, so we can solve them (for all choices of $\Delta_{\rm flip}$ and $\Delta'$) simultaneously and simplify the ansatz $ P_{\rm test}(\Delta)$.

\item 
By construction, some GCOs generated by  simplex flips are already in $L$, say $\Sigma_{\rm old}$, and we can get more constraints on the ansatz $P_{\rm test}(\Delta)$ than  \eqref{firstutilization} by actually matching the residues,
 \be 
 \label{secnodutilization}
 \left. P_{\rm test}(\Delta){\hat Q}_{\rm old}(\Delta)\right|_{\Delta_{\rm flip}=0} = \pm  \left. P_{\rm old}(\Delta){\hat Q}_{\rm test}(\Delta)\right|_{\Delta_{\rm flip}=0}\,.
 \ee
In practice, step 4 provides a simpler ansatz that makes the matching in this step easier.

    \item If using all constraints completely determines $P_{\rm test}(\Delta)$ up to an overall sign then add $\Sigma_{\rm test}$ to the list $L$. Otherwise, store it in another list and reconsider it once the list $L$ has grown further and includes more GCOs that can impose constraints on $\Sigma_{\rm test}$.
    \item Repeat steps 2-6 until all integrands have been found.
\end{enumerate}

We have implemented this algorithm to find all integrands for $(3,6)$, $(3,7)$ , $(3,8)$ as well as $(4,7)$.

However, Step 6 is too restrictive for  $(3,9)$,  $(4,8)$ and beyond. Instead one has to use the following. 

Step 6': Using all available constraints, determine as many unknown coefficients in $P_{\rm test}(\Delta)$ as possible in terms of the rest. Add $\Sigma_{\rm test}$ to the list $L$ and treat it as if it were known.

Note that by adding a partially known integrand, say $I_{\rm test'}(\Delta)$, to the list $L$, Step 5 can produce more equations for a new ``test" integrand, $I_{\rm test}(\Delta)$. The new constrains may not only fix $I_{\rm test}(\Delta)$ but also $I_{\rm test'}(\Delta)$.

It is important to mention that there are further subtleties for $(3,9)$, $(4,8)$, and beyond as they have pseudo-GCOs that are not GCOs. This means some combinatorial simplex flips of a valid GCO may fail to lead to valid GCOs. In this work, we choose to discard such simplex flips when we apply Algorithm I. 

This way, 
using the modified version of the algorithm, we computed 
all $4381$ types of $(3,9)$ integrands in terms of only four unknown parameters. We also computed $2469$ types of $(4,8)$ integrands in terms of $24$ parameters, while the remaining $2604-2469=135$ types of integrands are left for future research since their computation requires a significant amount of computing resources.  

All our results for integrands are present in an ancillary file.

Our findings for $(3,9)$ show that starting at $n=9$ one has to combine the algorithm with some explicit evaluations of the CEGM formula to fix the undetermined coefficients by comparing the results to a sum over generalized Feynman diagrams. We expect that a similar phenomenon should also happen to $(4,8)$. It is tempting to relate the presence of unfixed parameters to that of non-realizable pseudo-GCOs. However, at this point, we do not have a direct way to link them.

\subsection{Examples
 \label{secExamples}}

Let us present the $(3,6)$ and $(3,7)$ results here and some representatives from $(3,8)$, $(3,9)$ and   
  $(4,8)$ integrands.

	\def \inter #1,#2 \inter { (intersection of  A#1--B#1 and A#2--B#2) }

\def \tri #1,#2,#3 \tri {  \inter #1,#2 \inter   -- \inter #2,#3 \inter  --  \inter #3,#1 \inter  --\inter #1,#2 \inter }

	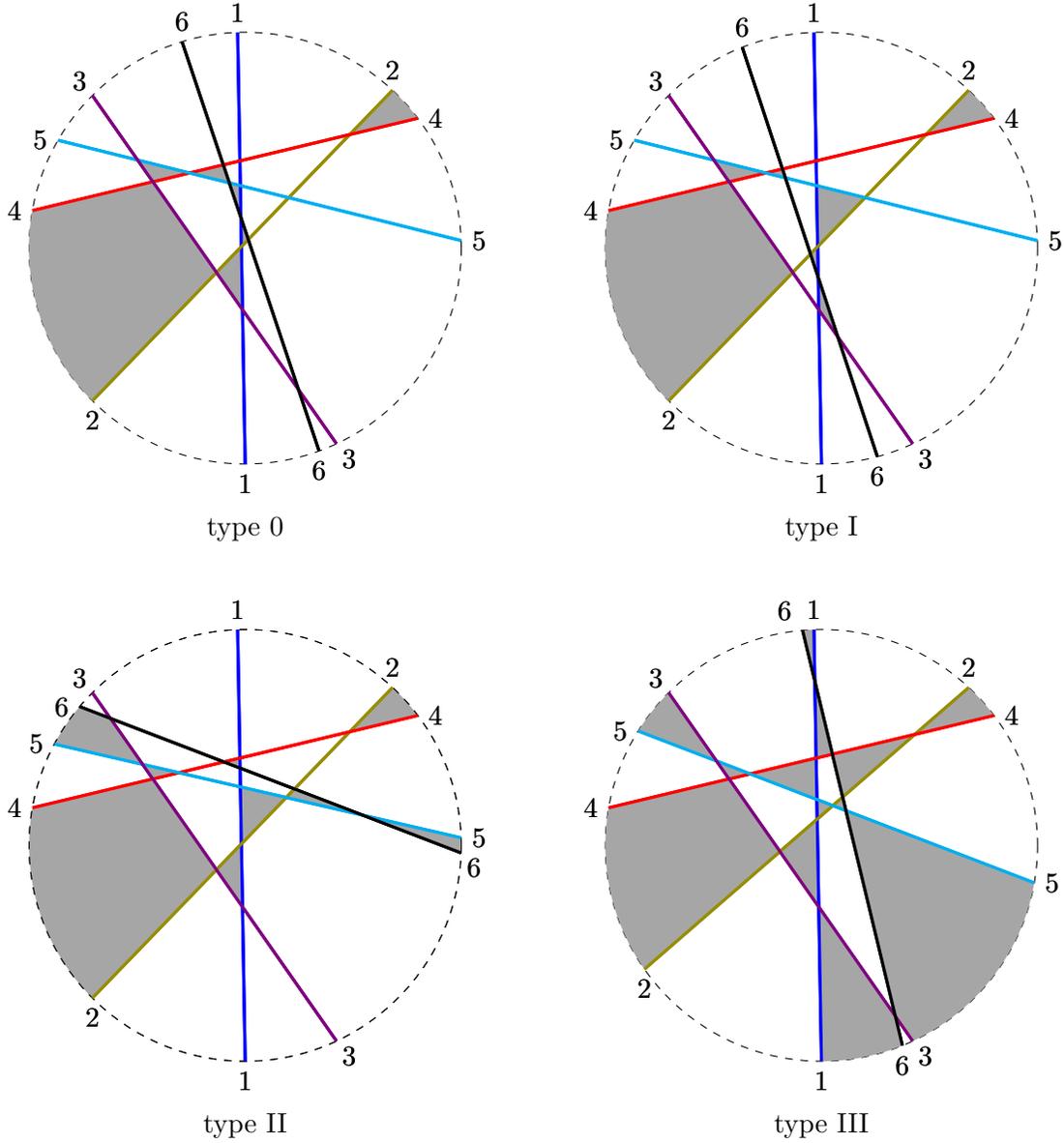
\begin{figure}[h!]
	\centering

 \begin{tikzpicture}[scale=.99]
  
\begin{scope}[xshift=-4cm,yshift=4cm,scale=3]

         \draw[ dashed](0,0) circle (1);
         
    \draw[blue,very thick] (90+2:1) coordinate(A1)  node [above]{\color{black}1} -- (-90:1) coordinate(B1)  node [below]{\color{black}1}  ; 
    
     \draw[olive,very thick] (45+2:1) coordinate(A2)  node [above]{\color{black}2} -- (-115-20:1) coordinate(B2)  node [below]{\color{black}2} ; 
     
    \draw[red,very thick] (35+2:1) coordinate(A3)  node [right]{\color{black}4} -- (-180-10:1)  coordinate(B3)  node [left]{\color{black}4} ; 
          
      \draw[cyan,very thick] (0+2:1) coordinate(A4)  node [right]{\color{black}5} -- (-180-30:1) coordinate(B4)  node [left]{\color{black}5} ; 
                   
    \draw[violet,very thick] (-65:1) coordinate(A5)  node [below right=-2pt]{\color{black}3} -- (-180-45:1) coordinate(B5)  node [ above left=-2pt]{\color{black}3} ;

    \draw[black,very thick] (110-3:1) coordinate(A6)   node [above]{\color{black}6} -- (-70:1) coordinate(B6)  node [below]{\color{black}6}  ;

    \begin{scope}
    
       
       \clip (0,0) circle (1cm);
    
    \fill [fill=gray!70, very thick]  \tri 1,4,6 \tri ;
    
        \fill [fill=gray!70, very thick]  \tri 1,2,6 \tri ;
        
          \fill [fill=gray!70, very thick]  \tri 1,2,5 \tri ;
          
                   \fill [fill=gray!70, very thick]  \tri 3,4,5 \tri ;
                   
                         \fill [fill=gray!70, very thick]  \tri 3,4,6 \tri ;

     \fill [fill=gray!70, very thick] \inter 2,3 \inter -- (A2)--+(A2) --(A3) --\inter 2,3 \inter;

          \fill [fill=gray!70, very thick] \inter 3,5 \inter --\inter 2,5 \inter --  (B2)--+(B2) --(B3) --\inter 2,3 \inter;

                 \fill [fill=gray!70, very thick] \inter 3,5 \inter --\inter 2,5 \inter --  (B2)--+(B2) --(B3) --\inter 2,3 \inter;
    
           \end{scope}

       \draw[blue,very thick] (90+2:1) coordinate(A1)  node [above]{\color{black}1} -- (-90:1) coordinate(B1)  node [below]{\color{black}1}  ; 
    
     \draw[olive,very thick] (45+2:1) coordinate(A2)  node [above]{\color{black}2} -- (-115-20:1) coordinate(B2)  node [below]{\color{black}2} ; 
     
    \draw[red,very thick] (35+2:1) coordinate(A3)  node [right]{\color{black}4} -- (-180-10:1)  coordinate(B3)  node [left]{\color{black}4} ; 
          
      \draw[cyan,very thick] (0+2:1) coordinate(A4)  node [right]{\color{black}5} -- (-180-30:1) coordinate(B4)  node [left]{\color{black}5} ; 
                   
    \draw[violet,very thick] (-65:1) coordinate(A5)  node [below right=-2pt]{\color{black}3} -- (-180-45:1) coordinate(B5)  node [ above left=-2pt]{\color{black}3} ;

    \draw[black,very thick] (110-3:1) coordinate(A6)   node [above]{\color{black}6} -- (-70:1) coordinate(B6)  node [below]{\color{black}6}  ;

    \node at (-90:1.3) {type 0};

\end{scope}

\begin{scope}[xshift=4cm, yshift=4cm,scale=3]
  \draw[ dashed](0,0) circle (1);
  
    \draw[blue,very thick] (90+2:1)  coordinate(A1)  node [above]{\color{black}1} -- (-90:1)  coordinate(B1)  node [below]{\color{black}1}  ; 
    
     \draw[olive,very thick] (45+2:1)  coordinate(A2)  node [above]{\color{black}2} -- (-115-20:1)  coordinate(B2)  node [below]{\color{black}2} ; 
     
    \draw[red,very thick] (35+2:1)  coordinate(A3)  node [right]{\color{black}4} -- (-180-10:1)  coordinate(B3)  node [left]{\color{black}4} ; 
          
      \draw[cyan,very thick] (0+2:1)  coordinate(A4)  node [right]{\color{black}5} -- (-180-30:1)  coordinate(B4)  node [left]{\color{black}5} ; 
                   
    \draw[violet,very thick] (-65:1)  coordinate(A5)  node [below right=-2pt]{\color{black}3} -- (-180-45:1)  coordinate(B5)  node [ above left=-2pt]{\color{black}3} ;

    \draw[black,very thick] (110+1.5:1)  coordinate(A6)  node [above]{\color{black}6} -- (-75:1)  coordinate(B6)  node [below]{\color{black}6}  ;

    \begin{scope}
    
       
       \clip (0,0) circle (1cm);
    
    \fill [fill=gray!70, very thick]  \tri 1,2,6 \tri ;
    
        \fill [fill=gray!70, very thick]  \tri 1,2,6 \tri ;
        
          \fill [fill=gray!70, very thick]  \tri 1,6,5 \tri ;
          
                   \fill [fill=gray!70, very thick]  \tri 3,4,5 \tri ;
                   
                         \fill [fill=gray!70, very thick]  \tri 3,4,6 \tri ;
                         
                           \fill [fill=gray!70, very thick]  \tri 1,2,4 \tri ;

     \fill [fill=gray!70, very thick] \inter 2,3 \inter -- (A2)--+(A2) --(A3) --\inter 2,3 \inter;

          \fill [fill=gray!70, very thick] \inter 3,5 \inter --\inter 2,5 \inter --  (B2)--+(B2) --(B3) --\inter 2,3 \inter;

                 \fill [fill=gray!70, very thick] \inter 3,5 \inter --\inter 2,5 \inter --  (B2)--+(B2) --(B3) --\inter 2,3 \inter;
    
           \end{scope}

    \draw[blue,very thick] (90+2:1)  coordinate(A1)  node [above]{\color{black}1} -- (-90:1)  coordinate(B1)  node [below]{\color{black}1}  ; 
    
     \draw[olive,very thick] (45+2:1)  coordinate(A2)  node [above]{\color{black}2} -- (-115-20:1)  coordinate(B2)  node [below]{\color{black}2} ; 
     
    \draw[red,very thick] (35+2:1)  coordinate(A3)  node [right]{\color{black}4} -- (-180-10:1)  coordinate(B3)  node [left]{\color{black}4} ; 
          
      \draw[cyan,very thick] (0+2:1)  coordinate(A4)  node [right]{\color{black}5} -- (-180-30:1)  coordinate(B4)  node [left]{\color{black}5} ; 
                   
    \draw[violet,very thick] (-65:1)  coordinate(A5)  node [below right=-2pt]{\color{black}3} -- (-180-45:1)  coordinate(B5)  node [ above left=-2pt]{\color{black}3} ;

    \draw[black,very thick] (110+1.5:1)  coordinate(A6)  node [above]{\color{black}6} -- (-75:1)  coordinate(B6)  node [below]{\color{black}6}  ;

        \node at (-90:1.3) {type I};

\end{scope}

\begin{scope}[xshift=-4cm,yshift=-4.3cm, scale=3]

  \draw[ dashed](0,0) circle (1);
    \draw[blue,very thick] (90+2:1) coordinate(A1) node [above]{\color{black}1} -- (-90:1) coordinate(B1) node [below]{\color{black}1}  ; 
    
     \draw[olive,very thick] (45+2:1) coordinate(A2) node [above]{\color{black}2} -- (-115-20:1) coordinate(B2) node [below]{\color{black}2} ; 
     
    \draw[red,very thick] (35+2:1) coordinate(A3) node [right]{\color{black}4} -- (-180-10:1) coordinate(B3) node [left]{\color{black}4} ; 
          
      \draw[cyan,very thick] (0+2:1) coordinate(A4) node [right]{\color{black}5} -- (-180-30+2:1) coordinate(B4) node [left]{\color{black}5} ; 
                   
    \draw[violet,very thick] (-65:1) coordinate(A5) node [below right=-2pt]{\color{black}3} -- (-180-45:1) coordinate(B5) node [ above left=-2pt]{\color{black}3} ;

    \draw[black,very thick] (-2:1) coordinate(A6) node [below right=-2pt]{\color{black}6} -- (-180-40:1) coordinate(B6) node [left]{\color{black}6}  ;

       \begin{scope}
    
       
       \clip (0,0) circle (1cm);
    
    \fill [fill=gray!70, very thick]  \tri 1,5,2 \tri ;
        \fill [fill=gray!70, very thick]  \tri 1,2,4 \tri ;
                \fill [fill=gray!70, very thick]  \tri 2,4,6 \tri ;
                                \fill [fill=gray!70, very thick]  \tri 1,3,6 \tri ;
                                
                                    \fill [fill=gray!70, very thick]  \tri 3,4,5 \tri ;

             \fill [fill=gray!70, very thick] \inter 4,6 \inter -- (A4)--+(A4) --(A6) --\inter 4,6 \inter;

        \fill [fill=gray!70, very thick] \inter 5,6 \inter --\inter 4,5 \inter --  (B4)--+(B4) --(B6) --\inter 5,6 \inter;

     \fill [fill=gray!70, very thick] \inter 2,3 \inter -- (A2)--+(A2) --(A3) --\inter 2,3 \inter;

          \fill [fill=gray!70, very thick] \inter 3,5 \inter --\inter 2,5 \inter --  (B2)--+(B2) --(B3) --\inter 2,3 \inter;

           \end{scope}

     \draw[ dashed](0,0) circle (1);
    \draw[blue,very thick] (90+2:1) coordinate(A1) node [above]{\color{black}1} -- (-90:1) coordinate(B1) node [below]{\color{black}1}  ; 
    
     \draw[olive,very thick] (45+2:1) coordinate(A2) node [above]{\color{black}2} -- (-115-20:1) coordinate(B2) node [below]{\color{black}2} ; 
     
    \draw[red,very thick] (35+2:1) coordinate(A3) node [right]{\color{black}4} -- (-180-10:1) coordinate(B3) node [left]{\color{black}4} ; 
          
      \draw[cyan,very thick] (0+2:1) coordinate(A4) node [right]{\color{black}5} -- (-180-30+2:1) coordinate(B4) node [left]{\color{black}5} ; 
                   
    \draw[violet,very thick] (-65:1) coordinate(A5) node [below right=-2pt]{\color{black}3} -- (-180-45:1) coordinate(B5) node [ above left=-2pt]{\color{black}3} ;

    \draw[black,very thick] (-2:1) coordinate(A6) node [below right=-2pt]{\color{black}6} -- (-180-40:1) coordinate(B6) node [left]{\color{black}6}  ;

      \node at (-90:1.3) {type II};
\end{scope}

\begin{scope}[xshift=4cm,yshift=-4.3cm, scale=3]

  \draw[ dashed](0,0) circle (1);

    \draw[blue,very thick] (90+2:1) coordinate(A1) node [above]{\color{black}1} -- (-90:1) coordinate(B1) node [below]{\color{black}1}  ; 
    
     \draw[olive,very thick] (45+2:1) coordinate(A2) node [above]{\color{black}2} -- (-115-30:1) coordinate(B2) node [below]{\color{black}2} ; 
     
    \draw[red,very thick] (35+2:1) coordinate(A3) node [right]{\color{black}4} -- (-180-10:1) coordinate(B3) node [left]{\color{black}4} ; 
          
      \draw[cyan,very thick] (0-10:1) coordinate(A4) node [right]{\color{black}5} -- (-180-30-2:1) coordinate(B4) node [left]{\color{black}5} ; 
                   
    \draw[violet,very thick] (-65:1) coordinate(A5) node [below right=-2pt]{\color{black}3} -- (-180-45:1) coordinate(B5) node [ above left=-2pt]{\color{black}3} ;

    \draw[black,very thick] (-65-3:1) coordinate(A6) node [below=0pt]{\color{black}6} -- (90+5:1) coordinate(B6) node [above left]{\color{black}6}  ;

    \begin{scope}
    
       
       \clip (0,0) circle (1cm);
    
   \fill [fill=gray!70, very thick]  \tri 3,4,5 \tri ;

             \fill [fill=gray!70, very thick]  \tri 1,3,6 \tri ;    
             
               \fill [fill=gray!70, very thick]  \tri 1,2,4 \tri ;

                \fill [fill=gray!70, very thick]  \tri 1,3,4 \tri ;    
                
                   \fill [fill=gray!70, very thick]  \tri 2,4,6 \tri ;    
                   
                                      \fill [fill=gray!70, very thick]  \tri 2,3,6 \tri ;   
                                       
                                        \fill [fill=gray!70, very thick]  \tri 1,2,5 \tri ;

                    \fill [fill=gray!70, very thick] \inter 1,6 \inter -- (B6)--+(B6) --(A1) --cycle;

        \fill [fill=gray!70, very thick] \inter 1,5 \inter --\inter 5,6 \inter --  (A6)--+(A6) --(B1) --cycle;

             \fill [fill=gray!70, very thick] \inter 4,5 \inter -- (B4)--+(B4) --(B5) --cycle;

        \fill [fill=gray!70, very thick] \inter 4,6 \inter --\inter 5,6 \inter --  (A5)--+(A5) --(A4) --cycle;

                     \fill [fill=gray!70, very thick] \inter 2,3 \inter -- (A2)--+(A2) --(A3) --cycle;

        \fill [fill=gray!70, very thick] \inter 3,5 \inter --\inter 2,5 \inter --  (B2)--+(B2) --(B3) --cycle;

           \end{scope}

\draw[blue,very thick] (90+2:1) coordinate(A1) node [above]{\color{black}1} -- (-90:1) coordinate(B1) node [below]{\color{black}1}  ; 
    
     \draw[olive,very thick] (45+2:1) coordinate(A2) node [above]{\color{black}2} -- (-115-30:1) coordinate(B2) node [below]{\color{black}2} ; 
     
    \draw[red,very thick] (35+2:1) coordinate(A3) node [right]{\color{black}4} -- (-180-10:1) coordinate(B3) node [left]{\color{black}4} ; 
          
      \draw[cyan,very thick] (0-10:1) coordinate(A4) node [right]{\color{black}5} -- (-180-30-2:1) coordinate(B4) node [left]{\color{black}5} ; 
                   
    \draw[violet,very thick] (-65:1) coordinate(A5) node [below right=-2pt]{\color{black}3} -- (-180-45:1) coordinate(B5) node [ above left=-2pt]{\color{black}3} ;

    \draw[black,very thick] (-65-3:1) coordinate(A6) node [below=0pt]{\color{black}6} -- (90+5:1) coordinate(B6) node [above left]{\color{black}6}  ;

    \node at (-90:1.3) {type III};

\end{scope}

\end{tikzpicture}

 \caption{Examples of GCOs corresponding to the four types of $(3,6)$ GCOs. All triangles in each arrangement of lines have been shaded. There are 6,6,7, and 10 triangles respectively. 
 To recognize the triangle bounded by lines $L_2$, $L_3$, $L_4$ in the first graph requires using that points on the boundary of the disk marked with the same label are
identified.
 \label{triangles36} }
\end{figure}

\subsubsection{$(3,6)$ Integrands}

We give a full explanation on 
 how to construct the $(3,6)$ integrands.  There are four types of GCOs, with representatives,
\begin{align}
\label{type036GCO}
& \Sigma_{0}   =((2 3 4 5 6),(1 3 4 5 6),(1 2 4 5 6),(1 2 3 5 6),(1 2 3 4 6),(1 2 3 4 5)),  \\
\label{typeI36GCO}
&\Sigma_{I}   = ((2 5 4 3 6),(1 5 4 3 6),(1 2 4 5 6),(1 2 3 5 6),(1 2 3 4 6),(1 2 5 4 3)),\\
\label{typeII36GCO}
&\Sigma_{II}   =((2  3465 ),(1 3465 ),(1 2 4 5 6),(1 2 3 5 6),(1 2 6 3 4),(1 2 5 3 4)), \\
\label{typeIII36GCO}
& \Sigma_{III}   =((2  3645 ),(1 3465 ),(1 2 4 5 6),(1 5 3 2 6),(1 2 6 3 4),(1 3524 )).
\end{align}
Using the combinatorial technique explained in \cref{sec3d1} (or in \cref{appa} for general polygons), one finds that they have 6,6,7 and 10 triangles respectively,
\begin{align}
&\{\{1,2,3\},\{1,2,6\},\{1,5,6\},\{2,3,4\},\{3,4,5\},\{4,5,6\}
\}\,,
\\
&
\{
\{1,2,5\},\{1,2,6\},\{1,3,6\},\{2,3,4\},\{3,4,5\},\{4,5,6\}
\}\,,
\\
&
\{
\{1,2,3\},\{1,2,5\},\{1,4,6\},\{2,3,4\},\{2,5,6\},\{3,4,5\},\{3,5,6\}
\}\,,
\\
&
\{
\{1,2,3\},\{1,2,5\},\{1,3,6\},\{1,4,5\},\{1,4,6\},\{2,3,4\},\{2,4,6\},\{2,5,6\},\{3,4,5\},\{3,5,6\}\}\,.
\end{align}
 All triangles in the arrangements of lines are shown explicitly in \cref{triangles36}. Even though the figures are intuitive, 
the combinatorial way of finding all triangles is very efficient even for large values of $n$. 

Obviously,
type 0 and type I integrands are of basic type 
with $p = 0$. So they have trivial numerators,
\begin{align}
\label{36integrand0}
& {\cal I}(\Sigma_0) = \frac{1}{\Delta_{123}\Delta_{234}\Delta_{345}\Delta_{456}\Delta_{561}\Delta_{612}}\,,
\\ 
\label{36integrandI}
&{\cal I}(\Sigma_I) = \frac{1}{\Delta_{125} \Delta_{234}\Delta_{345}\Delta_{456}\Delta_{136} \Delta_{612}}\,.
\end{align}     
Type II is also of the basic type with $p=1$. Hence a single minor is needed in the numerator.  We see
 lines $L_2,L_4$ and $L_5$ participate in four triangles while $L_1,L_3$ and $L_6$ in three, which means  the minor is  $\Delta_{235}$,
\be     
\label{36integrandII}
{\cal I}(\Sigma_{II}) = \frac{\Delta _{235}}{\Delta _{123} \Delta _{125} \Delta _{146} \Delta _{234} \Delta _{256} \Delta _{345} \Delta _{356}} .
\ee  

The last of the $(3,6)$ GCOs is type III. This GCO has ten triangles. This means that the numerator of ${\cal I}(\Sigma_{III})$ has to be a degree four polynomial in the minors. Moreover, it easy to see that each line participates in five triangles and therefore the numerator must be homogeneous of degree two in each label. 
The most general form of the numerator is 
\begin{align}
\label{p4ansatz}
    P_4(\Delta)=&
x_1\Delta_{123}\Delta_{345}\Delta_{561}\Delta_{246}+x_2\Delta_{234}\Delta_{456}\Delta_{612}\Delta_{351}+x_3 \Delta _{156}^2 \Delta _{234}^2+x_4 \Delta _{145} \Delta _{156} \Delta _{234} \Delta _{236}
  \nonumber
  \\
  &+x_5 \Delta _{145}^2 \Delta _{236}^2+x_6 \Delta _{126} \Delta _{156} \Delta _{234} \Delta _{345}+x_7 \Delta _{126} \Delta _{145} \Delta _{236} \Delta _{345}
 +x_8 \Delta _{126}^2 \Delta _{345}^2
   \nonumber
   \\&+x_9 \Delta _{124} \Delta _{156} \Delta _{234} \Delta _{356}+x_{10} \Delta _{124} \Delta _{126} \Delta _{345} \Delta _{356}+x_{11} \Delta _{124}^2 \Delta _{356}^2
     \nonumber
     \\& +x_{12} \Delta _{123} \Delta _{156} \Delta _{234} \Delta _{456}
     +x_{13} \Delta _{123} \Delta _{145} \Delta _{236} \Delta _{456}+x_{14} \Delta _{123} \Delta _{126} \Delta _{345} \Delta _{456}\nonumber \\ & +x_{15} \Delta _{123} \Delta _{124} \Delta _{356} \Delta _{456}+x_{16} \Delta _{123}^2 \Delta _{456}^2\,.
\end{align}
Other monomials with correct torus weight such as $\Delta _{124} \Delta _{156} \Delta _{236} \Delta _{356}$ can be expanded in terms of those appearing in \eqref{p4ansatz}.

The ansatz for the integrand thus becomes
\begin{equation}
\label{p4ansatzinte}
    {\cal I}(\Sigma_{III}) =\frac{P_4(\Delta)}{\Delta_{145}\Delta_{136}\Delta_{234}\Delta_{256}\Delta_{125}\Delta_{356}\Delta_{345}\Delta_{146}\Delta_{246}\Delta_{123}} .
\end{equation}
We now study its residues. Consider first the pole at $\Delta_{345}=0$. Performing the corresponding triangle flip it is easy to identify the new GCO as of type II with integrand,
\be 
\label{p4neighbour}
\frac{\Delta_{126}}{\Delta_{125} \Delta_{123} \Delta_{146} \Delta_{136}
  \Delta_{246} \Delta_{256} \Delta_{345}}.
\ee  
Factoring out $1/\Delta_{345}$ in both expressions and requiring that they match gives 
\be\label{poly345}   
P_4(\Delta)|_{\Delta_{345}=0} = \Delta_{1 AB} \Delta_{2 AB}  \Delta_{AB 6} \Delta_{1 2 6} .
\ee
where $AB$ is the line that contains points $3,4,5$ in the configuration 
 of  points in ${\mathbb {RP}}^2$.  This implies that $P_4(\Delta)$ must vanish if $\Delta_{345}=\Delta_{1 2 6} =0$. Imposing this condition to the ansatz \eqref{p4ansatz} fixes 9 of the 16 parameters,
\begin{align}
    \{x_3\to 0,\,x_4\to 0,\,x_5\to 0,\,x_9\to 0,\,x_{11}\to 0,\,x_{12}\to 0,\,x_{13}\to 0,x_{15}\to 0,\,x_{16}\to 0\}\,.
\end{align}

Repeating the computation but with the pole at $\Delta_{256}=0$ gives
\be\label{poly246}   
P_4(\Delta)|_{\Delta_{256}=0} = - \Delta_{1 AB} \Delta_{5 AB}  \Delta_{AB 3} \Delta_{135}\,,
\ee
where $AB$ is the line  that contains points $2,4,6$. This implies that $P_4(\Delta)$ must vanish if $\Delta_{256}=\Delta_{135} =0$.  This time, the condition fixes three parameters,
\begin{align}
    \{x_8\to x_7,\,\,\,x_{10}\to x_6,\,\,\,x_{14}\to x_1-x_6\}\, .
\end{align}

Repeating the computation for all the remaining 8 poles in \eqref{p4ansatzinte}, we fix all 16 parameters up to an overall rescaling,
\begin{align}
     P_4(\Delta)\to P'_4(\Delta)=&
x_1\Delta_{123}\Delta_{345}\Delta_{561}\Delta_{246}-x_1\Delta_{234}\Delta_{456}\Delta_{612}\Delta_{351}\,.
\end{align}
We see how powerful the constraints \eqref{firstutilization} are and $P'_4(\Delta)$ can be thought of as a simplified version of the ansatz.  At this point, it is much easier to match (up to a sign) the residue of \eqref{p4ansatzinte} with that of its neighbours such as \eqref{p4neighbour} using \eqref{secnodutilization}. Doing so, we finally get $x_1\to1$.

This gives
\begin{equation}
\label{inteIII}
    {\cal I}(\Sigma_{III}) =\frac{\Delta_{123}\Delta_{345}\Delta_{561}\Delta_{246}-\Delta_{234}\Delta_{456}\Delta_{612}\Delta_{351} }{\Delta _{123} \Delta _{125} \Delta _{136} \Delta _{145} \Delta _{146} \Delta _{234} \Delta _{246} \Delta _{256} \Delta _{345} \Delta _{356}} .
\end{equation}

For this particular example,
there is a more beautiful explanation for the numerator of \eqref{inteIII}.
Note that the structure of \eqref{poly345} and \eqref{poly246} is the same. In fact, the same is true for all ten poles and we conclude that $P_4(\Delta)$ is a polynomial that must vanish whenever the six points are on any of the $\binom{6}{3}/2=10$ configurations of two lines with three points in each. All these configurations are degenerate conics and it is well known that the Veronese polynomial vanishes on all of them,  
\be    
V:= \Delta_{123}\Delta_{345}\Delta_{561}\Delta_{246}-\Delta_{234}\Delta_{456}\Delta_{612}\Delta_{351}.
\ee    
It is surprising at first that $V$ cannot do what is needed since it manifestly vanishes at only four of the ten configurations, e.g. when $\Delta_{123}=\Delta_{456}=0$. However, one can check that under any permutation of labels, $V$ either stays the same or becomes $-V$. 

Let us mention that these four functions give rise to canonical forms in the sense of \cite{Arkani-Hamed:2020cig,Arkani-Hamed:2017tmz} for the regions closely related to the chambers of $X(3,6)$. In fact, they were independently obtained in \cite{unpublishedYong} using ``triangulations'' to build canonical forms. More  relevant discussions are put in \cref{gcotochirotope}.

\subsubsection{ $(3,7)$, $(4,7)$ and $(3,8)$ Integrands}

In \cref{sec3d2} we explained how to use Algorithm I to construct all $(3,6)$ integrands. The same procedure can be directly applied to construct $(3,7)$,$(4,7)$, and $(3,8)$ integrands.

There are eleven types of $(3,7)$ integrands, with seven of them of basic type: $4$ with $p=0$ and $3$ with $p=1$. Here we list a representative for each of the eleven types:
\allowdisplaybreaks[1]
\begin{align}
\label{37integrand0}
& {\cal I}^{(3,7)}(\Sigma_0 ) = \frac{1}{\Delta _{123} \Delta _{127} \Delta _{167} \Delta _{234} \Delta _{345} \Delta _{456} \Delta_{567}},
\\ 
& {\cal I}^{(3,7)}(\Sigma_I ) = \frac{1}{\Delta _{123} \Delta _{127} \Delta _{156} \Delta _{234} \Delta _{345} \Delta _{467} \Delta _{567}},
\\
&
{\cal I}^{(3,7)}(\Sigma_{II} ) = \frac{\Delta _{347}}{\Delta _{123} \Delta _{127} \Delta _{156} \Delta _{234} \Delta _{345} \Delta _{367} \Delta _{457} \Delta _{467}},
\\
& {\cal I}^{(3,7)}(\Sigma_{III} ) =\frac{1}{\Delta _{123} \Delta _{127} \Delta _{156} \Delta _{234} \Delta _{367} \Delta _{456} \Delta _{457}},
\\
&
{\cal I}^{(3,7)}(\Sigma_{IV} ) =\frac{\Delta _{237} \Delta _{567}}{\Delta _{123} \Delta _{127} \Delta _{156} \Delta _{234} \Delta _{267} \Delta _{357} \Delta _{367} \Delta _{456} \Delta _{457}},
\qquad\qquad
\\
&
{\cal I}^{(3,7)}(\Sigma_{V} ) =\frac{\Delta _{567}}{\Delta _{123} \Delta _{156} \Delta _{167} \Delta _{234} \Delta _{267} \Delta _{357} \Delta _{456} \Delta _{457}},
\\
&
{\cal I}^{(3,7)}(\Sigma_{VI} ) =\frac{\Delta _{127} \Delta _{145} \Delta _{236} \Delta _{567}-\Delta _{123} \Delta _{157} \Delta _{267} \Delta _{456}}{\Delta _{123} \Delta _{126} \Delta _{147} \Delta _{156} \Delta _{157} \Delta _{234} \Delta _{257} \Delta _{267} \Delta _{345} \Delta _{367} \Delta _{456}},
\\
&
\label{37sigma7}
{\cal I}^{(3,7)}(\Sigma_{VII} ) =
\frac{\Delta _{136} \Delta _{245}-\Delta _{126} \Delta _{345}}{\Delta _{123} \Delta _{126} \Delta _{147} \Delta _{156} \Delta _{234} \Delta _{257} \Delta _{345} \Delta _{367} \Delta _{456}}
\\
& {\cal I}^{(3,7)}(\Sigma_{VIII} ) =\frac{1}{\Delta _{123} \Delta _{145} \Delta _{167} \Delta _{234} \Delta _{267} \Delta _{357} \Delta _{456}},
\\
&
{\cal I}^{(3,7)}(\Sigma_{XI} ) =-\frac{\Delta _{237} \Delta _{345} \Delta _{567}}{\Delta _{123} \Delta _{145} \Delta _{167} \Delta _{234} \Delta _{257} \Delta _{267} \Delta _{347} \Delta _{356} \Delta _{357} \Delta _{456}},
\\
& {\cal I}^{(3,7)}(\Sigma_{X} ) =\frac{\Delta _{235}}{\Delta _{123} \Delta _{126} \Delta _{145} \Delta _{234} \Delta _{257} \Delta _{356} \Delta _{357} \Delta _{467}}\,,
\label{37integrand10}
\end{align}
\allowdisplaybreaks[0]
 where the 11 $(3,7)$ GCOs $\Sigma_{\bullet}$ of different types  are put in \cref{37gco}.

Similarly, we can get the eleven types of $(4,7)$ integrands using Algorithm I. As expected, they are dual to the $(3,7)$ integrands by sending $\Delta_{i,j,k}\to
{\rm sign}(i,j,k,o,p,q,r) \Delta_{o,p,q,r}$ where the sign is determined by an ordering. For example, we have 
\be 
{\cal I}^{(4,7)}(\Sigma'_{VII} ) =
\frac{
-\Delta _{2457} \Delta _{1367}+\Delta _{3457} \Delta _{1267}}{\Delta _{4567} \Delta _{3457} \Delta _{2356} \Delta _{2347} \Delta _{1567} \Delta _{1346} \Delta _{1267} \Delta _{1245} \Delta _{1237}}\,,
\ee
where the (4,7) GCO $\Sigma'_{VII}$ is dual to the (3,7) GCO $\Sigma_{VII}$ used in \eqref{37sigma7}. 

There are $135$ types of $(3,8)$ integrands, with $43$ of basic type: $11$ with $p=0$ and $32$ with $p=1$.

Here we present two examples, one with $p=2$ and one with $p=3$ and provide a complete list of integrands as well as their corresponding GCOs in the ancillary file,
\allowdisplaybreaks[1]
\begin{align}
&  {\cal I}^{(3,8)}(\Sigma_{43}) = \frac{\Delta _{238} \Delta _{678}}{\Delta _{123} \Delta _{128} \Delta _{156} \Delta _{234}
   \Delta _{278} \Delta _{368} \Delta _{378} \Delta _{458} \Delta _{467} \Delta _{567}},
\\   
&  {\cal I}^{(3,8)}(\Sigma_{75}) = \frac{\Delta _{238} \Delta _{568} \Delta _{678}}{\Delta _{123} \Delta _{156} \Delta _{178}
   \Delta _{234} \Delta _{268} \Delta _{278} \Delta _{358} \Delta _{368} \Delta _{458} \Delta
   _{467} \Delta _{567}}.  
\end{align}
\allowdisplaybreaks[0]
The subscripts in $\Sigma_{43}$ and $\Sigma_{73}$ are the locations in the list of $135$ pairs of integrands and GCOs in the ancillary file. \footnote{In the list of pairs of integrands and GCOs for $(3,8)$ in the ancillary file, the GCO in the first pair is of type 0 and denoted as $\Sigma_0$ and the remaining denoted as $\Sigma_1,\Sigma_2, \cdots$ consecutively. So are the lists for $(3,9)$ and $(4,8)$.} These (3,8) GCOs are also shown in Appendix B of \cite{Cachazo:2022pnx}.

We comment that starting at $(3,7)$, there are different GCOs whose corresponding arrangements of lines have the same set of triangles. For example, it is easy to check the following two (3,7) GCOs of type VII 
\begin{align}
\label{eq37two}
    ((236457),(134567),(124567),(162357),(162347),(145237),(123645))\,,
    \\
    ((234657),(134657),(124576),(123576),(126347),(125734),(126345))\,,
\end{align}
share the same set of nine triangles,
\begin{align}
    \{\{1,2,3\},\{1,2,7\},\{1,4,6\},\{1,5,7\},\{2,3,4\},\{2,5,6\},\{3,4,5\},\{3,6,7\},\{4,5,7\}\}\,,
\end{align}
which are shown in \cref{two37arrangements}.
In the same figure, we can clearly see they also share a hexagon, with edges $\{{1,2,3,4,5,7}\}$. However, they have different sets of quadrangles and pentagons. For example, the sets of pentagons are
\begin{align}
 \{\{1,3,2,4,6\},\{1,5,6,2,7\},\{3,5,4,7,6\}\}\,,
    \\
\{ \{1,2,7,6,3\},\{1,5,7,4,6\},\{2,4,3,5,6\} \}\, ,
\end{align}
respectively. 
This means that the two GCOs are indeed different.

In the appendix, we generalize the method introduced in \cite{Cachazo:2022pnx} for finding triangles to one for finding all polygons based on a simple combinatorial rule applied to the $k=2$ color orderings in the collection that make the $k=3$ GCO. Having an efficient algorithm for determining all polygons is very useful for identifying GCOs. 

Back to the two $(3,7)$ GCOs, the corresponding integrands of type VII share the same denominators but have different numerators,
\begin{align}
    &\frac{\Delta _{127} \Delta _{345}-\Delta _{125} \Delta _{347}}{\Delta _{123} \Delta _{127} \Delta _{146} \Delta _{157} \Delta _{234} \Delta _{256} \Delta _{345} \Delta _{367} \Delta _{457}}\,,
    \\
    &
    \frac{\Delta _{127} \Delta _{345}-\Delta _{137} \Delta _{245}}{\Delta _{123} \Delta _{127} \Delta _{146} \Delta _{157} \Delta _{234} \Delta _{256} \Delta _{345} \Delta _{367} \Delta _{457}}
    \,.
\end{align}

The two GCOs are related by $2\leftrightarrow 7,3\leftrightarrow 5$ as are their integrands.

We see that for $k>2$, integrands are not always uniquely characterized by their explicit sets of poles.

We have verified that
the CEGM integrals \eqref{cegm3n} with the integrands computed in this work for all $(3,6)$, $(3,7)$   $(4,7)$ and $(3,8)$ GCOs agree with 
the ``physical'' generalized partial amplitudes $m^{(3)}_n(\Sigma, \tilde\Sigma)$ computed as sums over GFDs.
This match is also a consistency check for the list of GFDs given in \eqref{cegm3n}. For example, recall that in $(3,7)$ GFDs can have quartic vertices, a feature not seen in $(3,6)$ GFDs. Indeed, these GFD with quartic vertices are needed to reproduce the CEGM result \eqref{cegm3n}.  We provide a Mathematica notebook as an ancillary file to evaluate the CEGM integrals for any two GCOs for $(3,6), (3,7), (4,7)$ and $ (3,8)$ where we have borrowed some julia codes in \cite{Sturmfels:2020mpv} to solve the scattering equations given in  \eqref{defMeasure}. \footnote{YZ would like to thank Simon Telen for his patient explanations for the julia codes.}

\def \inter #1,#2 \inter { (intersection of  A#1--B#1 and A#2--B#2) }

\def \tri #1,#2,#3 \tri {  \inter #1,#2 \inter   -- \inter #2,#3 \inter  --  \inter #3,#1 \inter  --\inter #1,#2 \inter }
  
  \begin{figure}
  \begin {tikzpicture}[scale=1.1]
 \draw[red, very thick] (2.29745,8.01573) coordinate(A3) node [above]{3}-- (5.69628,1.17731) coordinate (B3) node [below]{3};
 \draw[blue,very thick] (1.21191,5.3566) coordinate(A7) node [left]{7}-- (6.19674,1.88746) coordinate (B7) node [right]{7};
 \draw [very thick](0.753396,4.91808) coordinate(A4) node [left]{4}-- (6.41714,3.95712) coordinate (B4) node [right]{4};
 \draw [very thick](0.809918,3.01222) coordinate(A1) node [left]{1}-- (6.40458,4.2298) coordinate (B1) node [right]{1};
 \draw[red, very thick] (0.634861,1.60483) coordinate(A5) node [left]{5}-- (6.45257,7.3897) coordinate (B5) node [right]{5};
 \draw[blue, very thick] (2.02338,0.213689) coordinate(A2) node [below]{2}-- (6.3334,7.75733) coordinate (B2) node [right]{2};
 \draw [very thick] (4.6172,0.208444) coordinate(A6) node [below]{6}-- (6.20611,7.98551) coordinate (B6) node [above]{6};


\filldraw [fill=gray!70, very thick]  \tri 1,5,7 \tri ;

\filldraw [fill=gray!70, very thick]  \tri 4,5,7 \tri ;

\filldraw [fill=gray!70, very thick]  \tri 4,5,3 \tri ;

\filldraw [fill=gray!70, very thick]  \tri 2,3,4 \tri ;

\filldraw [fill=gray!70, very thick]  \tri 1,2,3 \tri ;

\filldraw [fill=gray!70, very thick]  \tri 1,2,7 \tri ;

\filldraw [fill=gray!70, very thick]  \tri 2,5,6 \tri ;

\filldraw [fill=gray!70, very thick]  \tri 1,4,6 \tri ;

\filldraw [fill=gray!70, very thick]  \tri 3,7,6 \tri ;

  \draw[red, very thick] (2.29745,8.01573) coordinate(A3) node [above]{3}-- (5.69628,1.17731) coordinate (B3) node [below]{3};
 \draw[blue,very thick] (1.21191,5.3566) coordinate(A7) node [left]{7}-- (6.19674,1.88746) coordinate (B7) node [right]{7};
 \draw [very thick](0.753396,4.91808) coordinate(A4) node [left]{4}-- (6.41714,3.95712) coordinate (B4) node [right]{4};
 \draw [very thick](0.809918,3.01222) coordinate(A1) node [left]{1}-- (6.40458,4.2298) coordinate (B1) node [right]{1};
 \draw[red, very thick] (0.634861,1.60483) coordinate(A5) node [left]{5}-- (6.45257,7.3897) coordinate (B5) node [right]{5};
 \draw[blue, very thick] (2.02338,0.213689) coordinate(A2) node [below]{2}-- (6.3334,7.75733) coordinate (B2) node [right]{2};
 \draw [very thick] (4.6172,0.208444) coordinate(A6) node [below]{6}-- (6.20611,7.98551) coordinate (B6) node [above]{6};
 
  \end {tikzpicture}
  ~
  ~
 \begin {tikzpicture}[scale=1.1]
 \draw[red,very thick] (2.29745,8.01573) coordinate(A5) node [above]{5}-- (5.69628,1.17731) coordinate (B5) node [below]{5};
 \draw[blue,very thick] (1.21191,5.3566) coordinate(A2) node [left]{2}-- (6.19674,1.88746) coordinate (B2) node [right]{2};
 \draw [very thick](0.753396,4.91808) coordinate(A4) node [left]{4}-- (6.41714,3.95712) coordinate (B4) node [right]{4};
 \draw [very thick](0.809918,3.01222) coordinate(A1) node [left]{1}-- (6.40458,4.2298) coordinate (B1) node [right]{1};
 \draw[red, very thick] (0.634861,1.60483) coordinate(A3) node [left]{3}-- (6.45257,7.3897) coordinate (B3) node [right]{3};
 \draw[blue, very thick] (2.02338,0.213689) coordinate(A7) node [below]{7}-- (6.3334,7.75733) coordinate (B7) node [right]{7};
 \draw [very thick] (4.6172,0.208444) coordinate(A6) node [below]{6}-- (6.20611,7.98551) coordinate (B6) node [above]{6};


\filldraw [fill=gray!70, very thick]  \tri 1,3,2 \tri ;

\filldraw [fill=gray!70, very thick]  \tri 4,3,2 \tri ;

\filldraw [fill=gray!70, very thick]  \tri 4,5,3 \tri ;

\filldraw [fill=gray!70, very thick]  \tri 7,5,4 \tri ;

\filldraw [fill=gray!70, very thick]  \tri 1,7,5 \tri ;

\filldraw [fill=gray!70, very thick]  \tri 1,2,7 \tri ;

\filldraw [fill=gray!70, very thick]  \tri 7,3,6 \tri ;

\filldraw [fill=gray!70, very thick]  \tri 1,4,6 \tri ;

\filldraw [fill=gray!70, very thick]  \tri 5,2,6 \tri ;

 \draw[red,very thick] (2.29745,8.01573) coordinate(A5) node [above]{5}-- (5.69628,1.17731) coordinate (B5) node [below]{5};
 \draw[blue,very thick] (1.21191,5.3566) coordinate(A2) node [left]{2}-- (6.19674,1.88746) coordinate (B2) node [right]{2};
 \draw [very thick](0.753396,4.91808) coordinate(A4) node [left]{4}-- (6.41714,3.95712) coordinate (B4) node [right]{4};
 \draw [very thick](0.809918,3.01222) coordinate(A1) node [left]{1}-- (6.40458,4.2298) coordinate (B1) node [right]{1};
 \draw[red, very thick] (0.634861,1.60483) coordinate(A3) node [left]{3}-- (6.45257,7.3897) coordinate (B3) node [right]{3};
 \draw[blue, very thick] (2.02338,0.213689) coordinate(A7) node [below]{7}-- (6.3334,7.75733) coordinate (B7) node [right]{7};
 \draw [very thick] (4.6172,0.208444) coordinate(A6) node [below]{6}-- (6.20611,7.98551) coordinate (B6) node [above]{6};
 
  \end {tikzpicture}

  \caption{Arrangements of lines for two different $(3,7)$ GCOs related by the relabeling $2\leftrightarrow 7$ and  $3\leftrightarrow 5$ but with the same set of triangles.  It is interesting to note that if line six is removed, then the two arrangements coincide.
  \label{two37arrangements}}
\end{figure}
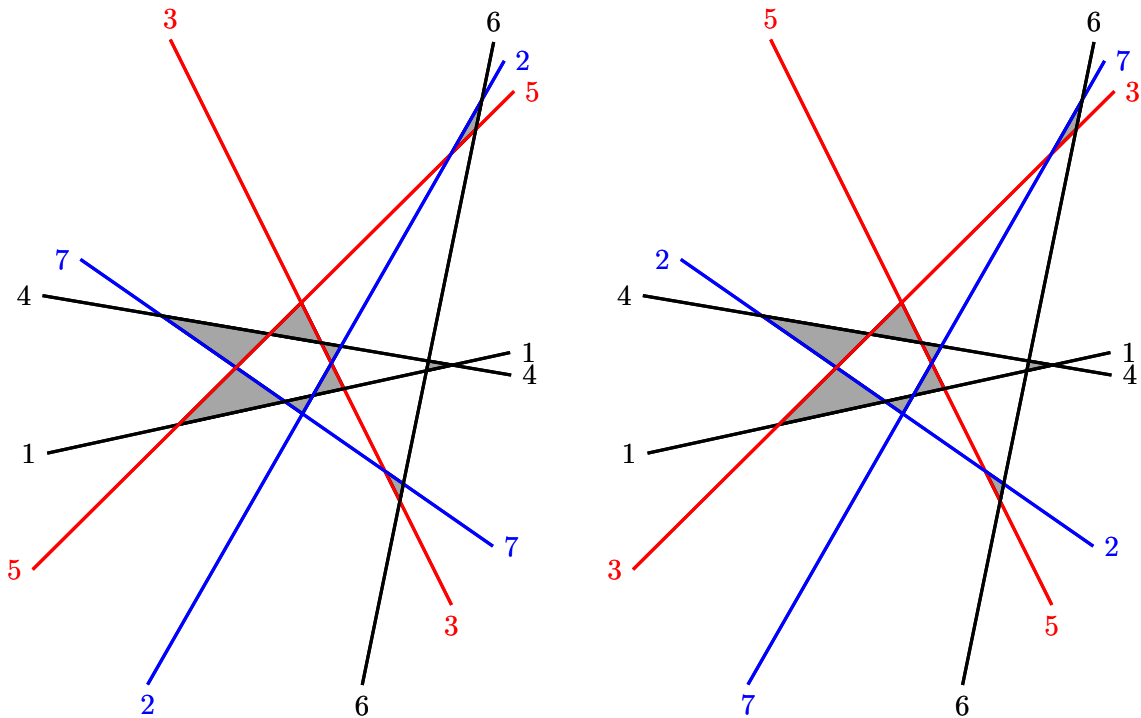

\subsubsection{$(3,9)$ Integrands\label{39integrand}}
In $(3,9)$, there are 4381 types of GCOs and one type of non-realizable pseudo-GCOs.  We focus on the triangle flips between GCOs first, match all residues using \eqref{secnodutilization} when applying Algorithm I, and deal with the subtleties relevant to the non-realizable pseudo-GCOs later. 

This way, though cumbersome,  we get $3582$ types of $(3,9)$ integrands without ambiguities and $798$ other types in terms of four free parameters. The GCO for the last missing type of integrands has 21 triangles, and we provide a more clever way to determine its integrand later in \cref{irre39sec}  by making use of irreducible decoupling identities \footnote{This integrand turns out to contain all of the four free parameters.}.
Among the $4381$ types of $(3,9)$ integrands, $369$ are of basic type split as $48$ with $p=0$ and $321$ with $p=1$. Here we present some examples with $p=2,3,4$,
\allowdisplaybreaks[1]
\begin{align}
&  {\cal I}^{(3,9)}(\Sigma_{369}) = \frac{\Delta _{2 3 9} \Delta _{3 5 9}}{\Delta _{1 2 3} \Delta _{1 2 9} \Delta _{1 4 5} \Delta _{2 3 4} \Delta _{2 7 8} \Delta _{3 5 6} \Delta _{3 7 9} \Delta
   _{3 8 9} \Delta _{4 6 7} \Delta _{5 6 9} \Delta _{5 8 9}},
\\   
&  {\cal I}^{(3,9)}(\Sigma_{1067}) = \frac{\Delta _{1 6 8} \Delta _{3 4 9}-\Delta _{1 3 4} \Delta _{6 8 9}}{\Delta _{1 2 3} \Delta _{1 2 9} \Delta _{1 6 7} \Delta _{1 8 9} \Delta _{2 3 4} \Delta
   _{3 4 5} \Delta _{3 7 8} \Delta _{4 5 6} \Delta _{4 7 9} \Delta _{5 6 8} \Delta _{6 8 9}},
\\
& {\cal I}^{(3,9)}(\Sigma_{1068}) = \frac{\Delta _{2 3 5} \Delta _{2 3 9} \Delta _{3 8 9}}{\Delta _{1 2 3} \Delta _{1 4 5} \Delta _{1 8 9} \Delta _{2 3 4} \Delta _{2 5 6} \Delta _{2 7 9} \Delta
   _{2 8 9} \Delta _{3 5 8} \Delta _{3 5 9} \Delta _{3 6 9} \Delta _{3 7 8} \Delta _{4 6 7}},
\\
& {\cal I}^{(3,9)}(\Sigma_{1890}) =  \frac{\Delta _{2 3 9} \Delta _{3 4 8} \Delta _{3 8 9} \Delta _{7 8 9}}{\Delta _{1 2 3} \Delta _{1 5 6} \Delta _{1 8 9} \Delta _{2 3 4} \Delta _{2 7 9} \Delta
   _{2 8 9} \Delta _{3 4 9} \Delta _{3 5 9} \Delta _{3 6 8} \Delta _{3 7 8} \Delta _{4 5 8} \Delta _{4 7 8} \Delta _{6 7 9}}.
\end{align}
\allowdisplaybreaks[0]
The following is an example of an integrand with one of the four undetermined coefficients, denoted $y$,
\be  
{\cal I}^{(3,9)}(\Sigma_{513}) = \frac{\Delta _{1 5 7} \Delta _{2 3 5}+y \  \Delta _{1 2 5} \Delta _{3 5 7}}{\Delta _{1 2 3} \Delta _{1 2 9} \Delta _{1 4 5} \Delta _{1 6 7} \Delta _{2 5 6} \Delta
   _{2 7 8} \Delta _{3 4 6} \Delta _{3 5 7} \Delta _{3 5 8} \Delta _{4 8 9} \Delta _{5 7 9}}.
\ee  

It is remarkable that the system of $4381$ types of integrands (for each type one must add integrands obtained by permutations of labels) can be determined up to only four coefficients. Once again, the list of representatives for all $4381$ types is provided in the ancillary file.  

Here we make some comments on non-realizable pseudo-GCOs defined below Definition \ref{generalpseudoGCO}.
In $(3,9)$, there is exactly one type of non-realizable pseudo-GCOs. A representative is given by 
\begin{align}\label{pseudoGCO39}
\big(
(24738695),(18536749),(14857296),& (17385629),(18273469),(15472938),
\nonumber\\
&(14853629),(14372596),(15247638)
\big)\,.
\end{align}
Combinatorially, a (3,9) GCO can be connected to a non-realizable pseudo-GCO via a triangle flip, but this cannot be realized geometrically. See more details in \cref{secnonreal39}. The reader might wonder what happens to the algorithm once a triangle flip of a target GCO leads to a pseudo-GCO.  We do not have a definite answer to this question and in our implementation, we have chosen to treat pseudo-GCOs as unknown. If we were to impose that the residue of an integrand vanishes when its triangle flip leads to a  non-realizable pseudo-GCO we would have been able to fix one of the four coefficients.
Since we expect that a direct computation of partial amplitudes using generalized Feynman diagrams will fix the value of all four coefficients,  we leave its value undetermined for now.

Let us comment on  another yet new phenomenon 
starting at $(3,9)$.
In $(3,7)$ we found two distinct GCOs with the same triangles. Those two were not related by a triangle flip \eqref{eq37two}. In $(3,9)$, there are two GCOs that are connected by a triangle flip and share the same set of triangles! We put some examples in \cref{39newphe}.

\subsubsection{$(4,8)$ Integrands \label{48integrands}}

In 
$(4,8)$, there are 2604 types of GCOs and 24 types of non-realizable pseudo-GCOs. 
Among the $2604$ types of $(4,8)$ integrands,  $279$ are of basic type split as $41$ with $p=0$ and $238$ with $p=1$.
Based on them, we further fix 1711 integrands completely and find expressions for 479 other integrands in terms of 24 parameters. This leaves $135$ integrands which we have yet not computed. 

Here we present some examples with $p=2,3,4$,
\allowdisplaybreaks[1]
\begin{align} 
{\cal I}^{(4,8)}(\Sigma_{279}) =&\frac{\Delta _{2678}^2}{\Delta _{1267} \Delta
   _{1278} \Delta _{1348} \Delta _{1456} \Delta _{2368}
   \Delta _{2378} \Delta _{2458} \Delta _{2567} \Delta
   _{3467} \Delta _{5678}},
   \\
 {\cal I}^{(4,8)}(\Sigma_{769})= & \frac{\Delta _{2348}^2 \Delta
   _{3678}}{\Delta _{1234} \Delta _{1238} \Delta _{1267}
   \Delta _{1458} \Delta _{2345} \Delta _{2378} \Delta
   _{2468} \Delta _{3468} \Delta _{3478} \Delta _{3567}
   \Delta _{5678}},
   \\
 {\cal I}^{(4,8)}(\Sigma_{1252}) =& \frac{\Delta _{2347} \Delta _{2348}
   \Delta _{2678} \Delta _{4678}}{\Delta _{1234} \Delta
   _{1237} \Delta _{1458} \Delta _{1678} \Delta _{2345}
   \Delta _{2368} \Delta _{2468} \Delta _{2478} \Delta
   _{2567} \Delta _{3467} \Delta _{3478} \Delta
   _{5678}} ,
\\ 
{\cal I}^{(4,8)}(\Sigma_{395}) =& \frac{\Delta _{2456} \Delta
   _{5678}+z \Delta _{2568} \Delta _{4567}}{\Delta _{1234} \Delta _{1267} \Delta _{1378} \Delta
   _{1456} \Delta _{2468} \Delta _{2568} \Delta _{2578} \Delta
   _{3458} \Delta _{3567} \Delta _{4567}} .
\end{align}
\allowdisplaybreaks[0]

Note that the last example
contains one of the 24 undetermined coefficients, which we denoted as $z$.
All $2469=279+1711+479$ integrands we obtained are included in the ancillary file. They are dual to themselves up to a relabelling, which is a strong consistency check of our algorithm.  We expect that the remaining 135 types of integrands can be worked out similarly but would require either large computing resources or improvements to the algorithm and we leave them for future research.

Note that 11 of 
 24 non-realizable pseudo-GCOs are connected to the 2469 integrands we obtained by combinatorial tetrahedron flips. Similar to the $(3,9)$ case, if we were to impose that the residues of integrands vanish when their tetrahedron flips lead to  
non-realizable pseudo-GCOs, we would have been able to fix nine of the 24 parameters. 

\section{Higher $k$ Irreducible Decoupling Identities \label{sec4}}

Decoupling identities in physics refer to linear relations among partial amplitudes which are derived by using that color-dressed amplitude vanish whenever the color of a particle commutes with that of all the others. The vanishing of the color-dressed amplitude is manifest from the Lagrangian viewpoint as interaction vertices are proportional to the structure constants of the color group. From the viewpoint of partial amplitudes, it is not obvious that they must satisfy identities. One of the advantages of the Witten-Roiban-Spradlin-Volovich (Witten-RSV) formulation \cite{Witten:2003nn,Roiban:2004yf} of ${\cal N}=4$ SYM amplitudes and of the CHY formulation of Yang-Mills and biadjoint scalar amplitudes is that decoupling identities can be understood directly from partial amplitudes. In this section, we start with a review of how decoupling identities are understood in the CHY formulation, i.e., when $k=2$, and then move on to the corresponding generalization for $k=3$. 

It turns out that $k=2$ decoupling identities are always irreducible, i.e., no proper subset of terms participating in a certain identity adds up to zero. In \cite{Cachazo:2022pnx}, we found that a naive extension of the $k=2$ procedure to higher $k$ leads to decoupling identities but they are reducible. Here we find a natural explanation for it and propose an algorithm for $k=3$ to generate various of irreducible decoupling identities. Moreover, in the process, we identify the analog of the $U(1)$ decoupling identities for higher $k$ in a procedure we call double extension, with section \ref{ref4B} devoted to them.

\subsection{$k=2$ Decoupling Identities
\label{sec4d1}}

In the Witten-RSV and CHY formulations of partial amplitudes, the part of the integrand that carries the color information is the Parke-Taylor function,
\be 
{\rm PT}(1,2,\ldots, n) := \frac{1}{\Delta_{12}\Delta_{23}\cdots \Delta_{n1}}.
\ee 
In our terminology, this is the integrand for type 0 color orderings. For $k=2$ this is the only type of color orderings present. 

Decoupling particle $n$ means that the position of the $n^{\rm th}$ label in an ordering is irrelevant and therefore one should group all partial amplitudes according to the $(2,n-1)$ ordering they map to under the projection, $\pi_n$, that is deleting $n$ from the ordering. This means that given a $(2,n-1)$ ordering, say $\tilde\sigma := (1,2,\ldots ,n-1)$, then  
\be\label{k2DI}
\sum_{\pi_n(\sigma)=\tilde\sigma} {\rm PT}(\sigma ) = 0.
\ee
Here the sum is over all $(2,n)$ orderings $\sigma$ such that when label $n$ is deleted they become $\tilde\sigma$.

The proof of \eqref{k2DI} is very simple. Consider the LHS of  \eqref{k2DI} as a rational function of the position of the $n^{\rm th}$ point in $\mathbb{CP}^1$. Denote its inhomogeneous coordinate by $z$. Clearly, the function is ${\cal O}(1/z^2)$ as $z\to \infty$ and can only have simple poles which are located where $\Delta_{in}=0$ for any $i$. By inspection, one can compute the residue at $\Delta_{in}=0$ noting that only two of the $n-1$ terms in \eqref{k2DI} contain the pole. Moreover, the residue becomes proportional to that of 
\be 
\frac{\Delta_{i-1,i}}{\Delta_{i-1,n}\Delta_{n,i}} +\frac{\Delta_{i,i+1}}{\Delta_{i,n}\Delta_{n,i+1}}.
\ee 
which clearly vanishes since 
\be
\Delta_{i-1,i}\Delta_{n,i+1}+\Delta_{i,i+1}\Delta_{i-1,n} = \Delta_{n,i}\Delta_{i-1,i+1}.
\ee 

This cancellation has a beautiful combinatorial interpretation. If we represent the $(2,n-1)$ color ordering as a circle with $n-1$ points on it, then placing point $n$ in any of the $n-1$ intervals corresponds to a point in $X(2,n)$. Each interval corresponds to a color ordering and crossing from one interval to another is an ``interval'' flip of the color ordering. The fact that the residues of the corresponding integrands agree up to a sign is built in the construction. Adding up overall integrands as we move along the circle guarantees that the resulting object has no poles. 

Moreover, each identity involving $n-1$ partial amplitudes is irreducible, i.e., no proper subset of the $n-1$ partial amplitudes can be added with constant coefficients to get zero. This fact is also clear from the picture using the circle.

\subsection{$k>2$ Decoupling Identities}

In \cite{Cachazo:2022pnx}, we generalized the definition of decoupling a particle to higher $k$.
\begin{defn}\label{k3decoupling}
For the $(k,n)$ color-dressed amplitude, ${\cal M}_n^{(k)}$, the operation that identifies (up to a sign) any two $(k,n)$-color orderings, $\Sigma$ and $\tilde\Sigma$, if their $i^{\rm th}$ projections are the same, i.e., $\pi_i (\Sigma ) = \pi_i (\tilde\Sigma)$, is called decoupling the $i^{\rm th}$-particle. 
\end{defn} 

Decoupling identities are obtained if one requires that the color-dressed amplitude ${\cal M}_n^{(k)}$ vanishes when a particle is decoupled.

This definition heavily relies on the projection operator defined in \eqref{projectionkkkk}, $\pi_i:CO_{k,n}\to CO_{k,n-1}$.

When $k>2$ one can straightforwardly define the notion of a decoupling set of GCOs as done in \cite{Cachazo:2022pnx},
but unlike the $k=2$ case, this does not lead to irreducible identities. The reason is that when $k>2$ one can have lower $k$ cancellations in the set! 

In this section, we present an algorithm for finding irreducible decoupling identities for $k=3$. The algorithm is purely combinatorial and only uses two of the basic operations on generalized color orderings introduced in \cite{Cachazo:2022pnx}: one is the projection $\pi_i$ and the other is the simplex flip reviewed at the end of \cref{sec3d1}.

Projecting, say particle $n$, induces a partition of the set of all $(k,n)$ GCOs in terms of the pre-image of $\pi_n$. In other words, given a $(k,n-1)$ GCO $\Sigma$, then $\pi^{-1}_n(\Sigma)\subset CO_{k,n}$ is called a {\it decoupling set}. For example, starting with $(3,6)$ one finds that the $372$ GCOs split into $12$ decoupling sets each with $31$ GCOs. The $12$ sets are labeled by the $12$ $(3,5)$ GCOs. When $(3,n>6)$, decoupling sets can have a different number of elements.

It turns out that proper subsets of $\pi^{-1}_n(\Sigma)$ satisfy identities when $n>5$ and $k=3$. In fact, for every subset that generates an identity, its complement also generates one. We find that there are two types of irreducible identities distinguished by whether their complement is also irreducible or not.  

\subsection{Constructing Irreducible Decoupling Identities\label{sec43}}

Let us introduce the algorithm for finding irreducible identities in the $(3,n)$ case and then provide examples. 

An irreducible identity can be characterized by a given $(3,n)$ GCO, $\Sigma$, and any pair of its triangles that share at least one label. We identify such a label as the one being decoupled.

\vspace{2mm}
\noindent \paragraph{Algorithm II: Building GCO Sets for Irreducible Decoupling Identities}  ~

{\it Input}: A $(3,n)$ GCO $\Sigma$, a particle to decouple, say $n$, and a choice of two distinct triangles of $\Sigma$ that contain $n$, say $\{i,j,n\}$ and $\{p,q,n\}$.

{\it Output}: A set of $(3,n)$ GCOs belonging to the $n$ decoupling set that contains $\Sigma$ and which combine to produce an irredubible decoupling identity.

\begin{enumerate}
    \item Construct a list $L$ which starts with $\Sigma$ as its only element.
    \item For every GCO in $L$, determine all its triangles that contain label $n$ and discard any that coincide with either $\{i,j,n\}$ or $\{p,q,n\}$. 
    \item Apply all possible triangle flips for the triangles left over in step 2 and only add new GCOs to $L$
    \footnote{There are subtleties if a triangle flip leads to a non-realizable pseudo-GCO. We postpone the discussion of these cases to \cref{irre39sec}.}.
    \item Repeat steps 2-3 until no new GCOs are produced.
    \item Return list $L$ with the GCOs in the identity.
\end{enumerate}

In the following, we call $n$ the decoupling particle, $\Sigma$ the original GCO that generates the identity, and $\{i,j\}, \{p,q\}$ the ``walls" as they constitute the forbidden triangles that cannot be flipped (or crossed).

\begin{prop}\label{irreProof}
   Any set of GCOs generated by using Algorithm II gives rise to an irreducible decoupling identity.   
\end{prop}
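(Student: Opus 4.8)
The plan is to establish two things: first, that the GCOs returned by Algorithm II do satisfy a linear relation among their integrands (so that a ``decoupling identity'' genuinely holds), and second, that this relation is irreducible, meaning no proper nonempty subset of the integrands sums to zero (with constant coefficients). I would model the whole argument on the $k=2$ proof reviewed in \cref{sec4d1}, transporting the ``circle of intervals'' picture to the $(3,n)$ setting where the role of the intervals is played by chambers in a fibration of $X(3,n)$ over $X(3,n-1)$.

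\textbf{Step 1: The identity holds.} Fix the $(3,n-1)$ GCO $\tilde\Sigma=\pi_n(\Sigma)$ and consider the fiber of $X(3,n)\to X(3,n-1)$ over a generic point of the cell of $X(3,n-1)$ labeled by $\tilde\Sigma$: geometrically this is $\mathbb{CP}^2$ minus the $n-1$ lines $H_1,\dots,H_{n-1}$, and the position of the extra point $n$ (equivalently, via duality, the extra line $L_n$) picks out a chamber, which is exactly the data of a $(3,n)$ GCO projecting to $\tilde\Sigma$. Crossing a wall between two adjacent chambers is a triangle flip involving label $n$; by the residue-matching property established in \cref{sec3d2} (the defining relation between integrands connected by a simplex flip), the integrands of the two adjacent GCOs have equal residues, up to sign, on the shared pole $1/\Delta_{ijn}$. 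I would fix the relative signs chamber-by-chamber so that the alternating sum over a closed region of chambers has no pole along any interior wall. Now I run Algorithm II: starting from $\Sigma$ it explores precisely the connected component of chambers reachable without crossing the two ``wall'' triangles $\{i,j,n\}$ and $\{p,q,n\}$. I need to argue this component is bounded by walls along which the sum already vanishes: the interior walls cancel pairwise by residue matching, and the only remaining potential poles are along $\Delta_{ijn}=0$ and $\Delta_{pqn}=0$; because these two triangles are the chosen walls, the region has been closed off there — one shows the configuration ``wraps around'' these codimension-one loci so that the sum of the two boundary contributions on each of them cancels, exactly as $\Delta_{i-1,i}\Delta_{n,i+1}+\Delta_{i,i+1}\Delta_{i-1,n}=\Delta_{n,i}\Delta_{i-1,i+1}$ does in the $k=2$ case. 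Hence the alternating sum of integrands over $L$ is a rational function on the fiber with no poles and the right torus weight, so it vanishes identically; this is the claimed decoupling identity.

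\textbf{Step 2: Irreducibility.} Suppose a proper nonempty subset $S\subsetneq L$ of the integrands sums to zero with constant coefficients. Consider the chambers in $S$ versus those in $L\setminus S$. Since $L$ is connected under the wall-crossing adjacency graph (restricted to triangles containing $n$ and avoiding the two walls), and $S$ is proper and nonempty, there must be an interior wall — a triangle flip on some $\{a,b,n\}\neq\{i,j,n\},\{p,q,n\}$ — separating a chamber in $S$ from an adjacent chamber in $L\setminus S$. On the pole $1/\Delta_{abn}$ of that shared wall, the chamber in $S$ contributes a nonzero residue (the residue of a single GCO integrand along one of its actual poles is nonzero, by the structure of the denominator as a product of simple poles), and this residue is not cancelled within $S$: the only other GCO in $L$ carrying that pole lies in $L\setminus S$ because by construction each triangle-containing-$n$ wall borders exactly two chambers of the decoupling set, one on each side. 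Therefore the sum over $S$ has a nonzero residue at $\Delta_{abn}=0$, contradicting $\sum_{S}=0$. Hence no proper subset works, and the identity is irreducible.

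\textbf{Main obstacle.} The genuinely delicate point is Step 1's claim that the region explored by Algorithm II is \emph{closed} precisely along the two chosen walls — i.e.\ that forbidding flips through $\{i,j,n\}$ and $\{p,q,n\}$ yields a region whose boundary sum vanishes, rather than a region that still ``leaks'' poles because a third $n$-containing triangle lies on its frontier. In the $k=2$ interval picture this is automatic because the circle is one-dimensional and two walls suffice to bound an arc; for $(3,n)$ one must show that the sub-arrangement of lines dual to the chambers with a fixed $\tilde\Sigma$ has the property that the locus $\Delta_{ijn}=0$ (line $L_n$ passing through the point $H_i\cap H_j$) is traversed in a way that the two chambers meeting it combine, like the $k=2$ three-term Plücker relation, to kill the pole — and that Algorithm II's termination condition (no new GCOs) coincides with having exhausted exactly that bounded region. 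I would handle this by a careful local analysis near each codimension-one stratum $\Delta_{abn}=0$ of the fiber: classify which $(3,n)$ GCOs can border it (there are exactly two for each admissible triple), verify the two-term residue cancellation from the Plücker relations, and then a connectivity/exhaustion argument shows Algorithm II's output is the full closed region bounded by the two walls. The remaining subtlety — flips that would lead to non-realizable pseudo-GCOs — is flagged in the excerpt as postponed to \cref{irre39sec}, so I may set it aside here, noting only that discarding such flips can only shrink the region and the boundary-cancellation argument still applies to whatever closed region results.
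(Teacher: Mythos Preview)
Your overall framework---treating the sum of integrands as a rational function on the fiber, cancelling interior poles pairwise via the residue-matching property of triangle flips, and then arguing irreducibility by connectivity---matches the paper's strategy. But there is a real gap at the point you yourself flag as the main obstacle, and your proposed fix does not work.

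For a non-wall triangle $\{a,b,n\}$, the two chambers on either side of the line $L_{ab}$ are both in $L$, so their residues at $\Delta_{abn}=0$ cancel pairwise. For the \emph{wall} triangles $\{i,j,n\}$ and $\{p,q,n\}$, however, the chamber on the other side is by construction \emph{not} in $L$. There is no partner inside $L$ to cancel against, and no three-term Pl\"ucker identity of the $k=2$ type will save you: the region does not ``wrap around'' a wall line the way the $k=2$ circle closes up, because in $k=2$ you are summing over the \emph{entire} circle with no walls at all. Your statement that ``there are exactly two'' GCOs bordering each stratum is true segment-by-segment, but the pole $\Delta_{ijn}=0$ is the whole line, and many chambers in $L$ (one per segment) border it---none of them paired within $L$.

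The paper's mechanism is different and is the key idea you are missing. One introduces a \emph{second} free coordinate $w$ for point $n$ (so that the fiber is two-dimensional), then takes the residue of the full sum $f_v(z,w)$ along the wall $\Delta_{ijn}(z,w)=0$. This residue $g_v(w)$ is now a rational function on the $\mathbb{CP}^1$ where $i,j,n$ are collinear. On that $\mathbb{CP}^1$, all poles of $g_v$ at $\Delta_{abn}=0$ with $\{a,b\}\neq\{p,q\}$ again cancel pairwise (these are interior flips), so the only possible pole of $g_v(w)$ is at the single point $\Delta_{pqn}=0$. But the torus weight forces $g_v(w)=\mathcal{O}(w^{-2})$ at infinity, and a rational function on $\mathbb{CP}^1$ with at most one simple pole and that decay must vanish identically. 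Hence the wall residue is zero. The same argument with the roles of the two walls swapped handles $\Delta_{pqn}=0$. It is precisely the presence of \emph{two} walls, and only two, that makes this second residue theorem close up.

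Your irreducibility argument is morally right but inherits the same subtlety: a given pole $\Delta_{abn}$ can be shared by many GCOs in $L$ (one per segment of $L_{ab}$), so ``the only other GCO in $L$ carrying that pole lies in $L\setminus S$'' is not literally true. The paper's one-line justification again leans on the reduction to a $\mathbb{CP}^1$, where the pairwise-cancellation structure is exactly that of a $k=2$ identity and irreducibility follows as there.
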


The proof is similar to that for $k=2$ explained in \cref{sec4d1} and uses a residue theorem. The only new feature is that there are cancellations inside the region and on the boundary. The proof is presented in \cref{appproof}.  

\subsection{Geometric Intuition and Examples}

Let us illustrate the use of algorithm II in several $(3,n)$ cases. When $n<7$ it is easy to develop a geometric picture for the decoupling identities since the real model of $X(3,n)$ captures all the relevant structures. 

The idea is to consider a representation of the fibration $X(3,n)\to X(3,n-1)$. One starts by selecting a particular point in $X(3,n-1)$. The choice is equivalent to that of a decoupling set\footnote{More formally, we first fix a rank 3 oriented uniform matroid on a set of size $n-1$.  The second step is to calculate all one-element extensions, see \cite{bjorner1999oriented} for details.}. Having $n-1$ points on $\mathbb{RP}^2$ one can draw all $\binom{n-1}{2}$ lines defined by any pair of such points. So, a line is labeled by the pair that it contains, e.g. $L_{24}$ is the line that passes through points $2$ and $4$. The $\binom{n-1}{2}$ lines intersect partitioning $\mathbb{RP}^2$ into polygons known as {\it chambers}.

Before continuing, let us explain the connection of this picture to that of the arrangement of lines used to describe GCOs. As it turns out, the two pictures are dual to each other. Lines in one picture correspond to points in the other and vice-versa. 

Now, the fibration of $X(3,n)$ over $X(3,n-1)$ is realized by placing point $n$ on the figure with the other $n-1$ points fixed. Point $n$ must be placed on one of the chambers and once it is in there, it cannot be moved to another chamber without making it become collinear with two others. For example, if point $n$ is placed on a chamber bounded by lines $L_{ab},L_{cd},L_{ef}$, then in order to make it move to another chamber one must have that at least one of $\Delta_{abn}$, $\Delta_{cdn}$, or $\Delta_{efn}$ vanishes.

This means that a chamber is dual to a particular GCO in a decoupling set and crossing into another chamber is equivalent to performing a triangle flip on the GCO.

Now all the elements in algorithm II come to life:

\begin{itemize}
    \item Selecting a label to decouple, say $n$, is equivalent to selecting a fibration, $X(3,n)\to X(3,n-1)$.
    \item Selecting a GCO, say $\Sigma$, corresponds to selecting a point in $X(3,n-1)$, i.e. a configuration of $n-1$ points in $\mathbb{RP}^2$, via the projection $\pi_n(\Sigma)$ and the duality between lines and points. Moreover, $\Sigma$ also picks a particular chamber in the fiber.
    \item The choice of two triangles of $\Sigma$, $\{ i,j,n \}$ and $\{ p,q,n \}$ corresponds to selecting the two boundaries of the chamber corresponding to $\Sigma$ and defined by the lines $L_{ij}$ and $L_{pq}$. 
    \item Performing triangle moves which are not associated with either $\{ i,j,n \}$ or $\{ p,q,n \}$ means that we cover chambers in $\mathbb{RP}^2$ which are within lines $L_{ij}$ and $L_{pq}$. Note that two lines in $\mathbb{RP}^2$ partition the space into only two regions. We are choosing chambers in the region that already contains that associated with $\Sigma$.
\end{itemize}

Let us mention some of the consequences of the geometric perspective. Even though we started with a GCO, the picture reveals that the decoupling identity is more naturally associated with the pair of lines $L_{ij}$ and $L_{pq}$ and to the choice of ``inside'' and ``outside''. 

As we will see in the examples, if $L_{ij}\cap L_{pq}$ is not one of the $n-1$ special points, the decoupling identities are always irreducible as one can show that there exists a chamber for which algorithm II leads to them. If $L_{ij}\cap L_{pq}$ is one of the special points,  the identity is not necessarily irreducible. In fact, there are $n-2$ lines passing through any given special point. If a circle is drawn centered around the point, an identity is irreducible if the two lines intersect the circle at points that are adjacent.

\subsubsection{Types of Irreducible Decoupling Identities and $(3,5)$ Examples
\label{sec4d4d1}}

We mentioned that there are two types of irreducible identities distinguished by whether their complement in a decoupling set is also an irreducible identity or not. Let us call the former a {\it partitioning} identity and the latter a non-partitioning one. In section \ref{ref4B}, we show that the non-partitioning ones are, in fact, the analog of the $U(1)$ decoupling identities in $k=2$ and that they can be thought of as ``fundamental", in the sense that, conjecturally, all other identities are linear combinations of them.

There is a simple combinatorial characterization to distinguish them. An irreducible identity constructed using a GCO $\Sigma$ and triangles $\{i,j,n\}$ and $\{p,q,n\}$ is partitioning if and only if $\{i,j\}\cap \{p,q\} =\emptyset$. 

Let us illustrate this concept with two $(3,5)$ identities. 
\begin{figure}[h!]
	\centering
   \includegraphics
   [width=1\linewidth]
{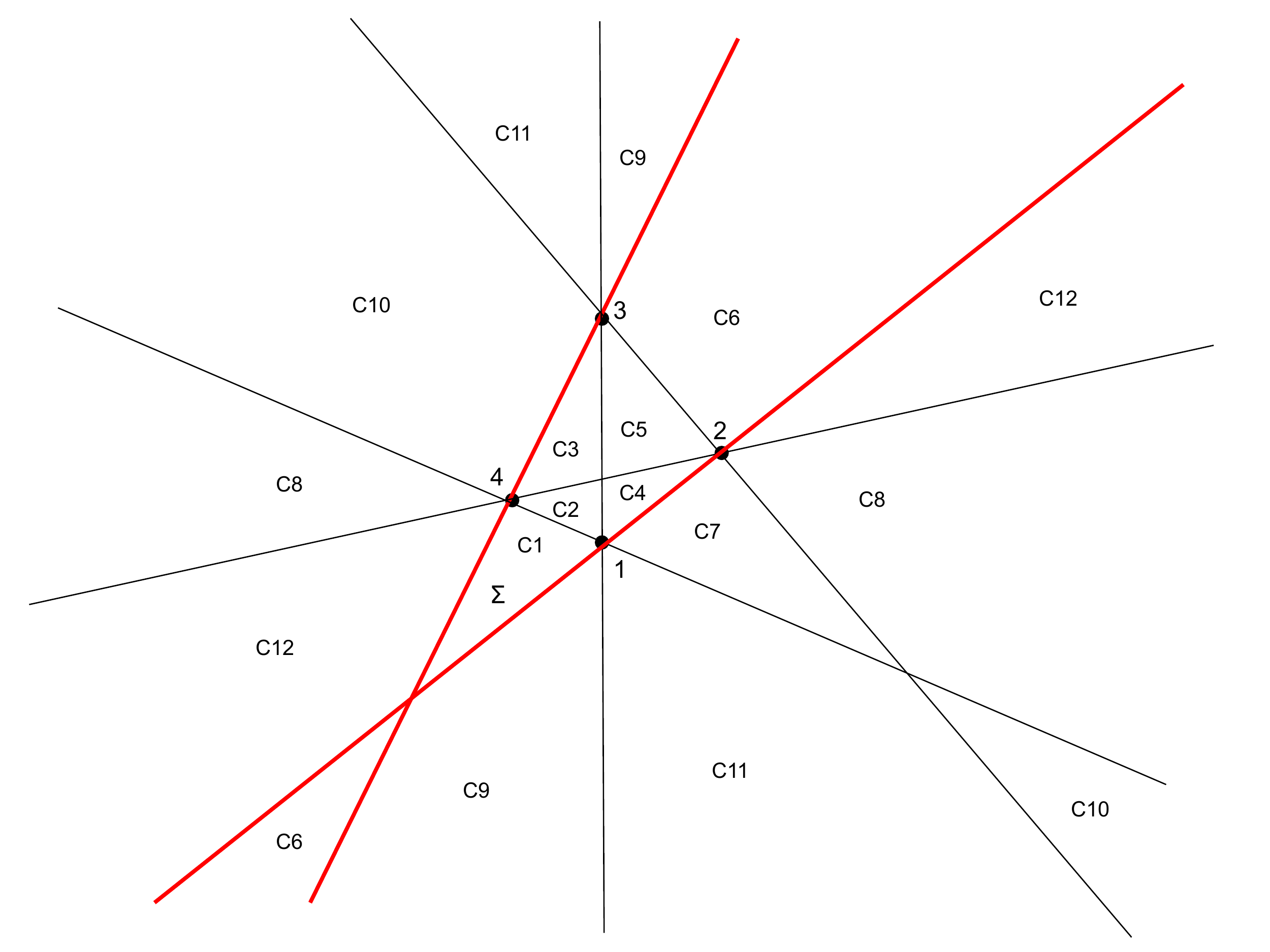}
 \caption{Partitioning identity obtained by considering the GCO $\Sigma$ associated with chamber 1 and using triangles $\{3,4,5\}$ and $\{5,1,2\}$. The red lines, $L_{34}$ and $L_{12}$, bound the chambers that participate in the identity. These are $C_1,C_2,C_3,C_4,C_5,C_6$. The remaining six chambers also generate an irreducible identity.   \label{PartiID35} }
\end{figure}

In \cref{PartiID35}, we start with a fibration of $X(3,5)$, thought of as the configuration space of five points on $\mathbb{RP}^2$, over $X(3,4)$. The figure represents the base $X(3,4)$ and shows the lines where if point $5$ were located, then it would be collinear with two other points. The regions bounded by lines are chambers and there are $12$ in this case. Each chamber is associated with a GCO and $\Sigma$ is chosen to be in chamber $C_1$.

\begin{figure}[h!]
	\centering
   \includegraphics
   [width=1\linewidth]
{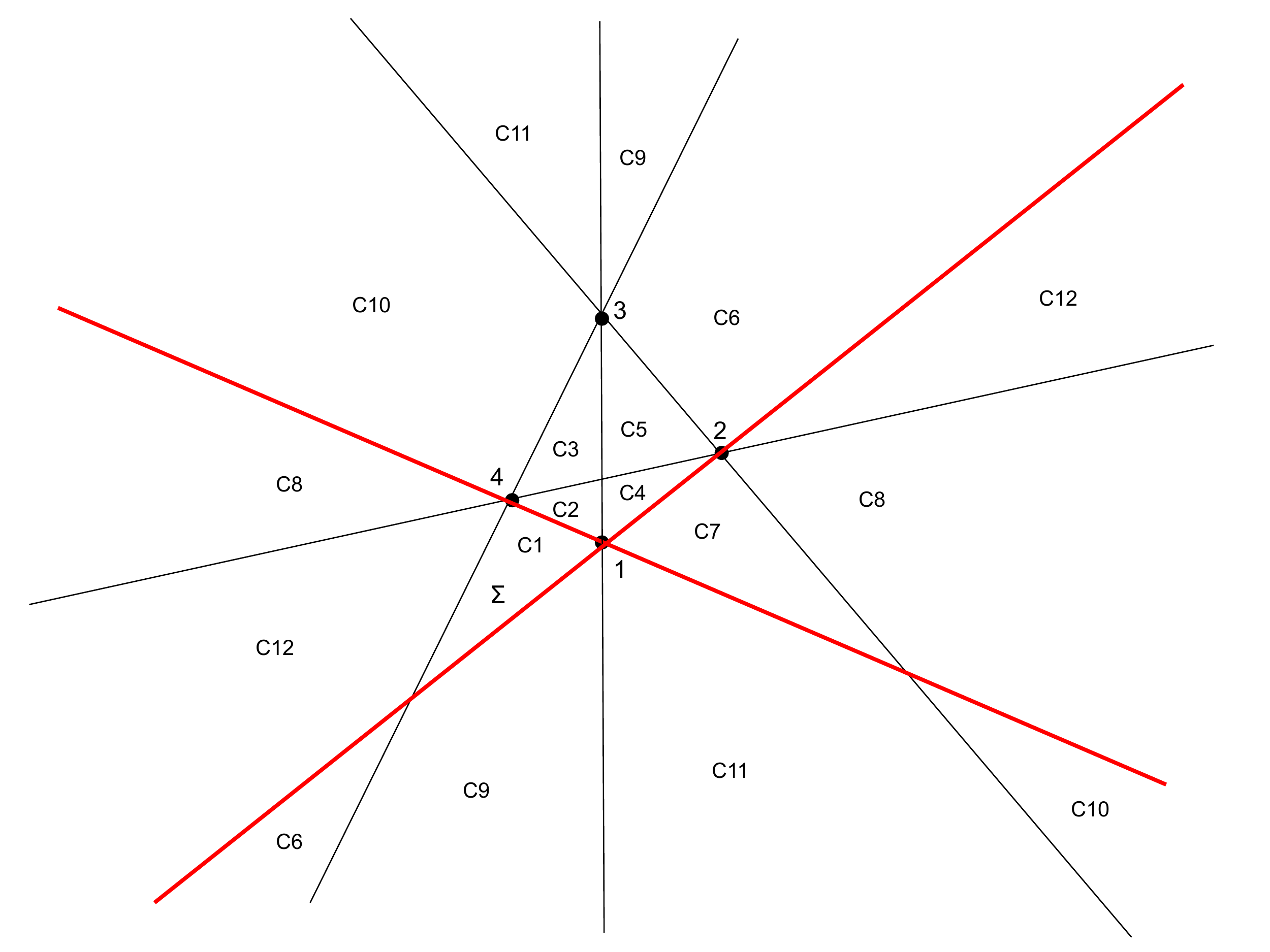}
 \caption{Non-partitioning identity obtained by considering the GCO $\Sigma$ associated with chamber 1 and using triangles $\{4,5,1\}$ and $\{5,1,2\}$. The red lines, $L_{14}$ and $L_{12}$, bound the chambers that participate in the identity. These are $C_1,C_7,C_8,C_{12}$. The remaining eight chambers do not produce an irreducible identity.   \label{NonPartiID35} }
\end{figure}

There are three triangles, i.e., boundaries that can be chosen to create the identity. These are $\{3,4,5\}$, $\{4,5,1\}$, and $\{5,1,2 \}$. In \cref{PartiID35} we chose $\{3,4,5\}$ and $\{5,1,2\}$. Since $\{3,4\}\cap \{1,2\}=\emptyset$, we expect a partitioning identity. The algorithm for finding the corresponding identity boils down to selecting the set of chambers bounded by the two red lines. Note that there are only two regions divided by two lines in $\mathbb{RP}^2$. Here we consider the region which contains $\Sigma$ to get the identity
\be\label{shuffle1}
\{ C_1,C_2,C_3,C_4,C_5,C_6 \}.
\ee
Since this is a partitioning identity, we also have that 
\be\label{shuffle2}
\{ C_7,C_8,C_9,C_{10},C_{11},C_{12} \}.
\ee
give rise to an irreducible identity. 

Here we list the 
 integrands associated with the twelve chambers:
\begin{align}
\label{12PT5pt}
 &\text{C}_1\,\, \text{PT}^{\text{(3)}}(1,2,3,4,5); ~~
 \text{C}_2 \, \,\text{PT}^{\text{(3)}}(1,3,2,4,5); ~~\,\,
 \text{C}_3\, \,\text{PT}^{\text{(3)}}(1,2,4,5,3); ~~\,\,
 \text{C}_4 \, \,\text{PT}^{\text{(3)}}(1,3,4,2,5) ; \nonumber \\
& \text{C}_5 \, \,\text{PT}^{\text{(3)}}(1,3,5,2,4) ; ~~
 \text{C}_6 \, \,\text{PT}^{\text{(3)}}(1,2,5,3,4) ; ~~\,\,
 \text{C}_7 \, \,\text{PT}^{\text{(3)}}(1,4,3,2,5) ; ~~\,\,
 \text{C}_8 \, \,\text{PT}^{\text{(3)}}(1,3,2,5,4) ; \nonumber \\
 &\text{C}_9 \, \,\text{PT}^{\text{(3)}}(1,2,4,3,5) ; ~~
 \text{C}_{10} \, \,\text{PT}^{\text{(3)}}(1,2,3,5,4) ; ~~
 \text{C}_{11} \, \,\text{PT}^{\text{(3)}}(1,4,2,3,5) ; ~
 \text{C}_{12} \, \,\text{PT}^{\text{(3)}}(1,2,5,4,3) .
\end{align}
Here $\text{PT}^{\text{(3)}}(1,2,3,4,5)= 1/(\Delta _{123} \Delta _{125} \Delta _{145} \Delta _{234} \Delta _{345}) $ as defined in \eqref{PTform} and it corresponds to the $(3,5)$ GCO $((2345),(1345),(1245),(1235),(1234))$.
%

In  \cref{NonPartiID35}, we choose as triangles for the identity $\{4,5,1\}$ and $\{5,1,2\}$. Now $\{4,1\}\cap \{1,2\}={1}$ and so the corresponding irreducible identity is not partitioning. Once again, reading the labels of the chambers in the same region as $\Sigma$ one finds a 4-term identity
\be 
\label{35-4termidcc}
\{ C_1,C_7,C_8,C_{12} \} .
\ee  
The reader familiar with decoupling identities in the dual $(2,5)$ would recognize this as the standard four-term $U(1)$ decoupling identity,
\begin{align}
\label{35-4termid}
    \text{PT}^{\text{(3)}}(1,2,3,4,5)+\text{PT}^{\text{(3)}}(1,2,5,4,3)-\text{PT}^{\text{(3)}}(1,3,2,5,4)+\text{PT}^{\text{(3)}}(1,4,3,2,5)=0\,,
\end{align}
where label 1 moves around.

At the same time, the six-term decoupling identities found in the partitioning case, are non-standard in the $(2,5)$ framework. 
In this case, one finds that \eqref{shuffle1} corresponds to 
%
\begin{align}
\label{35-6termid}
   & \text{PT}^{\text{(3)}}(1,2,3,4,5)+\text{PT}^{\text{(3)}}(1,2,4,5,3)+\text{PT}^{\text{(3)}}(1,2,5,3,4)-
   \\
   &\text{PT}^{\text{(3)}}(1,3,2,4,5)+\text{PT}^{\text{(3)}}(1,3,4,2,5)+\text{PT}^{\text{(3)}}(1,3,5,2,4)\equiv \text{PT}^{\text{(3)}}(\{1,2\}\shuffle\{3,4\},5) =0\,,
   \nonumber
\end{align}
while \eqref{shuffle2} corresponds to $\text{PT}^{\text{(3)}}(\{1,2\}\shuffle\{4,3\},5) =0$.

\setlength{\abovecaptionskip}{-5pt}
\begin{figure}[h!]
	\centering
   \includegraphics
   [width=1\linewidth]
{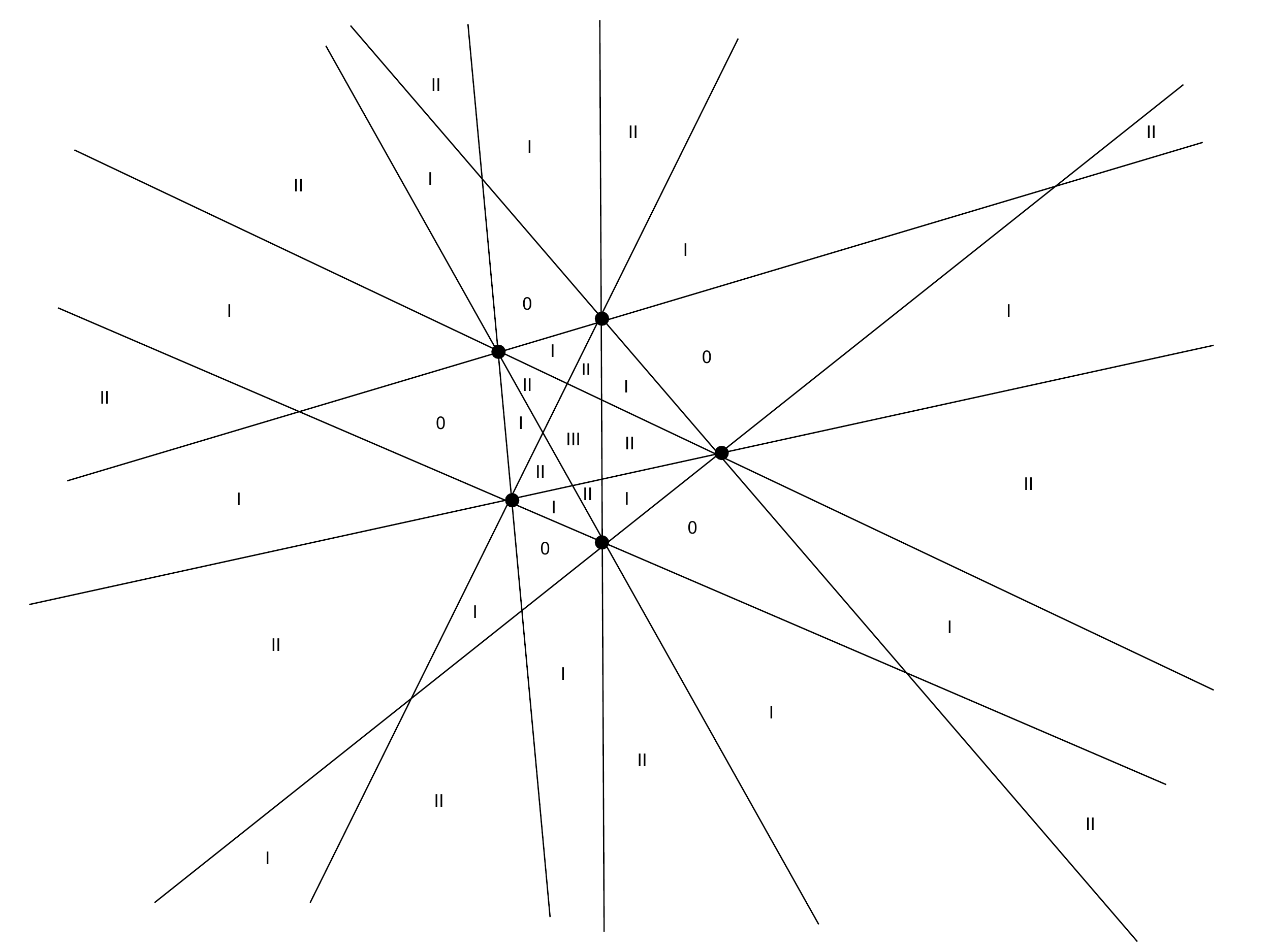}
 \caption{ A point in $X(3,5)$. The five points on $\mathbb{RP}^2$ are represented by the solid circles. Also shown are the $10$ lines containing pairs of such points. The $10$ lines intersect partitioning $\mathbb{RP}^2$ into $31$ regions (or chambers). Once point six is placed on a chamber, it cannot visit another chamber unless it becomes collinear with two other points, i.e. it must cross a line. Each chamber corresponds to a GCO and the corresponding types are shown. There are five type 0, fifteen type I, ten type II, and a single type III GCO in this decoupling set. \label{Types36} }
\end{figure}
\setlength{\abovecaptionskip}{0pt}

\subsubsection{$(3,6)$ Irreducible Decoupling  Identities}


Let us start again by fibering $X(3,6)$ over $X(3,5)$ with fiber in the position of point $6$. This means that the base is given by the configuration space of five points. In \cref{Types36}, we show a point in $X(3,5)$. We also draw all lines connecting any pair of the five points as they are the boundaries of the connected chambers where point 6 can live. There are $31$ chambers and we have indicated the type of the GCO associated with each of them. Note that there are five type 0 GCOs as expected from the fact that this is a decoupling set and exactly five $(2,6)$ color factors get identified after removing point $6$.

There are a total of $9$ different types of irreducible identities that can be derived by selecting a pair of lines in \cref{Types36}.

There is one 7-term, one 8-term, one 9-term, two 12-term, one 14-term, one 17-term, and two 19-term identities.  

\setlength{\abovecaptionskip}{-5pt}
\begin{figure}[h!]
	\centering
   \includegraphics
   [width=1\linewidth]
{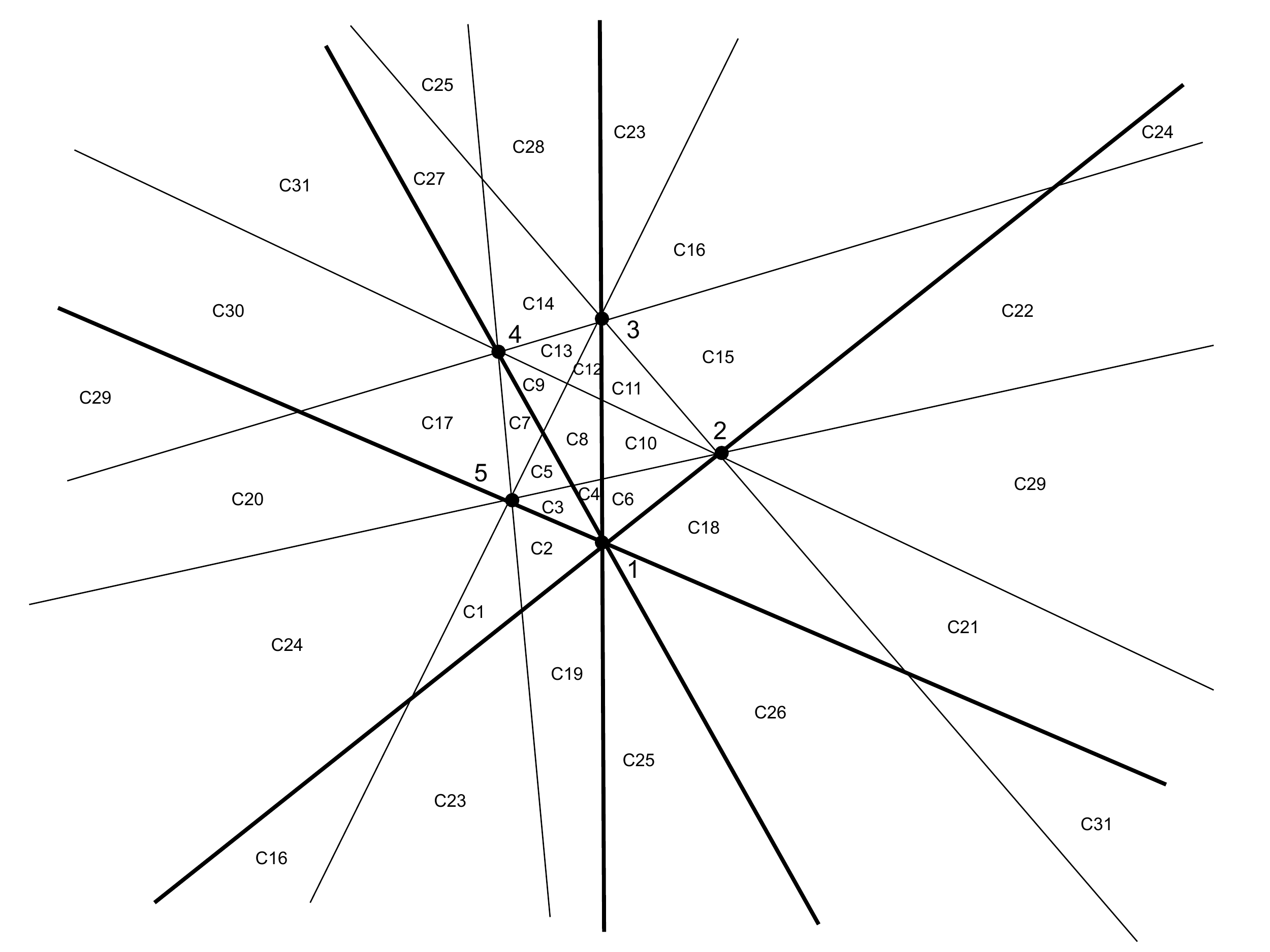}
\caption{Identities obtained by considering all lines that contain point $1$, i.e., lines $L_{12}$, $L_{13}$, $L_{14}$ and $L_{15}$. The regions bounded by 
$\{L_{12}, L_{13}\}$ or $\{L_{14}, L_{15}\}$ both contain 7 GCOs, which are related by a relabelling.  The region bounded by 
$\{L_{13}, L_{14}\}$ contains 9 GCOs while that of $\{L_{15}, L_{12}\}$ contains 8 GCOs. By the way, the associated GCOs for the chambers $C_2,C_{19},C_5,C_8$ are the type 0, I, II and III representatives shown in \eqref{type036GCO}-\eqref{typeIII36GCO} respectively whose arrangements of lines are given in \cref{triangles36}.
\label{P36point1Text}}
\end{figure}
\setlength{\abovecaptionskip}{0pt}

In order to explicitly construct the identities we have to label the points so that a pair of lines can be selected. In \cref{P36point1Text}, we show the five points with labels and the four lines that contain point 1. The four lines split $\mathbb{RP}^2$ into four regions. The chambers in each of the four regions define an irreducible identity. The explicit form of the GCOs and integrands associated with each chamber can be found in  \cref{appb}. Here we simply point out that there are one 8-term, one 9-term identity, and two 7-term identities which are related by a relabeling. The seven-term identity bounded by $L_{14},L_{15}$ is explicitly given by
\begin{align}
\label{sevenid}
   & \{ C_3,C_5,C_7,C_{17},C_{30},C_{31},C_{26}  \} \Rightarrow
 \\
&-\frac{1}{\Delta _{123} \Delta _{146} \Delta _{156} \Delta _{234} \Delta _{256} \Delta _{345}}
-\frac{\Delta _{235}}{\Delta _{123} \Delta _{125} \Delta _{146} \Delta _{234} \Delta _{256} \Delta _{345} \Delta _{356}}
 \nonumber\\
&-\frac{1}{\Delta _{123} \Delta _{125} \Delta _{146} \Delta _{234} \Delta _{356} \Delta _{456}}
+
\frac{1}{\Delta _{123} \Delta _{125} \Delta _{156} \Delta _{234} \Delta _{346} \Delta _{456}}
 \nonumber \\
&+
\frac{1}{\Delta _{123} \Delta _{125} \Delta _{156} \Delta _{246} \Delta _{345} \Delta _{346}}
+
\frac{\Delta _{126}}{\Delta _{123} \Delta _{125} \Delta _{146} \Delta _{156} \Delta _{236} \Delta _{246} \Delta _{345}}
 \nonumber \\
&+\frac{1}{\Delta _{125} \Delta _{146} \Delta _{156} \Delta _{234} \Delta _{236} \Delta _{345}} =0\,.
 \nonumber
\end{align}
We have also included figures of the other kinds of identities in \cref{appc}. 

For the reader's convenience, we also present the 9 kinds of GCO sets for irreducible identities in the ancillary file {\tt irreID_36_9.m}. 
For example, in the ancillary file, the GCO sets for the seven-term identity in \eqref{sevenid} are given by 
\begin{verbatim}
irreID[
{type[1], perm[1, 5, 4, 3, 2, 6]}, {type[2], perm[5, 1, 2, 3, 6, 4]}, 
{type[1], perm[1, 2, 6, 4, 3, 5]}, {type[0], perm[1, 2, 3, 4, 6, 5]}, 
{type[1], perm[2, 1, 4, 6, 5, 3]}, {type[2], perm[1, 2, 3, 4, 5, 6]}, 
{type[1], perm[3, 4, 1, 6, 5, 2]},
{decouple[6], walls[{1, 4}, {1, 5}]}].
\end{verbatim}
 Here ${\tt 
\{type[1], perm[1, 5, 4, 3, 2, 6]\}
}$ means a type I GCO using the seed given in \eqref{typeI36GCO} under a relabelling, $\eqref{typeI36GCO}|_{\{1,2,3,4,5,6\}\to \{1, 5, 4, 3, 2, 6\}}$. 
We have also shown that the decoupling particle is 6 and the walls are $\{1, 4\}, \{1, 5\}$.  
The first GCO in ${\tt irreID[\bullet]}$ is the original GCO in Algorithm II.

It is clear to see there is one type 0, four type I, and two type II GCOs in this identity.

\subsubsection{$(3,7)$,$(3,8)$ and $(3,9)$ Irreducible 
 Decoupling Identities \label{irre39sec}}

As $n$ increases the number of different kinds of identities grows very fast. We have implemented algorithm II and made an exhaustive search for $n=7,8,9$. They have $78$, $1432$, and $52\, 444$ classes of GCO sets for irreducible decoupling identities respectively, among which 18, 234, and 6193
classes respectively are for non-partitioning identities. 

We store all of the classes of  $(3,7)$ and  $(3,8)$ GCO sets in the ancillary file while we just put the  
6193
classes of (3,9) GCO sets for non-partitioning identities there to save the size of the file. As will be argued in \cref{ref4B}, their combinations are supposed to 
express all other irreducible decoupling identities. So they should be enough.

Here is an example of $(3,7)$ with 11 GCOs,
\begin{verbatim}
irreID[
{type[3], perm[1, 2, 3, 4, 5, 6, 7]}, {type[10], perm[3, 2, 1, 7, 6, 4, 5]}, 
{type[10], perm[1, 2, 3, 4, 6, 7, 5]}, {type[3], perm[3, 2, 1, 7, 5, 6, 4]}, 
{type[1], perm[4, 6, 5, 7, 3, 2, 1]}, {type[5], perm[1, 4, 6, 5, 7, 2, 3]}, 
{type[8], perm[1, 4, 6, 5, 7, 2, 3]}, {type[2], perm[3, 2, 1, 5, 7, 6, 4]}, 
{type[10], perm[2, 1, 5, 3, 4, 7, 6]}, {type[10], perm[6, 4, 5, 7, 3, 1, 2]}, 
{type[3], perm[1, 2, 3, 5, 4, 6, 7]}, 
{decouple[5], walls[{4, 6}, {4, 7}]}].
\end{verbatim}
Remind that  ${\tt \{type[\bullet], perm[\bullet]\} }$
means  GCOs whose seeds are given in \cref{37gco}. The integrands associated with the seeds are given in \eqref{37integrand0}-\eqref{37integrand10},  and their relabelling participates in the irreducible decoupling identity. 

Note that the two walls share a label of $4$. Hence this GCO set corresponds to a non-partitioning irreducible decoupling identity.

The GCO sets for irreducible decoupling identities have at least 11, 16, and 23 GCOs for $(3,7), (3,8)$ and $(3,9)$ respectively.

 Starting at $(3,7)$, there are different kinds of GCO sets for irreducible decoupling identities that have the same set of types of GCOs. For example, the $47^{\rm th}$ and $48^{\rm th}$ GCO sets in the ancillary file {\tt irreID_37_78.m} both have 
four type I GCOs, four type II GCOs, etc.,
\begin{align}
\{&\{{\rm type}[1],4\},\{{\rm type}[2],4\},
\{{\rm type}[3],6\},\{{\rm type}[4],2\},
\{{\rm type}[5],4\},
\nonumber
\\&
\{{\rm type}[6],2\},\{{\rm type}[8],4\},\{{\rm type}[9],2\},\{{\rm type}[10],8\}\}
\end{align}
but the two GCO sets are not related by a relabelling.

In $(3,9)$, there are GCOs sets for irreducible identities whose integrands contain some of the four free parameters. However, their identities hold no matter what the free parameters are.

Note that the irreducible decoupling identities express one integrand by others. If other integrands are already known, we get one integrand almost for free. One can express the unknown integrand as a linear combination of others with few undetermined signs first but these signs can be further fixed by matching the residues of this unknown integrand with its neighbors using \eqref{secnodutilization}.  This also provides a way to get new integrands from known integrands. In particular, it produces the last integrand of $(3,9)$ with 21 triangles 
 mentioned in \cref{39integrand}.

As we already know, there is a type of non-realizable pseudo-GCOs in $(3,9)$.  So there might be a problem when we apply Algorithm II as a combinatorial triangle flip of a GCO may lead to a non-realizable pseudo-GCO. In this case, we have to relax the requirement in Algorithm II a little bit and also allow us to put pseudo-GCOs in the set. This way, we 
may get a set of pseudo-GCOs including non-realizable ones even if we start with a GCO to generate this set. If we set the free parameter $y_0$ mentioned in \eqref{pseudoGCO39neiinte} as 0 and set the ``integrand'' associated with the non-realizable pseudo-GCOs also as 0, we still get an irreducible decoupling identity in terms of integrands associated with the remaining GCOs in that set.

There are 558 kinds of such sets of pseudo-GCOs and we also put them in the ancillary file.

\subsection{General $(k,n)$ Irreducible Decoupling Identities \label{sec4d5}}

The algorithm presented in this section for generating irreducible $(3,n)$ identities can, in principle, be extended to higher values of $k$. The practical obstacle is deciding which sets of simplices should not be used for simplex flips. The naive generalization would be to choose any set of $(k-1)$ simplices of the original GCO that contain the decoupling particle and use them to act as walls to bound a region in ${\mathbb {RP}}^{k-1}$. This procedure always produces a set of GCOs whose associated integrands constitute an identity but it is not guaranteed to be irreducible due to possible lower $k$ (but larger than 2) reducible identities on the boundary of the region. 
The higher dimensionality of the space makes it hard to develop a simple rule to generate the set of simplices needed to guarantee the irreducibility of the identities as it was done for $k<4$. 
  We leave its systematic study to future work.

Of course, when $k=n-2$, one can develop the dual of Algorithm II and apply it to get identities.
\subsubsection{$(n-2,n)$ irreducible decoupling identities}

Due to the duality with $(2,n)$, we know that there is a single type of $(n-2,n)$ GCOs, i.e. type 0. Let us consider the canonical type 0 GCO, i.e., the one which is a descendant of the canonical color ordering $(1,2,\ldots ,n)$. Choosing $n$ to be the label to decouple, there are $n-2$ simplices that contain $n$. They are given by $\{3,4,\ldots ,n\}, \{4,5,\ldots ,n,1\}$, etc.  Let us introduce a short-hand notation and denote 
\begin{align}
   S_i:=[n]\setminus\{i,i+1\}=\{i+2,i+3,\ldots ,n,1,\ldots ,i-1\}\,, \qquad {\rm for}~ i=1,2,\cdots,n-2.
\end{align}

One can get an irreducible identity by selecting $n-3$ simplices to be the forbidden ones for simplex flips in a higher k analog of algorithm II. This is equivalent to the selection of one simplex to be left out of the list, denoted by ${\hat S}_i$.  
It turns out that such kind of selection leads to the following irreducible decoupling identity,
\begin{align}
\label{degeKK}
 & {\text {Forbidden  simplices :}}~ \{S_1,S_2,\cdots, {\hat S}_i, \cdots, S_{n-2} \}
  \nonumber
  \\
   & \qquad\qquad \Rightarrow
\sum_{\shuffle}  {\rm PT}^{(k=n-2)} ( \{1,2,\cdots,i\} \shuffle \{i+1,i+2,\cdots,n-1\},n) =0\,,
\end{align}
where $\sum_{\shuffle}$ 
 is over the shuffle product  ${\pmb \rho} \shuffle {\pmb \omega}  $ of two ordered sets ${\pmb \rho}$ and ${\pmb \omega}$ (i.e., all the permutations of
${\pmb \rho} \cup {\pmb \omega} $ that preserving the ordering of ${\pmb \rho} $ and $ {\pmb \omega} $ respectively). The higher $k$ Parke-Taylor factors are defined in \eqref{PTform}, which are now dual to the standard ones,  ${\rm PT}^{(n-2)}(\alpha)\sim {\rm PT}^{(2)}(\alpha)$ by sending, e.g., $\Delta_{12\cdots n-2} \to {\tilde \Delta}_{n-1,n}$. 

It is clear to see the choice of ${\hat S}_i$ leads to an ${n-1 \choose i}$-term irreducible identity and  
it is isomorphic to the one obtained by selecting  ${\hat S}_{n-1-i}$ for any $i\in \{1,\cdots,n-2\}$. 

We have seen some examples in \eqref{35-4termid} and \eqref{35-6termid} which correspond to the selection ${\hat S}_1$ and ${\hat S}_2$ respectively and here we generalize them to any $n$.

Eq. \eqref{degeKK} looks very like the Kleiss-Kuijf  (KK) relations 
\footnote{The KK relations were first found for tree-level color ordered YM amplitudes  \cite{Kleiss:1988ne} but were made manifest at the level of integrands in the CHY formula \cite{Cachazo:2013hca}, which are nothing but some algebraic properties of the standard Parke-Taylor factors.} and the latter can indeed help us to understand \eqref{degeKK}. 

Let us present a proof using well-known results in the literature. Using the dual version of \eqref{degeKK} and introducing an auxiliary label $n+1$, we have a standard KK relation,
\begin{align}
\label{standKK}
&\sum_{\shuffle}  {\rm PT}^{(2)} (n+1, \{1,\cdots,i\} \shuffle \{i+1,\cdots,n-1\},n)
\nonumber
\\
&\qquad\qquad
= (-1)^{n-i+1}\,
 {\rm PT}^{(2)} (n+1,1,\cdots,i,n, n-1,\cdots,i+1)
 \,.
\end{align}
Remarkably, \eqref{degeKK} can be derived by just 
taking the residues of both sides of \eqref{standKK} at ${\tilde \Delta}_{n,n+1}=0$. In this sense, \eqref{degeKK} could also be called a degenerated KK relation.

We have verified up to $n=20$ that any selection ${\hat S}_i$ would lead to a particular combination of ${\rm PT}^{(k)}(\cdots)$ given in \eqref{degeKK}, which constitutes a degenerated KK relation. A straightforward combinatorial proof to get this combination for any $n$ seems to be within reach.

\section{Connecting Different Descriptions of GCOs
\label{sec:translation}}

The attentive reader might be puzzled by the fact that CEGM integrals \eqref{fullInt} are defined in the configuration space of $n$ points in ${\mathbb {CP}}^{k-1}$, $X(k,n)$, while generalized color orderings (GCOs) are given in terms of arrangements of $n$ hyperplanes in ${\mathbb {RP}}^{k-1}$ (see \eqref{defineGCO}). 

It turns out that GCOs have several equivalent representations. In this section, we discuss three of them and their relations. The first is the one we have used so far, i.e., as arrangements of $n$ hyperplanes in ${\mathbb {RP}}^{k-1}$. The second one is as a configuration of $n$ points in ${\mathbb {RP}}^{k-1}$, i.e., the real model of the space over which the CEGM integral is defined. The third is as reorientation classes of realizable oriented matroids (or chirotopes) \cite{bjorner1999oriented}. Due to the realizability constraint, we can think of chirotopes simply as a vector containing the signs of all Plucker variables of the $n$ points in ${\mathbb {RP}}^{k-1}$.

\usetikzlibrary{angles,quotes}
 \usetikzlibrary{calc}	 
 
 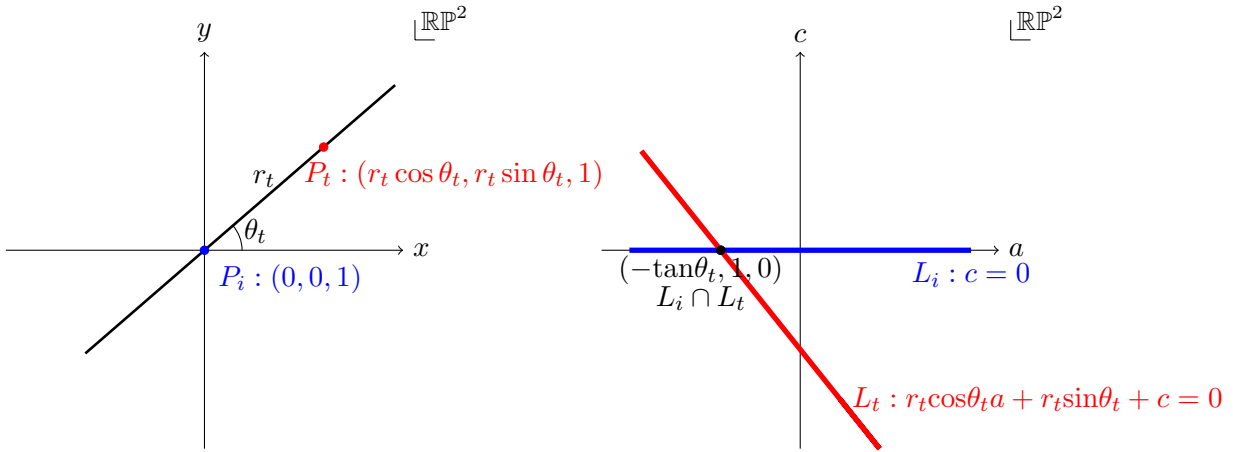
\begin{figure}[h!]

 	\centering
 	\hspace{0.1in}
  \begin{tikzpicture}[scale=.88]
 \begin{scope}[xshift=0cm,yshift=0cm, scale=0.6]
 \draw[->] (-5,0)--(5,0) node[right]{$x$};
 \draw[->] (0,-5)--(0,5) node[above]{$y$};

  \draw (5.3,5.3)--+(0,.5) (5.3,5.3)--+(0.5,0) ;
  \node at (6.05,5.85) {\small ${\mathbb {RP}}^2$};
 \draw (.02,0) coordinate (A) -- (0,0) coordinate (B)
          -- (3,2.6) coordinate (C) 
   pic ["$\theta_t$", draw,
    angle eccentricity=1.45] {angle} ;
     \draw [line width=1pt] ($ (B)! -1/1 ! (C) $) -- ($ (C)! -.6/1 ! (B) $) ;
    \filldraw [blue] (B) circle(3pt)  node[below right=1.5pt]{$P_i: (0,0,1)$};
     \filldraw [red] (C) circle(3pt)  node[below=1pt]{$\qquad \qquad \qquad \qquad \quad  P_t: ( r_t \cos \theta_t, r_t \sin \theta_t,1 )$}; 
 \node at ($ (B)! 1/2 ! (C) +(0,.5)$)  {$r_t$};
  \end{scope}
  \begin{scope}[xshift=9cm,yshift=0cm, scale=0.6]
 \draw[->] (-5,0)--(5,0) node[right]{$a$};
 \draw[->] (0,-5)--(0,5) node[above]{$c$};
   \draw (5.3,5.3)--+(0,.5) (5.3,5.3)--+(0.5,0) ;
  \node at (6.05,5.85) {\small ${\mathbb {RP}}^2$};
 \draw[blue, line width=2pt] (-4.3,0)--(4.3,0) node[below]{$L_i: c=0$};
  \filldraw (-2,0) coordinate (A) circle(3pt) ;
  \coordinate (B) at  (0,-2.5);
  \draw [red, line width=2pt] ($ (A)! -1/1 ! (B) $) -- ($ (B)! -1/1 ! (A) $) -- ($ (B)! -.5/1 ! (A) $) node [right] {$L_t:  r_t {\cos} \theta_t a +  r_t {\sin} \theta_t +c =0  $} ;
  \node at (-2.5,-.5) {$ (-{\tan}\theta_t,1,0)$};
   \filldraw (-2,0) circle(3pt) ;
  \node at (-2.5,-1.2) {$L_i\cap L_t$};
  \end{scope}
 \end{tikzpicture}	
 %
 	\caption{
  Left: A presentation of a configuration of points in $\mathbb{RP}^2$ by setting $z=1$.
 	  Right: A presentation of an arrangement of lines in ${\mathbb {RP}}^{2}$ by setting $b=1$.   A point $(x,y,z)\in \mathbb{RP}^2$ in one projective space is mapped to the line $L=\{ (a,b,c)\in \mathbb{RP}^2:\, ax+by+cz=0  
 	  \} $ in the other projective space and vice versa. More explicitly, the points $P_i, P_t$  are mapped to the lines $L_i, L_t$ in the arrangement of lines and the line $P_{i,t}$ is mapped to the point $L_i \cap L_t = (-{\tan}\theta_t,1,0)$, which only depends on the angle $\theta_t$. As one can imagine, as $\theta_t$ increases from 0 to $\pi$,  the point $L_i \cap L_t $ moves along axis $a$ from the origin  to $(-\infty, 0)$ and then comes back from  the other direction.
    }
 	\label{dualgraph}
 \end{figure}

 \def \sca {1.2}
 
 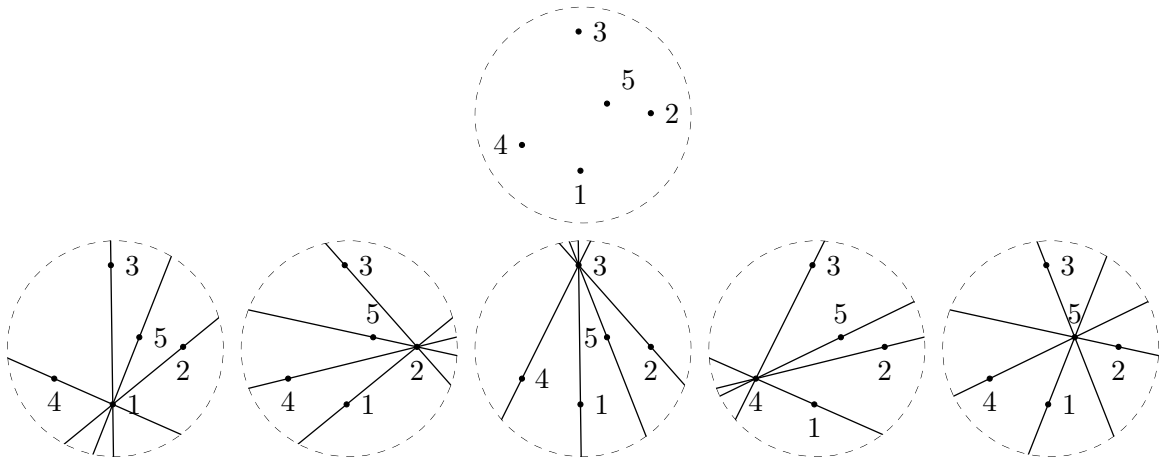
\begin{figure}

 \begin{tikzpicture}[scale=1]
   
   \node at (.4\linewidth, .2\linewidth) {
  \begin{tikzpicture}[scale=\sca ]
  
 \filldraw  (1.22418,0.582795)  coordinate (A1)  circle(.8 pt)  node[below =1.5pt]{$1$};
  
 \filldraw  (2.00383,1.22059)  coordinate (A2)  circle(.8pt)  node[ right=1.5pt]{$2$};
 
  \filldraw  (1.2035,2.12622)  coordinate (A3)  circle(.8pt)  node[ right=1.5pt]{$3$};
 
  \filldraw  (0.576346,0.868565)  coordinate (A4)  circle(.8pt)  node[left=1.5pt]{$4$};

    \filldraw  ($(intersection of  A1--A3 and A2--A4)+(.3,.3)$)  coordinate (A5)  circle(.8pt)  node[ above right =1.5pt]{$5$};
    
 \coordinate   (AA)  at ( $  (A1)+(A2)+(A3)+(A4)  $) ;
  \coordinate   (AA4)  at ($ (0,0)! 1/4 ! (AA) $)  ;
     \clip  (AA4) circle (1.2cm);
      \draw[dashed]  (AA4) circle (1.2cm); 
 
\end{tikzpicture} };

   \node at (0\linewidth, 0\linewidth) {
  \begin{tikzpicture}[scale=\sca]
 \filldraw  (1.22418,0.582795)  coordinate (A1)  circle(.8pt)  node[right =1.5pt]{$1$};
  
 \filldraw  (2.00383,1.22059)  coordinate (A2)  circle(.8pt)  node[ below=1.5pt]{$2$};
 
  \filldraw  (1.2035,2.12622)  coordinate (A3)  circle(.8pt)  node[ right=1.5pt]{$3$};
 
  \filldraw  (0.576346,0.868565)  coordinate (A4)  circle(.8pt)  node[below=1.5pt]{$4$};
  
    \filldraw  ($(intersection of  A1--A3 and A2--A4)+(.3,.3)$)  coordinate (A5)  circle(.8pt)  node[  right =1.5pt]{$5$};
    
        \begin{scope}
    
 \coordinate   (AA)  at ( $  (A1)+(A2)+(A3)+(A4)  $) ;
  \coordinate   (AA4)  at ($ (0,0)! 1/4 ! (AA) $)  ;
     \clip  (AA4) circle (1.2cm);
      \draw[dashed]  (AA4) circle (1.2cm); 
         
  \draw [line width=.5pt] ($ (A2)! -2/1 ! (A1) $) -- ($ (A1)! -2/1 ! (A2) $) ;   
   \draw [line width=.5pt] ($ (A3)! -2/1 ! (A1) $) -- ($ (A1)! -2/1 ! (A3) $) ;   
   \draw [line width=.5pt] ($ (A4)! -2/1 ! (A1) $) -- ($ (A1)! -2/1 ! (A4) $) ;   
   \draw [line width=.5pt] ($ (A5)! -2/1 ! (A1) $) -- ($ (A1)! -2/1 ! (A5) $) ;   
   \end{scope}           
\end{tikzpicture} };

   \node at (0.2\linewidth, 0\linewidth) {
  \begin{tikzpicture}[scale=\sca]
 \filldraw  (1.22418,0.582795)  coordinate (A1)  circle(.8pt)  node[right =1.5pt]{$1$};
  
 \filldraw  (2.00383,1.22059)  coordinate (A2)  circle(.8pt)  node[ below=1.5pt]{$2$};
 
  \filldraw  (1.2035,2.12622)  coordinate (A3)  circle(.8pt)  node[ right=1.5pt]{$3$};
 
  \filldraw  (0.576346,0.868565)  coordinate (A4)  circle(.8pt)  node[below=1.5pt]{$4$};
  
    \filldraw  ($(intersection of  A1--A3 and A2--A4)+(.3,.3)$)  coordinate (A5)  circle(.8pt)  node[  above =1.5pt]{$5$};
    
        \begin{scope}
    
 \coordinate   (AA)  at ( $  (A1)+(A2)+(A3)+(A4)  $) ;
  \coordinate   (AA4)  at ($ (0,0)! 1/4 ! (AA) $)  ;
     \clip  (AA4) circle (1.2cm);
      \draw[dashed]  (AA4) circle (1.2cm); 
      
  \draw [line width=.5pt] ($ (A1)! -4/1 ! (A2) $) -- ($ (A2)! -4/1 ! (A1) $) ;   
   \draw [line width=.5pt] ($ (A3)! -4/1 ! (A2) $) -- ($ (A2)! -4/1 ! (A3) $) ;   
   \draw [line width=.5pt] ($ (A4)! -4/1 ! (A2) $) -- ($ (A2)! -4/1 ! (A4) $) ;   
   \draw [line width=.5pt] ($ (A5)! -4/1 ! (A2) $) -- ($ (A2)! -4/1 ! (A5) $) ;   
              
\end{scope}

\end{tikzpicture} };

   \node at (0.4\linewidth, 0\linewidth) {
  \begin{tikzpicture}[scale=\sca]
  
 \filldraw  (1.22418,0.582795)  coordinate (A1)  circle(.8pt)  node[right =1.5pt]{$1$};
  
 \filldraw  (2.00383,1.22059)  coordinate (A2)  circle(.8pt)  node[ below=1.5pt]{$2$};
 
  \filldraw  (1.2035,2.12622)  coordinate (A3)  circle(.8pt)  node[ right=1.5pt]{$3$};
 
  \filldraw  (0.576346,0.868565)  coordinate (A4)  circle(.8pt)  node[right=1pt]{$4$};
  
    \filldraw  ($(intersection of  A1--A3 and A2--A4)+(.3,.3)$)  coordinate (A5)  circle(.8pt)  node[   left=-.5pt]{$5$};
    
        \begin{scope}

 \coordinate   (AA)  at ( $  (A1)+(A2)+(A3)+(A4)  $) ;
  \coordinate   (AA4)  at ($ (0,0)! 1/4 ! (AA) $)  ;
     \clip  (AA4) circle (1.2cm);
      \draw[dashed]  (AA4) circle (1.2cm); 
      
  \draw [line width=.5pt] ($ (A1)! -4/1 ! (A3) $) -- ($ (A3)! -4/1 ! (A1) $) ;   
   \draw [line width=.5pt] ($ (A2)! -4/1 ! (A3) $) -- ($ (A3)! -4/1 ! (A2) $) ;   
   \draw [line width=.5pt] ($ (A4)! -4/1 ! (A3) $) -- ($ (A3)! -4/1 ! (A4) $) ;   
   \draw [line width=.5pt] ($ (A5)! -4/1 ! (A3) $) -- ($ (A3)! -4/1 ! (A5) $) ;   
              
\end{scope}

\end{tikzpicture} };

   \node at (0.6\linewidth, 0\linewidth) {
  \begin{tikzpicture}[scale=\sca]
  
 \filldraw  (1.22418,0.582795)  coordinate (A1)  circle(.8pt)  node[below =1.5pt]{$1$};
  
 \filldraw  (2.00383,1.22059)  coordinate (A2)  circle(.8pt)  node[ below=1.5pt]{$2$};
 
  \filldraw  (1.2035,2.12622)  coordinate (A3)  circle(.8pt)  node[ right=1.5pt]{$3$};
 
  \filldraw  (0.576346,0.868565)  coordinate (A4)  circle(.8pt)  node[below =1.5pt]{$4$};
  
    \filldraw  ($(intersection of  A1--A3 and A2--A4)+(.3,.3)$)  coordinate (A5)  circle(.8pt)  node[  above=1.5pt]{$5$};
    
        \begin{scope}
    
 \coordinate   (AA)  at ( $  (A1)+(A2)+(A3)+(A4)  $) ;
  \coordinate   (AA4)  at ($ (0,0)! 1/4 ! (AA) $)  ;
     \clip  (AA4) circle (1.2cm);
      \draw[dashed]  (AA4) circle (1.2cm);

  \draw [line width=.5pt] ($ (A1)! -4/1 ! (A4) $) -- ($ (A4)! -4/1 ! (A1) $) ;   
   \draw [line width=.5pt] ($ (A2)! -4/1 ! (A4) $) -- ($ (A4)! -4/1 ! (A2) $) ;   
   \draw [line width=.5pt] ($ (A3)! -4/1 ! (A4) $) -- ($ (A4)! -4/1 ! (A3) $) ;   
   \draw [line width=.5pt] ($ (A5)! -4/1 ! (A4) $) -- ($ (A4)! -4/1 ! (A5) $) ;   
              
\end{scope}

\end{tikzpicture} };

   \node at (0.8\linewidth, 0\linewidth) {
  \begin{tikzpicture}[scale=\sca]
  
 \filldraw  (1.22418,0.582795)  coordinate (A1)  circle(.8pt)  node[ right =1.5pt]{$1$};
  
 \filldraw  (2.00383,1.22059)  coordinate (A2)  circle(.8pt)  node[ below=1.5pt]{$2$};
 
  \filldraw  (1.2035,2.12622)  coordinate (A3)  circle(.8pt)  node[ right=1.5pt]{$3$};
 
  \filldraw  (0.576346,0.868565)  coordinate (A4)  circle(.8pt)  node[below =1.5pt]{$4$};
  
    \filldraw  ($(intersection of  A1--A3 and A2--A4)+(.3,.3)$)  coordinate (A5)  circle(.8pt)  node[  above=1.5pt]{$5$};
    
        \begin{scope}
    
 \coordinate   (AA)  at ( $  (A1)+(A2)+(A3)+(A4)  $) ;
  \coordinate   (AA4)  at ($ (0,0)! 1/4 ! (AA) $)  ;
     \clip  (AA4) circle (1.2cm);
      \draw[dashed]  (AA4) circle (1.2cm);

  \draw [line width=.5pt] ($ (A1)! -4/1 ! (A5) $) -- ($ (A5)! -4/1 ! (A1) $) ;   
   \draw [line width=.5pt] ($ (A2)! -4/1 ! (A5) $) -- ($ (A5)! -4/1 ! (A2) $) ;   
   \draw [line width=.5pt] ($ (A3)! -4/1 ! (A5) $) -- ($ (A5)! -4/1 ! (A3) $) ;   
   \draw [line width=.5pt] ($ (A4)! -4/1 ! (A5) $) -- ($ (A5)! -4/1 ! (A4) $) ;   
              
\end{scope}

\end{tikzpicture} };

\end{tikzpicture}

\caption{Top: A configuration of five points in ${\mathbb {RP}}^{2}$. Bottom: For every point, draw a line crossing that point and every one else. This way, we can read the GCO just from the configuration of points. For example, in the last graph, imagining that we start with the line crossing point $P_1$ and $P_5$, rotate it around $P_5$, it will cross $P_3$,$P_2$ and $P_4$ consecutively until it crosses point $P_1$ again. According to the argument in figure \ref{dualgraph}, in the dual graph, i.e., the arrangement of lines, there will be intersection points, $L_5\cap L_1, L_5\cap L_3,L_5\cap L_2,L_5\cap L_4$ consecutively on the line $L_5$, resulting in a standard color ordering $(1324)$ there. So do other graphs in the bottom of this figure, resulting in a $(3,5)$ GCO $((2435), (1354), (1425),(1253), (1324))$, which is dual to $(13524)$. Compared with \cref{PartiID35}, the present $P_5$ is put in the chamber $C_5$ there  and the present resulting GCO is consistent with the Parke-Taylor factor associated with  $C_5$ given in (\ref{12PT5pt}).
}
	\label{fiveconfiguration}

\end{figure}

 \usetikzlibrary{angles,quotes}
 \usetikzlibrary{calc}	 
 
 \begin{figure}[h!]

 	\centering
 	\hspace{0.1in}
  \begin{tikzpicture}[scale=.88]
 \begin{scope}[xshift=0cm,yshift=0cm, scale=0.6]
 \draw[->] (-5,0)--(5,0) node[right]{$x$};
 \draw[->] (0,-5)--(0,5) node[above]{$y$};
   \draw (5.3,5.3)--+(0,.5) (5.3,5.3)--+(0.5,0) ;
  \node at (6.40,5.85) {\small ${\mathbb {RP}}^{k-1}$};
 \draw (.02,0) coordinate (A) -- (0,0) coordinate (B)
          -- (3,2.6) coordinate (C) 
   pic ["$\theta_t$", draw,
    angle eccentricity=1.45] {angle} ;
     \draw [line width=1pt] ($ (B)! -1/1 ! (C) $) -- ($ (C)! -.6/1 ! (B) $) ;
    \filldraw [blue] (B) circle(3pt)  node[below right=1.5pt]{$P_{i_1,i_2,\cdots,i_{k-2}}$};
 \filldraw [blue] (B) circle(3pt)  node[below right=15.5pt]{$\!\!\!\!\!\!\! x=0 \cap y=0 $};
     \filldraw [red] (C) circle(3pt)  node[below=1pt]{$\qquad \qquad \qquad \qquad \quad  P_t: ( r_t \cos \theta_t, r_t \sin \theta_t,1, $}
     node[below=15pt]{$\qquad \qquad \qquad \qquad \qquad\qquad  x_4,x_5,\cdots,x_k )$};
   %
 %
  \end{scope}
  \begin{scope}[xshift=9cm,yshift=0cm, scale=0.6]
 \draw[->] (-5,0)--(5,0) node[right]{$a$};
 \draw[->] (0,-5)--(0,5) node[above]{$c$};
   \draw (5.3,5.3)--+(0,.5) (5.3,5.3)--+(0.5,0) ;
  \node at (6.40,5.85) {\small ${\mathbb {RP}}^{k-1}$};
 \draw[blue, line width=2pt] (-4.3,0)--(4.3,0) node[below]{$H_{i_1}\cap H_{i_2}  \cdots \cap H_{i_{k-2}} $} 
 node[below=14pt]{$ c=0 \cap_{j=4}^{k} a_j=0$};
  \filldraw (-2,0) coordinate (A) circle(3pt) ;
  \coordinate (B) at  (0,-2.5);
  \draw [red, line width=2pt] ($ (A)! -1/1 ! (B) $) -- ($ (B)! -1/1 ! (A) $) -- ($ (B)! -.5/1 ! (A) $) node [right] {$H_t:  r_t {\cos} \theta_t a +  r_t {\sin} \theta_t +c $}
  node [below right=6pt] {$\qquad\qquad + \sum_{j=4}^k x_j a_j =0  $};
  \node at (-3.2,-.6) {$ (-{\tan}\theta_t,1,0,\cdots,0)$};
   \filldraw (-2,0) circle(3pt) ;
  \node at (-3.2,-1.6) {$H_t\cap H_{i_1} \cdots \cap H_{i_{k-2}} $};
  \end{scope}
 \end{tikzpicture}	
 %
 	\caption{
  Left: A presentation of a configuration of points in $\mathbb{RP}^{k-1}$ by setting $z=1$.
 	  Right: A presentation of an arrangement of  hyperplanes in ${\mathbb {RP}}^{k-1}$ by setting $b=1$.   A point $(x,y,z,x_4,x_5,\cdots,x_k)\in \mathbb{RP}^{k-1}$ in one projective space is mapped to the codim-1 hyperplane $H=\{ (a,b,c,a_4,a_5,\cdots,a_k)\in \mathbb{RP}^{k-1}:\, ax+by+cz+\sum_{j=4}^k a_j x_j=0  
 	   \}$ in the other projective space and vice versa. More explicitly, the unique codim-2 hyperplane denoted as  $P_{i_1,i_2,\cdots,i_{k-2}}$ that crosses points $P_{i_1},P_{i_2},\cdots, P_{i_{k-2}}$ is mapped to the line $H_{i_1}\cap H_{i_1}  \cdots \cap H_{i_{k-2}} $  in the arrangement of hyperplanes and the codim-1 hyperplane  $P_{t,i_1,i_2,\cdots,i_{k-2}}$ is mapped to the point $H_t\cap H_{i_1}  \cdots \cap H_{i_{k-2}} $.
   The  coordinate system is chosen such that the codim-2 hyperplane $P_{i_1,i_2,\cdots,i_{k-2}}$ is described by $\{ (x,y,z,x_4,x_5,\cdots,x_k)\in \mathbb{RP}^{k-1}:\, x=0 \,\& \, y=0 
 	   \}$ but different choices of 
     coordinate system will not affect the intersection relations of hyperplanes in the dual space. Under the current coordinate system, we see the position of the point $H_t\cap H_{i_1}  \cdots \cap H_{i_{k-2}} $ only depends on the angle $\theta_t$ and 
      one can imagine that as $\theta_t$ increases from 0 to $\pi$,  the point $H_t\cap H_{i_1}  \cdots \cap H_{i_{k-2}} $ moves along axis $a$ from the origin  to $(-\infty, 0)$ and then comes back from  the other direction. 
      }
 	\label{dualgraphhigherk}
 \end{figure}
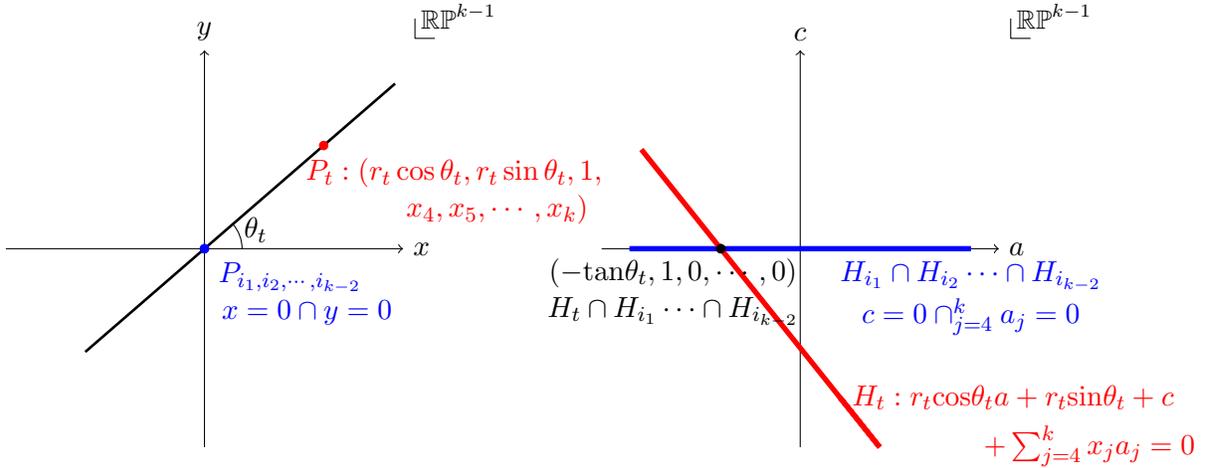

\subsection{Reading a GCO from a Configuration of Points}

In \cref{sec4}, we found that relating configurations of points and GCOs was very useful in geometrizing the decoupling identities. Here we explain the connection in detail 
and show how to read GCOs directly from the configuration of points. This will allow us to find a geometric understanding of the ``basic" decoupling identities explained in the next section which are the natural analog of the $U(1)$ decoupling identities.

Let us start with $k=3$. A $(3,n)$ GCO, which is encoded in the intersection pattern of lines $\{L_1,L_2\ldots ,L_n\}$ in $\mathbb{RP}^2$, can also be represented in the dual configuration given by points $\{ P_1,P_2,\ldots ,P_n \}$ in $\mathbb{RP}^2$ using the following procedure. For each point, $P_i$, draw lines connecting it to other $n-1$ points. Starting on any of the lines, circle around $P_i$ recording the labels of the lines as they are crossed. The resulting list  is the $(2,n-1)$ color ordering in the $i^{\rm th}$ entry of the GCO.

Let us denote the resulting list as $\tilde\sigma^{(i)}$ while the one on the line $L_i$ as  $\sigma^{(i)}$.
To prove $\tilde\sigma^{(i)}=\sigma^{(i)}$, one just needs to make use of the explicit duality relation which maps a point $(a,b,c)\in \mathbb{RP}^2$ to the line $L=\{ (x,y,z)\in \mathbb{RP}^2:\, ax+by+cz=0 \}$. As shown in \cref{dualgraph}, the point $P_i$ is mapped to $L_i$ and the line $P_{i,t}$ that crosses $P_i$ and $P_t$ is mapped to the point $L_i\cap L_t$ whose position only depends on $\theta_t$.  In a configuration of $n$ points in $X(3,n)$, $P_t$ could be any points except $P_i$. As we read the lines $P_{i,t}$ clockwise or anti-clockwise, it gives the ordering $\tilde\sigma^{(i)}$; in the meantime, their dual points $L_i\cap L_t$  sit consecutively on the line $L_i$. While the ordering of the latter is just $\sigma^{(i)}$.
Hence, $\tilde\sigma^{(i)}=\sigma^{(i)}$.

Reading the $(2,n-1)$ color ordering for every point $P_i$ with $i\in [n]$, we get the whole $(3,n)$ GCO for the configuration of points in $\mathbb{RP}^{2}$. See an example in \cref{fiveconfiguration}.

The generalization to any $k\geq 3$ is straightforward.
\begin{prop}\label{readingGCO}
Given a configuration of $n$ points in $\mathbb{RP}^{k-1}$, 
for any $k-2$ points  $\{P_{i_1},P_{i_2},$ $\cdots,$ $P_{i_{k-2}}\}$, find the unique codim-2 hyperplane denoted as $P_{i_1,\cdots, i_{k-2}}$ that crosses all of the $k-2$ points.  Start with any codim-1 hyper-plane that crosses the  codim-2 one and rotate it around the codim-2 one, it will cross the remaining points consecutively, resulting in a list of $n-k+2$ labels.  As proved in \cref{dualgraphhigherk}, the resulting list is the standard color ordering in the entry of the GCO with labels $\{i_1,\cdots, i_{k-2}\}$ removed. This way, we get a $(k,n)$ GCO directly from  the configuration of $n$ points in ${\mathbb {RP}}^{k-1}$.
\end{prop}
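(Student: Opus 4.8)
The plan is to derive the proposition from the hyperplane-arrangement definition of a GCO (\cref{GCOkkkk}) by means of projective duality, which is precisely the picture illustrated in \cref{dualgraph} and \cref{dualgraphhigherk}. Fix the standard duality involution $\delta$ sending a point $p=[p_1:\cdots:p_k]\in\mathbb{RP}^{k-1}$ to the hyperplane $\delta(p)=\{[a_1:\cdots:a_k]:\sum_\ell a_\ell p_\ell=0\}$, and conversely a hyperplane to its pole. Applying $\delta$ to the configuration $\{P_1,\dots,P_n\}$ produces an arrangement $\{H_1,\dots,H_n\}$ of $n$ hyperplanes with $H_i=\delta(P_i)$; since $\delta$ is a projective isomorphism it carries points in general position to hyperplanes in general position, so $\{H_1,\dots,H_n\}$ is an arrangement of the type required by \cref{GCOkkkk} and defines a $(k,n)$ GCO, say $\Sigma^{[k]}=\{\sigma^{(i_1,\dots,i_{k-2})}\}$. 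It then suffices to show that the list read off by rotating around $P_{i_1,\dots,i_{k-2}}$ coincides with $\sigma^{(i_1,\dots,i_{k-2})}$ for every $(k-2)$-subset of $[n]$.

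The next step would be to record the incidence dictionary produced by $\delta$. By general position the $k-2$ points $P_{i_1},\dots,P_{i_{k-2}}$ span a unique $(k-3)$-plane, which is exactly the codimension-$2$ subspace $P_{i_1,\dots,i_{k-2}}$, and for each remaining label $t$ the point $P_t$ is not contained in it, so $P_t$ together with $P_{i_1,\dots,i_{k-2}}$ spans a unique hyperplane $P_{t,i_1,\dots,i_{k-2}}$. Since $\delta$ exchanges spans of points with intersections of hyperplanes, one gets $\delta(P_{i_1,\dots,i_{k-2}})=H_{i_1}\cap\cdots\cap H_{i_{k-2}}=:L^{(i_1,\dots,i_{k-2})}$ (a line, the $H$'s being generic) and $\delta(P_{t,i_1,\dots,i_{k-2}})=H_t\cap L^{(i_1,\dots,i_{k-2})}$, a point of that line. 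Thus the pencil of hyperplanes through $P_{i_1,\dots,i_{k-2}}$ is dual to the line $L^{(i_1,\dots,i_{k-2})}$, and the $n-k+2$ hyperplanes $P_{t,i_1,\dots,i_{k-2}}$ are dual, bijectively, to the $n-k+2$ marked points $H_t\cap L^{(i_1,\dots,i_{k-2})}$ entering \cref{GCOkkkk}.

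The core of the proof is to check that a hyperplane rotated around $P_{i_1,\dots,i_{k-2}}$ sweeps the points $P_t$ in the same cyclic order in which the dual points $H_t\cap L^{(i_1,\dots,i_{k-2})}$ lie along $L^{(i_1,\dots,i_{k-2})}$. Following \cref{dualgraphhigherk}, I would choose affine coordinates so that $P_{i_1,\dots,i_{k-2}}=\{x=y=0\}$; then every hyperplane of the pencil is labelled by a single angle $\theta\in[0,\pi)$, the ``rotation'' corresponds to letting $\theta$ increase from $0$ to $\pi$ (reaching each hyperplane of the pencil once, so by general position the $P_t$ are crossed one at a time), and the dual point $H_t\cap L^{(i_1,\dots,i_{k-2})}$ has position $[-\tan\theta_t:1:0:\cdots:0]$ on $L^{(i_1,\dots,i_{k-2})}$. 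As $\theta$ ranges over $[0,\pi)$ this coordinate traverses $\mathbb{RP}^1$ monotonically and bijectively, so the two cyclic orders agree; hence the rotated list is the $(2,n-k+2)$ color ordering of the marked points on $L^{(i_1,\dots,i_{k-2})}$, which by \cref{GCOkkkk} is exactly $\sigma^{(i_1,\dots,i_{k-2})}$. Carrying this out for all $(k-2)$-subsets recovers $\Sigma^{[k]}$ from the point configuration, which proves the claim. The one delicate point --- and the step I would spell out most carefully --- is precisely this monotone, surjective correspondence between the angle parametrizing the pencil and the position on the dual line (equivalently, that a half-turn of the pencil is one full pass along $\mathbb{RP}^1$); the remaining ingredients --- that $\delta$ preserves general position, that spans dualize to intersections, and that no $P_t$ lies on $P_{i_1,\dots,i_{k-2}}$ --- are routine.
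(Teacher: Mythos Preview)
Your proposal is correct and follows essentially the same argument as the paper: set up projective duality so that the codim-2 flat $P_{i_1,\dots,i_{k-2}}$ dualizes to the line $L^{(i_1,\dots,i_{k-2})}=H_{i_1}\cap\cdots\cap H_{i_{k-2}}$, choose coordinates with $P_{i_1,\dots,i_{k-2}}=\{x=y=0\}$, parametrize the pencil by the angle $\theta_t$, and observe that the dual point on the line has coordinate $-\tan\theta_t$, so a half-turn of the pencil traverses $\mathbb{RP}^1$ monotonically. If anything, your write-up is more explicit than the paper's (which relegates most of the argument to the captions of \cref{dualgraph} and \cref{dualgraphhigherk}), but the content is the same.
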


\subsection{Construction of Chirotopes from GCOs \label{gcotochirotope}}

Denote the sign of a Plucker variable at a generic point in the Grassmannian $G(k,n)$ by
\be 
\chi_{i_1,i_2,\cdots, i_k} \equiv  {\rm sgn} \,\Delta_{i_1,i_2,\cdots, i_k},\qquad \forall \, \{i_1,i_2,\cdots, i_k\} \subset [n].
\ee
We shall assume that all Plucker coordinates are nonzero, in which case we obtain a point $(\chi) \in \{\pm1\}^{\binom{n}{k}}$
known as a (realizable) simplicial chirotope, or equivalently an oriented uniform matroid \cite{bjorner1999oriented}. 

The positive Grassmannian is the subset of the real Grassmannian where all Plucker coordinates are positive, $\Delta_{j_1,\ldots, j_k} > 0$.  
In \cite{Arkani-Hamed:2012zlh,Arkani-Hamed:2009ljj}, it is argued that the Parke-Taylor factor $\text{PT}^{(k)}(\mathbb{I})$ in Equation \eqref{PTform}, multiplied by the canonical measure of the positive Grassmannian is its canonical form, while the Parke-Taylor factor itself is sometimes called the canonical function.  See \cite{Arkani-Hamed:2017tmz} for details about positive geometry and canonical forms \footnote{While the general definition of canonical form is defined in \cite{Arkani-Hamed:2017tmz}, here we give some intuition. The simplest canonical form is the one for an interval $x\in [0,1]$ given by $\frac{d x}{x(1-x)}$ which has two unit residues at the boundaries of the interval. In general, the canonical form of a positive geometry is required to have a residue on each stratum that is again a canonical form, for the new semi-algebraic set.}.

Now, given any generic point in the real Grassmannian, after modding out by the torus $(\mathbb{R}_{\not=0})^n$ we can fix an affine chart such that a certain collection of $n$ Plucker coordinates are all positive. Then
the hypersurfaces $\Delta_{i_1,i_2,\cdots, i_k}=0$ in the real Grassmannian induce a decomposition of the configuration space $X(k,n)$ into chambers.  These open chambers are characterized by so-called reorientation classes of simplicial chirotopes, that is, a vector of signs $\pm1$ of length $\binom{n}{k}$, modulo the scaling action of the torus $(\mathbb{Z}\slash 2)^n$.  Among all simplicial chirotopes, one has been particularly well-studied: the chirotope with all positive signs $\chi_J=1$.  

As explained in \cite[eq 6.8]{Arkani-Hamed:2019mrd}, modding out by the torus $T^+ = (\mathbb{R}_{>0})^n$, one obtains the positive configuration space $X^+(k,n) = G^+(k,n)/T^+$; moreover, the PT function can also be used to compute the canonical form of $G^+(k,n)/T^+$. In the same way, it was proposed in \cite{unpublishedYong} that the 372 (3,6) forms provide an analog for spaces that are supposed to be closely related to the chambers just mentioned where the Plucker variables are allowed to have other signs.

Let us now formulate a dictionary between GCOs and chirotopes\footnote{Compare to for example \cite{bokowski2001folkman} for a rank three construction.}.  More precisely, for each GCO we construct an equivalence class of chirotopes modulo the torus action\footnote{According to the torus action,   we can multiply the $i^{\rm th}$ column of the $M$ matrix shown in \eqref{cegmPot} by $-1$, which turn every $\chi_{ib\dots c}$ of a chirotope to be its opposite. We say the resulting chirotope and the original one are in the equivalent class. }
: we select a representative chirotope by fixing a projective frame.  The nontrivial direction is to map GCO to chirotope; we state the full solution.

Fix a GCO $\Sigma = (\sigma^{(12\cdots k-2)},\ldots,\sigma^{(n-k+3,\dots,n)})$ as in \eqref{defineGCO}, where $\sigma^{(L)}$ is a cyclic order on $\{1,\ldots, n\}\setminus L$ with $L = \{\ell_1,\ldots, \ell_{k-1}\}$.  Here each element $(\chi_{J}) \in \{\pm1\}^{\binom{n}{k}}$ is antisymmetric in its indices.  For example, when $k=3$ we have $\chi_{abc} = -\chi_{bac} = \chi_{bca}$.  We first fix a frame by putting $\chi_{I} = 1$ for all $ I = \{i_1,\ldots, i_k\} \subset \lbrack 1,k+1\rbrack$ and $\chi_{1,2,\cdots,k-1,j}=1$ for $j\in [k+2,n]$.  We have only to solve a system of $\binom{n}{k-2}\cdot \left(\binom{n-(k-2)}{k-1} - (n-(k-2)) \right)$ monomial equations in $\binom{n}{k}-n$ unknowns.  For concreteness and to simplify notation, let us specialize to the case $k=3$.

Then we have a system of $n\left(\binom{n-1}{2}-(n-1)\right)$ monomial equations in $\binom{n}{3}-n$ unknowns,
	\begin{eqnarray}\label{eqn: crossRatios GCO to Chirotope}
		\frac{\chi_{\ell j_1j_4}\chi_{\ell j_2,j_3}}{\chi_{\ell j_1j_3}\chi_{\ell j_2j_4}} & = & 1
	\end{eqnarray}	
	for each $\ell=1,\ldots, n$ and each cyclically-ordered 4-element subset $(j_1,j_2,j_3,j_4)$ of $\{1,\ldots, n\}\setminus \{\ell\}$ of the form
	$$\sigma^{(\ell)} = (\ldots, j_1,j_2,\ldots, j_3,j_4,\ldots).$$

	In general, given a pseudo GCO $\Sigma$, then $\chi_\Sigma$ can still be found but it may be non-realizable, in the sense that there would not exist a point in the Grassmannian whose Plucker coordinates $\Delta_{abc}$ have sign $\chi_{abc}$.  

For example, using \eqref{eqn: crossRatios GCO to Chirotope}, we find that the four types of $(3,6)$ GCOs, with representatives  
 given in \eqref{type036GCO}-\eqref{typeIII36GCO}, that is 
 \begin{align*}
& \Sigma_{0}   =((2 3 4 5 6),(1 3 4 5 6),(1 2 4 5 6),(1 2 3 5 6),(1 2 3 4 6),(1 2 3 4 5)),  \\
&\Sigma_{I}   = ((2 5 4 3 6),(1 5 4 3 6),(1 2 4 5 6),(1 2 3 5 6),(1 2 3 4 6),(1 2 5 4 3)),\\
&\Sigma_{II}   =((2  3465 ),(1 3465 ),(1 2 4 5 6),(1 2 3 5 6),(1 2 6 3 4),(1 2 5 3 4)), \\
& \Sigma_{III}   =((2  3645 ),(1 3465 ),(1 2 4 5 6),(1 5 3 2 6),(1 2 6 3 4),(1 3524 )),
\end{align*}
evaluate to the four rows of the matrix
\begin{align}
\begin{blockarray}{ccccccccccccccccccccc}
\matindex{1} &\matindex{1} &\matindex{1} &\matindex{1} &\matindex{1} &\matindex{1} &\matindex{1} &\matindex{1} &\matindex{1} &\matindex{1} &\matindex{2} &\matindex{2} &\matindex{2} &\matindex{2} &\matindex{2} &\matindex{2} &\matindex{3} &\matindex{3} &\matindex{3} &\matindex{4} &\\[-2mm]
\matindex{2} &\matindex{2} &\matindex{2} &\matindex{2} &\matindex{3} &\matindex{3} &\matindex{3} &\matindex{4} &\matindex{4} &\matindex{5} &\matindex{3} &\matindex{3} &\matindex{3} &\matindex{4} &\matindex{4} &\matindex{5} &\matindex{4} &\matindex{4} &\matindex{5} &\matindex{5} &\\[-2mm]
\matindex{3} &\matindex{4} &\matindex{5} &\matindex{6} &\matindex{4} &\matindex{5} &\matindex{6} &\matindex{5} &\matindex{6} &\matindex{6} &\matindex{4} &\matindex{5} &\matindex{6} &\matindex{5} &\matindex{6} &\matindex{6} &\matindex{5} &\matindex{6} &\matindex{6} &\matindex{6} & \\
    \begin{block}{(cccccccccccccccccccc)c}
 1 & 1 & 1 & 1 & 1 & 1 & 1 & 1 & 1 & 1 & 1 & 1 & 1 & 1 & 1 & 1 & 1 & 1 & 1 & 1 & \\
  1 & 1 & 1 & -1 & 1 & 1 & 1 & 1 & 1 & 1 & 1 & 1 & 1 & 1 & 1 & 1 & 1 & 1 & 1 & 1 \\
 1 & 1 & 1 & 1 & 1 & 1 & 1 & 1 & 1 & -1 & 1 & 1 & 1 & 1 & 1 & -1 & 1 & 1 & 1 & 1& \\
 1 & 1 & 1 & 1 & 1 & 1 & 1 & 1 & -1 & -1 & 1 & 1 & 1 & 1 & 1 & -1 & 1 & 1 & 1 & 1 &\\
    \end{block}
  \end{blockarray}
\end{align}
respectively.\footnote{
We emphasize here that for clarity we have used the torus action to flip the sign of each $\chi_{ij6}$, in order to bring the chirotope obtained by solving equations \eqref{eqn: crossRatios GCO to Chirotope},  $\{1 , 1 , 1 , 1 , 1 , 1 , -1 , 1 , -1 , -1 , 1 , 1 , -1 , 1 , -1 , -1 , 1 , -1 , -1 , -1 \}$, to a particularly simple form.

The four resulting  chirotopes are consistent with those in the literature after relabeling of coordinates. See \url{https://finschi.com/math/om/?p=catom&card=6&rank=3&filter=nondeg} for an example.}
The entries of each row are lexicographically ordered $(\chi _{123},\cdots, \chi _{456})$ as shown above the matrix.


We comment that the type 0 and I GCOs \eqref{type036GCO} and \eqref{typeI36GCO} are related by a triangle flip via $\{1,2,6\}$, as shown in their arrangements of lines in \cref{triangles36} or in the configurations of points in ${\mathbb {RP}}^2$ in \cref{P36point1Text} and here we see their chirotopes are related by flipping $\chi_{126}$.

The explicit dictionaries between GCOs and chirotopes
for $(3,7),(3,8), (3,9)$ are put in the ancillary file.

\section{Double Extensions and Fundamental Identities  \label{ref4B}  }

According to Definition \ref{k3decoupling}, the notion of decoupling for general $k$ given in \cite{Cachazo:2022pnx} is closely related to the projection operator defined in \eqref{projectionkkkk}, $\pi_i:CO_{k,n}\to CO_{k,n-1}$. 

GCOs are grouped together according to the result from this projection. In other  words, one defines equivalence classes 
\begin{align}
\label{decouplingset}
[ \Sigma^{[k]}]_i  :=  \{
    {\tilde\Sigma}^{[k]} \in CO_{k,n} | 
    \pi_i\big(
   { \tilde\Sigma}^{[k]}
    \big)   = \pi_i\big(
  \Sigma^{[k]}
    \big)   \}\,,
    \qquad \forall\,  {\Sigma}^{[k]} \in CO_{k,n}\, {\rm and}~ i\in [n].
\end{align}

The GCOs in each equivalence class
produce identities that are in general reducible, i.e. the set of integrands associated with the GCOs satisfy more than one linear relation. There are two ways to think about this.  Either the correct analog of $U(1)$ decoupling for $k\ge 3$ would necessarily yield reducible identities, or else there is a finer, deeper structure, which selects more elementary collections of GCOs whose integrands satisfy a unique relation. 

Clearly, one would like to find a deeper structure that results in irreducible identities; therefore we need to refine Definition \ref{k3decoupling}. In section \ref{sec4d1}, we discussed techniques for constructing irreducible identities for $k=3$ and found two classes. In this section, we show that the non-partitioning identities can be obtained using such a finer notion of decoupling. 

Our proposal for the refinement involves constraints imposed simultaneously on $CO_{k,n}$ and its dual $CO_{n-k,n}$; consequently, any identities or associated algebraic structures that may arise would be manifestly compatible with the duality on configuration spaces $X(k,n)\leftrightarrow X(n-k,n)$.   Actually, we are going to define a GCO set by using the intersection of one decoupling set in   $X(k,n)$ and the dual of the other in $X(n-k,n)$  obtained by decoupling different particles. 

According to \eqref{component}, every GCO $\Sigma^{[k]}$ of type $(k,n)$ is a set of $n$ GCOs $\Sigma^{(i),[k-1]}$ of type $(k-1,n-1)$. 
 We say that $\pi_i\big(\Sigma^{[k]}\big)$ is the $k$-\textit{preserving} projection of $\Sigma^{[k]}$ in the direction $i$.  On the level of the configuration space, the $k$-preserving projection in the $j^\text{th}$ direction acts by 
$$(H_1,\ldots, H_n) \mapsto (H_1,\ldots, \widehat{H_j},\ldots, H_n).$$
On the contrary, we denote by $\pi_{(j)}$ the $k$-\textit{decreasing} projection 
\be
\label{componentpro}
\pi_{(j)}\big(\Sigma^{[k]}\big)\equiv \Sigma^{(j),[k-1]} =  \{ \sigma^{(j,i_2,\cdots,i_{k-2})}| \{i_2,\cdots, i_{k-2}\} \subset [n] \setminus \{j\} \} . 
\ee
As above, again on the level of the configuration space, the $k$-\textit{decreasing} projection in the $j^\text{th}$ direction acts by 
$$(H_1,\ldots, H_n) \mapsto (H_1\cap H_j,\ldots, \widehat{H_j\cap H_j},\ldots, H_n\cap H_j\}.$$

 As explained in \cite{Cachazo:2022pnx}, the  dual of a $k$-decreasing projection of a GCO is a $k$-preserving projection of the dual GCO.

Now we prepare to introduce double extensions and fundamental decoupling identities.

\begin{defn}
\label{tripleT}
    A triple $\mathcal{T} = (\Sigma_0,\Sigma_1,\Sigma_2)$, with  
    	$$\Sigma_0 \in CO_{k-1,n-2},\ \Sigma_1 \in CO_{k-1,n-1},\ \Sigma_2 \in CO_{k,n-1},$$
     is said to be fundamental if $\pi_i(\Sigma_1) = \pi_{(j)}(\Sigma_2) = \Sigma_0$.

     Further, a GCO $\Sigma \in CO_{k,n}$ is said to be a double $\mathcal{T}$-extension of $\Sigma_0$ provided that $\pi_{(i)}(\Sigma) = \Sigma_1$ and $\pi_{j}(\Sigma) = \Sigma_2$ for some $i\not=j$.
\end{defn}


\begin{conjecture}\label{conjecture: fundamental decouplings}
		Fix a fundamental triple $\mathcal{T} = (\Sigma_0,\Sigma_1,\Sigma_2)$.  Then, as $\Sigma \in CO(k,n)$ ranges over all double $\mathcal{T}$-extensions $\Sigma$, the CEGM integrands $\mathcal{I}(\Sigma)$ satisfy a unique linear identity with all coefficients $\pm1$.  We call such a relation among integrands a \textit{fundamental decoupling identity}. 
\end{conjecture}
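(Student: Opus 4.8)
The plan is to establish \cref{conjecture: fundamental decouplings} by induction on $k+n$. The base cases are $k=2$ --- where a double $\mathcal{T}$-extension of a triple with $\Sigma_0,\Sigma_1$ trivial is simply an element of the decoupling set $\pi_j^{-1}(\Sigma_2)$, so that the statement is exactly the $U(1)$ identity proved in \cref{sec4d1} --- and $n\le k+2$, where every GCO is a descendant of a $(2,n)$ ordering and the relation can be checked directly from the Plücker algebra. For the inductive step, the first task is to describe the family of double $\mathcal{T}$-extensions geometrically. Fixing the $k$-preserving projection $\pi_j(\Sigma)=\Sigma_2$ realizes each double extension as the result of inserting the hyperplane $H_j$ --- dually, the point $P_j$, via the dictionary of \cref{sec:translation} --- into a fixed arrangement of type $\Sigma_2$; imposing in addition $\pi_{(i)}(\Sigma)=\Sigma_1$ confines $P_j$ to a region $R\subset\mathbb{RP}^{k-1}$ whose walls are exactly the Plücker hypersurfaces $\Delta_{i,j,b_3,\ldots,b_k}=0$ involving both labels $i$ and $j$ (these are the only hypersurfaces $P_j$ can cross that change the $k$-decreasing projection in the direction $i$). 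Thus, conjecturally, the double $\mathcal{T}$-extensions are precisely the chambers of $R$, and the simplex-flip graph on them coincides with the adjacency graph of those chambers; one also needs this graph to be connected and simply connected.

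Granting the geometric picture, set $F:=\sum_{\Sigma}\epsilon_\Sigma\,\mathcal{I}(\Sigma)$, the sum over all double $\mathcal{T}$-extensions, with signs $\epsilon_\Sigma=\pm1$ fixed recursively along a spanning tree of the flip graph so that the residues of $\epsilon_\Sigma\mathcal{I}(\Sigma)$ and $\epsilon_{\Sigma'}\mathcal{I}(\Sigma')$ at a shared pole always cancel --- global consistency of this choice being the simply-connectedness of $R$. Regard $F$ as a rational function of $P_j$ with the remaining $n-1$ points in a generic configuration of type $\Sigma_2$; since the identity to be proved is one of rational functions on $X(k,n)$, it suffices to show $F\equiv0$ in this slice. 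The candidate poles of $F$ in $P_j$ are the hypersurfaces $\Delta_{j,a_2\ldots a_k}=0$ dual to simplices of the various $\Sigma$ that contain $j$. If such a simplex is interior to $R$ --- a flip across it lands on another double extension --- its two neighbours carry the pole and their residues cancel by the residue-matching property used to build integrands in \cref{sec3}. If instead the simplex is a wall $\{i,j,b_3,\ldots,b_k\}$, only the double extensions on the boundary of $R$ touching that wall contribute; taking the residue forces $P_j$ onto $\mathrm{span}(P_i,P_{b_3},\ldots,P_{b_k})$, which already contains $P_i$, and --- by the factorization behaviour of CEGM integrands on a Plücker boundary --- $\mathrm{Res}_{\Delta_{ij b_3\ldots b_k}=0}F$ becomes, up to a common nonvanishing factor, the left-hand side of a fundamental decoupling identity of strictly smaller $(k,n)$ (with $H_i$ playing the role of the ambient $\mathbb{RP}^{k-2}$), hence vanishes by the inductive hypothesis. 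With all $P_j$-poles removed, $F$ is a global section over $\mathbb{RP}^{k-1}$ of the negative line bundle dictated by the torus weight \eqref{torusS} in the column $j$ (each $\mathcal{I}(\Sigma)$ being homogeneous of degree $-k$ there), and is therefore identically zero. This is the higher-$k$ analog of the ``$\mathcal{O}(1/z^2)$ and pole-free'' argument of \cref{sec4d1} and of the residue theorem behind \cref{irreProof}.

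Uniqueness --- that the double extensions satisfy exactly one linear relation --- follows from the same residue bookkeeping. If $\sum_{\Sigma}c_\Sigma\mathcal{I}(\Sigma)=0$ is any relation, comparing residues across an interior flip between $\Sigma$ and $\Sigma'$ gives $c_\Sigma/\epsilon_\Sigma=c_{\Sigma'}/\epsilon_{\Sigma'}$, because that pole is genuine; connectivity of the flip graph of $R$ propagates this equality to all double extensions, so $c_\Sigma=\lambda\,\epsilon_\Sigma$ and the relation is unique up to scale, with coefficients $\pm1$. For a double extension all of whose $j$-simplices are walls one argues instead through a wall residue, invoking the inductive \emph{uniqueness} of the lower fundamental identity; the two cases are glued along GCOs that carry both an interior and a wall $j$-simplex. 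Combined with $F\equiv0$, this gives the conjecture.

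I expect two steps to be the real obstacles. The first is the matroid-theoretic input of the first paragraph: showing that the double $\mathcal{T}$-extensions genuinely form the chamber set of a simply-connected region $R$, and deciding how to treat combinatorial simplex flips that exit $R$ into non-realizable pseudo-GCOs --- exactly the subtlety flagged in \cref{39integrand,48integrands}. For $k=3$ this is controlled by the bigon picture of \cref{sec43,sec4d4d1} (two consecutive lines through a common point), but no comparably explicit description of the walls is yet available for $k\ge4$, which is why the statement is posed as a conjecture rather than a theorem. The second is making the wall-residue step rigorous: one must pin down the precise factorization of $\mathrm{Res}_{\Delta_{ij b_3\ldots b_k}=0}\mathcal{I}(\Sigma)$ into a lower CEGM integrand and verify that the induced triple on $H_i$ is again fundamental, so that the inductive hypothesis actually applies.
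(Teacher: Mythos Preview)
The paper does not prove the conjecture in general; it establishes only the case $k=3$, and does so by routing through the non-partitioning identities of Section~\ref{sec43}. Proposition~\ref{nonpartitioningproperty} and the geometric discussion around it identify the set of double $\mathcal{T}$-extensions with a non-partitioning set produced by Algorithm~II (two adjacent lines through the shared point), and then Proposition~\ref{irreProof}, proved in Appendix~\ref{appproof}, supplies both the identity and its irreducibility via a two-variable residue argument on the position of the decoupled point. That argument fixes signs by cancelling interior poles pairwise, then localizes to a wall $\Delta_{ijn}=0$ and observes that the restriction to that $\mathbb{CP}^1$ has its remaining poles cancel pairwise and is $\mathcal{O}(w^{-2})$, hence vanishes; irreducibility is exactly your connectivity argument.

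Your inductive scheme on $k+n$ is a genuinely different organization, aimed at general $k$: instead of a direct two-variable trick you want each wall residue to be \emph{recognized} as a lower fundamental identity and killed by induction. For $k=3$ this collapses to the paper's argument (the wall residue lives on a $\mathbb{CP}^1$ and the ``lower identity'' is the $k=2$ mechanism of Section~\ref{sec4d1}). For $k\ge4$ the two obstacles you flag are not technicalities but precisely the missing content: the paper explicitly leaves the $k>3$ story open (Section~\ref{sec4d5}), and neither the simple-connectedness of the region $R$ in the presence of non-realizable pseudo-GCOs nor the factorization of $\mathrm{Res}_{\Delta_{ijb_3\cdots b_k}=0}\mathcal{I}(\Sigma)$ into a lower CEGM integrand is available anywhere --- indeed, the integrands themselves are not fully determined already for $(4,8)$. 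So your proposal is a sound blueprint whose $k=3$ specialization matches the paper, but the steps you single out as obstacles are real gaps that keep the general statement a conjecture.
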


Moreover, we also claim that all other irreducible decoupling identities for $k=3$ that can be obtained using the algorithm explained in section \ref{sec4} are linear combinations of the fundamental ones.  
 
We have verified Conjecture \ref{conjecture: fundamental decouplings} for all decoupling identities among CEGM integrands for $CO_{3,6}$ and $CO_{3,7}$.

The fundamental decoupling identities also have a beautiful geometric interpretation. Consider a $(k,n-1)$ GCO expressed as an arrangement of $n-1$ hyperplanes with labels in the set $[n]\setminus \{i\}$. Now introduce another hyperplane, labeled $i$, in the arrangement in all possible ways but subject to the condition that $H_i\cap H_j$ defines a predetermined arrangement. This geometric picture will allow us the connect fundamental identities with the non-partitioning identities in section \ref{sec4} and hence provide a proof of conjecture \ref{conjecture: fundamental decouplings} for $k=3$.

\subsection{Relation to 
Non-partitioning Irreducible Identities and $k=3$ Proof}

In \cref{sec4}, we discussed two kinds of $k=3$ irreducible decoupling identities, partitioning and non-partitioning. Here we prove that non-partitioning identities coincide with the fundamental identities. 

Let us start by noting the following property.

\begin{prop}\label{nonpartitioningproperty}
All $(3,n)$ GCOs that participate in a non-partitioning irreducible identity where label $i$ was decoupled and which is generated by walls sharing label $j$ all have the same $(2,n-1)$ color ordering in the $j^{\rm th}$ position. In other words, their $k$-decreasing projections $\pi_{(j)}(\Sigma )$ coincide.
\end{prop}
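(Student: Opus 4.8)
The plan is to reduce the proposition to a purely combinatorial statement about Algorithm~II and then prove that statement by induction on the steps of the algorithm. Recall that for $k=3$ a GCO is the tuple $\Sigma=(\sigma^{(1)},\ldots,\sigma^{(n)})$ of its constituent $(2,n-1)$ orderings, and that the $k$-decreasing projection of \eqref{componentpro} is $\pi_{(j)}(\Sigma)=\Sigma^{(j),[2]}$, which for $k=3$ is just the single $(2,n-1)$ ordering $\sigma^{(j)}$. Writing the decoupled label as $i$ and the two walls as $\{i,j,p\}$ and $\{i,j,q\}$ — the ``non-partitioning'' hypothesis being precisely that, besides the decoupled label $i$, the two walls share a second label, which I call $j$, with $p\neq q$ since the walls are distinct — the proposition is equivalent to: every GCO produced by Algorithm~II on this input has the same $j$-th entry $\sigma^{(j)}$ as the starting GCO $\Sigma$.

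First I would record two elementary facts coming straight from the definitions in \cref{sec3d1}. (a)~A simplex flip along a triangle $\{a,b,c\}$ changes only the entries $\sigma^{(a)},\sigma^{(b)},\sigma^{(c)}$ of the tuple, so it leaves $\sigma^{(j)}$ untouched unless $j\in\{a,b,c\}$. (b)~If $\{i,j,c\}$ is a triangle of a GCO $\Sigma'$, then $i$ and $c$ are consecutive in the cyclic order $\sigma^{(j)}_{\Sigma'}$ (apply the definition of a combinatorial $(k{-}1)$-simplex with the partition $\{i,c\}\cup\{j\}$); since $i$ has exactly two neighbours in $\sigma^{(j)}_{\Sigma'}$, at most two values of $c$ can occur. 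Applying (b) to the starting GCO $\Sigma$: both walls $\{i,j,p\}$ and $\{i,j,q\}$ are triangles of $\Sigma$ by the input requirement of Algorithm~II, so $p$ and $q$ are the two neighbours of $i$ in $\sigma^{(j)}_{\Sigma}$, and hence the two walls are the only triangles of $\Sigma$ through the pair $\{i,j\}$.

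Then I would run the induction. Claim: every GCO in the list $L$ built by Algorithm~II satisfies $\sigma^{(j)}=\sigma^{(j)}_{\Sigma}$. The base case $L=\{\Sigma\}$ is trivial. For the inductive step, suppose $\Sigma'\in L$ has $\sigma^{(j)}_{\Sigma'}=\sigma^{(j)}_{\Sigma}$. By (b), any triangle of $\Sigma'$ containing both $i$ and $j$ is of the form $\{i,j,c\}$ with $c$ a neighbour of $i$ in $\sigma^{(j)}_{\Sigma'}=\sigma^{(j)}_{\Sigma}$, hence $c\in\{p,q\}$, i.e.\ it is one of the two walls. But Step~2 of Algorithm~II discards the walls, so every triangle of $\Sigma'$ that actually gets flipped contains $i$ and does \emph{not} contain $j$; by (a) such a flip does not alter the $j$-th entry. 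Therefore every GCO obtained from $\Sigma'$ again has $\sigma^{(j)}=\sigma^{(j)}_{\Sigma'}=\sigma^{(j)}_{\Sigma}$, closing the induction. Since $\pi_{(j)}(\Sigma')=\sigma^{(j)}_{\Sigma'}$ for every $\Sigma'\in L$, all the $k$-decreasing projections $\pi_{(j)}(\Sigma')$ coincide.

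The step I expect to need the most care is fact (b) together with the observation that the algorithm never \emph{needs} to flip a triangle through $\{i,j\}$ other than a wall: the induction hypothesis is exactly what supplies this, so the two ingredients must be threaded together in the right order. It is worth noting that the argument uses nothing about irreducibility beyond the combinatorial non-partitioning condition; the geometric counterpart, which I would include only as a sanity check, is that all chambers swept out by Algorithm~II lie inside one of the two wedges cut out by the lines $L_{jp}$ and $L_{jq}$ through $P_j$ (cf.\ the adjacency discussion in \cref{sec4d4d1}), so that $P_i$ never crosses a line of the pencil through $P_j$ and its cyclic position relative to that pencil — which dualizes to $\sigma^{(j)}$ — stays frozen.
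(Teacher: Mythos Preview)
Your proof is correct and takes a genuinely different route from the paper's. The paper argues geometrically in the dual picture: it realizes the decoupling set as the configuration of $n-1$ points in $\mathbb{RP}^2$, observes that the two wall lines $L_{jp},L_{jq}$ meet at $P_j$ and (by adjacency) bound a wedge with no other marked points inside, and then notes that as the decoupled point $P_i$ wanders in that wedge the cyclic reading of lines around $P_j$ --- which is $\sigma^{(j)}$ --- cannot change. Your argument is purely combinatorial and inductive on the steps of Algorithm~II: the key observation, that under the induction hypothesis $\sigma^{(j)}_{\Sigma'}=\sigma^{(j)}_\Sigma$ the only triangles of $\Sigma'$ through $\{i,j\}$ are the two walls (because the neighbours of $i$ in $\sigma^{(j)}_{\Sigma'}$ are still $p,q$), is exactly what the paper's adjacency condition encodes geometrically, but you extract it directly from the definitions of combinatorial simplex and simplex flip. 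Your approach is more self-contained and does not rely on realizability, so it applies verbatim to pseudo-GCOs; the paper's picture is more visual and dovetails with the double-extension discussion that follows. You even flag the geometric version as a sanity check, which matches the paper's own argument.
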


\begin{proof}[Sketch of Proof]
Start with a $(3,n-1)$ GCO $\Sigma^{(i)}$ in the set $[n]\setminus \{ i \}$. This defines a decoupling set in $(3,n)$ by collecting all GCOs such that $\pi_i(\Sigma )=\Sigma^{(i)}$. $\Sigma^{(i)}$ also defines a configuration of $n-1$ points in $\mathbb{RP}^2$. Now, according to the construction in section \ref{sec4}, a non-partitioning irreducible decoupling is found by selecting one of the points, say $P_j$, drawing two lines intersecting at $P_j$ and containing two other points, such that the lines are adjacent. This means that the lines split $\mathbb{RP}^2$ into two regions so that all the other $n-3$ points are located in one of the regions. The identity is obtained by letting point $P_n$ wander around in the ``empty" region and collecting all GCOs obtained in doing so. Clearly, any such GCOs will have the same $(2,n-1)$ color ordering in the $j^{\rm th}$ position since such an ordering is obtained by reading the lines joining $P_j$ to each point, including $P_n$, as we circle around $P_j$, but all point are fixed, except $P_n$ (see section \ref{sec:translation}). However, $P_n$ is not allowed to escape the region bounded by two existing lines. This means that all GCOs participating in this identity have the same $(2,n-1)$ color ordering $\sigma^{(j)}$ in the $j^{\rm th}$ entry.
\end{proof}

A simple corollary is that all $k=3$ non-partitioning identities are fundamental identities. We leave the proof to the reader. Since $k=3$ non-partitioning identities were proven in \cref{appproof}, it follows that conjecture \ref{conjecture: fundamental decouplings} holds for $k=3$.

\section{Discussions\label{sec5}}

In this work we completed the second element of the triality proposed in \cite{Cachazo:2022pnx}. The triality refers to three different ways of computing color-dressed biadjoint scalar amplitudes. Color-dressed amplitudes are defined as a sum over generalized color orderings (GCOs) multiplied by partial amplitudes (see \cref{coBSintro}). In this first element of the triality we have the generalized Feynman diagram (GFD) technique described in detail in \cite{Cachazo:2022pnx}. Each partial amplitude is computed as a sum over all GFDs which are locally planar with respect to the GCOs defining the partial amplitude. The second element in the triality is the computation of biadjoint amplitudes using CEGM integrals. In this work, we proposed CEGM integrands associated with a given GCO so that when integrated against the scattering equations they would produce the corresponding generalized biadjoint amplitude. This was explicitly checked for all $(3,6)$ and $(3,7)$ partial amplitudes. The third element is construction in terms of a generalization of the positive tropical Grassmannian, associated with a given chirotope (and hence a given GCO),
which we called {\it chirotopal} tropical Grassmannians in section 13 of \cite{Cachazo:2022pnx}. 
In particular, in \cite{Cachazo:2022voc}, the Global Schwinger integral over $\mathbb{R}^{(k-1)(n-k-1)}$ was introduced as a compact formula for generalized partial amplitudes with GCOs of type 0 (i.e. $k=2$ descendants). The main question here to complete the triality is the following: to what extent can this story be generalized to other GCOs?

The CEGM integrands we introduced in this work provide  compatible \textit{systems} of rational functions, associated with the connected components of the real configuration space $X(k,n)$, as indexed by Generalized Color Orders.  The numerator of each such integrand is uniquely determined (up to a sign) on $X(3,n)$ for $n\le 8$, by identifying the one-dimensional residues on the common boundaries with each of the neighbors of the connected component.  Starting at $(3,9)$ and $(4,8)$, some integrands are not completely determined in this way.

Could it be that additional constraints could arise by identifying higher dimensional residues?  We do not a priori get a constraint from integrands of non-realizable pseudo-GCOs. Even so, we would not be able to eliminate all free parameters.   It is not clear if taking iterated residues could fix the numerators of the integrands (this means that it is not clear whether our situation matches the positive geometry axioms of \cite{Arkani-Hamed:2017tmz}). At this point, it seems that the natural way to fix the free parameters is by computing explicit generalized biadjoint amplitudes and requiring their values to match that obtained by summing over generalized Feynman diagrams.  

There are numerous questions for future investigations into the CEGM integrals with integrands of general types. First of all, it is desirable  to study whether we can read the kinematic poles \cite{Guevara:2020lek,He:2020ray} that appear in the doubly partial amplitudes  $m^{(k)}(\Sigma,\tilde\Sigma)$ by just looking at the CEGM integrands, especially the Plucker variables in their denominators.

In CHY formulas, it is very easy to determine the poles of  $m^{(2)}(\alpha,\beta)$  by just considering the Plucker variables in the denominators of the standard Parke-Taylor factors (c.f. \cite{Cachazo:2013gna,Baadsgaard:2015voa,Cachazo:2015nwa,Dolan:2013isa,Broedel:2013tta}). By making use of compatibility rules of poles, which is universal for any orderings, one can easily construct the whole amplitudes just from the poles \cite{Gao:2017dek,He:2021lro}, and in the case of generalized biadjoint amplitudes, \cite{Early:2019eun} using the planar basis and matroid subdivisions. It would be interesting to explore how far away one can go for CEGM integrals in this direction. Other directions include studying  the contribution of singular solutions  \cite{Cachazo:2019ble} or the behavior of amplitudes \cite{GarciaSepulveda:2019jxn,Abhishek:2020xfy} in soft and hard limits, factorization \cite{Early:2022mdn}, smoothly splitting amplitudes and semi-locality \cite{Cachazo:2021wsz}, minimal kinematics to simplify the scattering equations and the resulting amplitudes \cite{Cachazo:2020uup,Cachazo:2020wgu}, etc. 

In this paper, we also studied irreducible decoupling identities in terms of CEGM integrands and it is highly desirable to find a systematic way to make use of them to generate algebraically linear independent sets of CEGM integrands, as the analog of the Kleiss-Kuijf basis in the stand $k=2$ case \cite{Kleiss:1988ne}. Looking even further into the future, one would like to find the BCJ-like relations among CEGM integrands on the support of scattering equations, or BCJ-like basis \cite{Bern:2008qj,Bjerrum-Bohr:2009ulz,Stieberger:2009hq,Cachazo:2013gna} of integrands. The latter may lay a foundation for possible BCJ-like double copy relations \cite{Bern:2008qj,Bern:2019prr, Bern:2022wqg,Adamo:2022dcm} among generalized amplitudes including the generalized biadjoint amplitudes.

It is well known that the CHY formula is closely related to the leading order of  disk integrals of open string theory \cite{koba1969manifestly}. The latter was formally generalized to higher $k$ formulas called  Grassmannian stringy integrals \cite{Arkani-Hamed:2019mrd}, whose leading orders are closely related to the CEGM integrals.  

There are many more avenues to explore. It is known that for globally planar integrands, what we call type 0, there is a connection to cluster algebras \cite{Arkani-Hamed:2020tuz,Arkani-Hamed:2019plo,He:2021zuv,Drummond:2019qjk,Drummond:2020kqg,Gates:2021tnp,Henke:2021ity,Cachazo:2019apa}. It is also known that for $X(3,n)$ with $n<9$ there is an action of $W(E_n)$, the Weyl group of $E_n$, acting on the points mapping chambers to chambers \cite{sekiguchi1997w,sekiguchi1999configurations}.  It would be fascinating to make a connection with either of these topics.  In fact, after the original posting of this article, the investigation was initiated in \cite{Early:2023cly} for n=$6,7$, where the connected components of the del Pezzo moduli space $Y(3,n)$ were constructed.

In 2009, Arkani-Hamed, Cachazo, Cheung, and Kaplan (ACCK) conjectured a  formula that uses the higher $k$ Parke-Tayler factors \eqref{PTform} as integrands and expresses tree-level amplitudes or loop discontinuities in ${\cal N} = 4$ SYM 
 in the sector with $k$ negative helicity gluons in the planar limit as contour integrals \cite{Arkani-Hamed:2009ljj}. 
 The CEGM integrands can be thought of as a natural generalization of the 
the higher $k$ Parke-Taylor factors and it would be interesting to see their possible applications to the ACCK formulas, especially the possible connections to non-planar on-shell diagrams and their Grassmannian formulations (see  \cite{Bourjaily:2016mnp,Frassek:2016wlg,Franco:2015rma,Paranjape:2022ymg}).

\subsection{Towards a Realization of Generalized Color Factors}

In this work, we have treated the generalized color factors, ${\bf c}(\Sigma)$, as unknown. And their purpose so far has been that of a bookkeeping device for the partial amplitudes \eqref{coBSintro}, 
\begin{equation}\label{coBdis}
    {\mathcal M}^{(k)}_n = \sum_{I,J} {\bf c}(\Sigma_I){\bf c}(\Sigma_J)\, m_n^{(k)}(\Sigma_I,\Sigma_J). 
\end{equation}
When $k=2$, color factors are given in terms of traces of generators of a Lie algebra. Decoupling identities are automatically obtained by a judicious choice of the generators and by asking the full color-dressed amplitude to vanish. 

The most pressing problem in this line of research is to find the explicit realization of generalized color factors, ${\bf c}(\Sigma)$, and use them to directly obtain decoupling identities. 

In this work, we have derived identities among CEGM integrands and called them ``decoupling" in analogy with the $k=2$ version. Of course, for them to actually decouple something, it is necessary that the color-dressed amplitude vanishes when the color factors behave in a particular way.   

As explained in \cite{Cachazo:2022pnx}, when we ``decouple" one particle in a $(3,6)$ amplitude, say 1, all 372 GCOs are identified with one of twelve possibilities. Each set contains $31$ GCOs that make up a decoupling set. We do not know the precise behavior of color factors, but if we assume that when two GCOs, say $\Sigma_I$ and $\Sigma_J$, are identified, the corresponding color factors satisfy ${\bf c}(\Sigma_I)=\pm {\bf c}(\Sigma_J)$, then \eqref{coBdis} would split into the $12$ sets with $31$ partial amplitudes in each. However, the $31$ partial amplitudes can be made to cancel in smaller sets, i.e. the set is reducible. 
 
The new element we learned in section \ref{ref4B} is that if we consider the double extension procedure one always gets irreducible identities. 

This hints at the fact that for $k>2$, one has to select a pair of particles $i\neq j$ to define the decoupling. It is as if one were decoupling $i$ with respect to $j$. In this case, the $31$ GCO in a decoupling set are not all identities but they are separated into four sets. From the view point of the $372$ GCOs, the $\{i,j\}$ decoupling splits \eqref{coBdis} (for $(k=3,n=6)$), into $48$ sets. Each set contains either $7$,$8$, or $9$ GCOs.   
Moving on to $(3,7)$, the 27240 GCOs are partitioned into $372$ under any $i^{\rm th}$ decoupling and into $1860$ sets under any $\{i,j\}$ decoupling.

We expect that requiring the generalized color factors to produce all identities, including the corresponding signs, would lead to enough constraints to provide a hint as to what algebraic structure is underlying their construction.

\subsection{Number of Linearly Independent Integrands}

For $k=2$ it is known that the space of integrands has dimension $(n-2)!$. In other words, all $(n-1)!/2$ can be expressed in terms of a basis of $(n-2)!$ integrands. This has a geometric interpretation as the dimension of the top cohomology group of $X(2,n)$. This is not an accident as when each integrand is combined with the measure on $\mathbb{CP}^1$ it becomes an element in the top cohomology. It is also known that $U(1)$ decoupling identities are enough to reduce the space to the correct dimension. The generalization to $k>2$ and any $n$ is an open problem. However, it is natural to expect that the fundamental decoupling identities can help us find the algebraically linearly independent CEGM integrands.

Let us comment on what we know so far and the prospects for the future. For $(3,6)$, we have proven that fundamental decoupling identities are indeed enough to reduce all 372 CEGM integrands down to 126 
linearly independent ones. 

The situation for $(3,7)$ is more complicated and we do not have a definite answer at this point. At this point, we have been able to show that fundamental identities can be used to express the $27240$ CEGM integrands in terms of only $7890$. However, it is known that the dimension should be $7470=7890-420$. One possible explanation for the $420$ discrepancy is that identities from the dual $(4,7)$ space are needed to reduce the number. We leave this exploration for the future. 

However, it is worth mentioning that there are computations using finite fields that can be used to compute information of $X(2,n)$ and $(3,n<10)$. The basic idea is to determine the number of rank $k$ uniform matroids over $\mathbb{F}_q$. Somewhat surprisingly, in the cases mentioned above, there is a quasi-polynomial expression for the numbers. In appendix A of
\cite{Agostini:2021rze}, T. Lam uses such polynomials to compute the Euler characteristic of the configuration spaces by evaluating the quasi-polynomials at $q=1$. As it turns out, evaluating the quasi-polynomials at $q=0$ provides the dimensions we are after \cite{skorobogatov1996number}. For $X(3,6)$ one indeed gets $126$ while for $(3,7)$ one gets $7470$. 

It is fascinating that we encounter matroid theory in all its guises in this subject. From positroids (i.e., matroids over the reals with special conditions), chirotopes (oriented matroids), to matroids over finite fields.

\section*{Acknowledgements}

The authors thank Bernd Sturmfels, Song He,  Jianrong Li, Matteo Parisi and Bruno Umbert for useful discussions. 
YZ would like to thank Alex Edison and Zhengjie Li for coding to find solutions of huge linear equations.
This research was supported in part by a grant from the Gluskin Sheff/Onex Freeman Dyson Chair in Theoretical Physics and by Perimeter Institute.  Research at Perimeter Institute is supported in part by the Government of Canada through the Department of Innovation, Science and Economic Development Canada and by the Province of Ontario through the Ministry of Colleges and Universities.  This research received funding from the European Research Council (ERC) under the European Union’s Horizon 2020 research and innovation programme (grant agreement No 725110), Novel structures in scattering amplitudes.

\appendix

\section{Polygons in Arrangement of Lines
\label{appa}}

\begin{claim}
The arrangement of lines associated with a $k=3$ color ordering $\Sigma =( \sigma^{(1)},\sigma^{(2)},\ldots ,\sigma^{(n)} )$ has a polygon bounded by lines $L_{i_1},L_{i_2},\cdots, L_{i_m}$  in sequence with $3\leq m\leq n-1$
if and only if labels in the sets $\{ i_m,i_2\}$,\,\, $\{ i_1,i_3\} \cdots $, and $\{ i_{m-1},i_1\}$ 
are consecutive in $\sigma^{(i_1)}$, $\sigma^{(i_2)}\cdots$, and $\sigma^{(i_m)}$ respectively.  
\end{claim}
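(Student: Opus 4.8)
I would prove both directions by translating the combinatorial data of the $k=3$ GCO into the dual picture of an arrangement of $n$ lines in $\mathbb{RP}^2$, using the fact (Definition \ref{GCOkkkk}) that $\sigma^{(i)}$ records the cyclic order in which the lines $L_j$ ($j\neq i$) cross $L_i$. The key observation is that a polygon bounded by $L_{i_1},\ldots,L_{i_m}$ in cyclic sequence is precisely a region of the arrangement whose boundary edges lie, in order, on these lines; each edge on $L_{i_a}$ is the segment of $L_{i_a}$ between its intersection points with the two neighbouring boundary lines $L_{i_{a-1}}$ and $L_{i_{a+1}}$ (indices mod $m$), and for this segment to be an actual edge of a cell — i.e. to contain no other intersection point of $L_{i_a}$ with the remaining lines — the two crossing points $L_{i_a}\cap L_{i_{a-1}}$ and $L_{i_a}\cap L_{i_{a+1}}$ must be \emph{consecutive} along $L_{i_a}$. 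Reading this off from $\sigma^{(i_a)}$: the labels $i_{a-1}$ and $i_{a+1}$ must be adjacent in the cyclic order $\sigma^{(i_a)}$. Matching indices, for $L_{i_1}$ the neighbours are $L_{i_m}$ and $L_{i_2}$, giving the condition that $\{i_m,i_2\}$ be consecutive in $\sigma^{(i_1)}$; for $L_{i_2}$ the condition is $\{i_1,i_3\}$ consecutive in $\sigma^{(i_2)}$; and so on cyclically, ending with $\{i_{m-1},i_1\}$ consecutive in $\sigma^{(i_m)}$. This is exactly the claimed equivalence.

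\textbf{Forward direction.} Assume the arrangement has such a polygon. Walk around its boundary; the edge on $L_{i_a}$ is bounded by the vertices $v_{a-1}=L_{i_{a-1}}\cap L_{i_a}$ and $v_a = L_{i_a}\cap L_{i_{a+1}}$. Since an edge of an arrangement cell is by definition a maximal segment of a line containing no vertex in its interior, $v_{a-1}$ and $v_a$ are adjacent among all points $\{L_{i_a}\cap L_j : j\neq i_a\}$ on $L_{i_a}$. By the duality/construction recalled in Definition \ref{GCOkkkk}, the cyclic order of these points along $L_{i_a}$ is precisely $\sigma^{(i_a)}$ (read on the labels $j$), so $i_{a-1}$ and $i_{a+1}$ are consecutive in $\sigma^{(i_a)}$. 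Doing this for all $a=1,\ldots,m$ yields the list of consecutivity conditions.

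\textbf{Converse direction.} Assume all $m$ consecutivity conditions hold. For each $a$, consecutivity of $i_{a-1},i_{a+1}$ in $\sigma^{(i_a)}$ means the segment $e_a \subset L_{i_a}$ between $L_{i_a}\cap L_{i_{a-1}}$ and $L_{i_a}\cap L_{i_{a+1}}$ is an edge of the arrangement — it meets no other line in its interior. These $m$ edges close up into a cycle: $e_a$ and $e_{a+1}$ share the vertex $L_{i_a}\cap L_{i_{a+1}}$. One then checks this closed polygonal curve bounds a face of the arrangement, using genericity (no three of the $L_j$ concurrent, so the vertices are simple) and the fact that a simple closed curve made of full arrangement-edges, with consecutive edges meeting at a common arrangement-vertex and turning consistently, must bound a $2$-cell. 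I would phrase this last step carefully, perhaps inducting on $m$ or invoking the standard cell structure of a line arrangement, to conclude the existence of a polygon with the prescribed boundary lines in sequence.

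\textbf{Main obstacle.} The genuine subtlety — and the step I expect to be the crux — is the converse's geometric closure argument: knowing each $e_a$ is individually an edge and that consecutive ones share a vertex does not automatically guarantee the $m$ edges form the boundary of a single face rather than, say, tracing a non-contractible loop or enclosing several cells. I would handle this by working in the oriented-matroid/sign-vector description of the arrangement (available here since GCOs are realizable), where a face is a covector and an edge corresponds to a covector with exactly one zero entry; adjacency of two such edge-covectors at a common vertex-covector then forces them to be faces of a common $2$-face, and iterating around the cycle pins down that unique $2$-face as the polygon. Alternatively, for $n$ small one can appeal directly to the real model of $X(3,n)$, but the oriented-matroid formulation gives the cleanest uniform argument. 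The torus-weight and residue machinery of the paper is not needed here; this is purely a statement about the combinatorics of the line arrangement versus the tuple of $(2,n-1)$ orders.
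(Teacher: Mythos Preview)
Your forward direction is fine and matches the paper's ``straightforward'' remark. The converse is indeed where the work lies, and you correctly flag the non-contractible-loop obstruction as the crux. However, your proposed resolution via oriented-matroid covectors has a genuine gap: nowhere do you use the hypothesis $m\le n-1$, and without it the statement is \emph{false}. The paper gives the explicit $(3,5)$ counterexample: the broken line $P_{13}\!-\!P_{35}\!-\!P_{52}\!-\!P_{24}\!-\!P_{41}\!-\!P_{13}$ satisfies all five consecutivity conditions, each segment is an arrangement edge, and consecutive edges share a vertex, yet the cycle does \emph{not} bound a face. So the step ``iterating around the cycle pins down that unique $2$-face'' cannot go through in general; at each vertex the two incident edges do select a local $2$-cell, but the global consistency of these local choices is exactly what fails when the loop is non-contractible in $\mathbb{RP}^2$.

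The paper's argument uses $m\le n-1$ directly and elegantly: since $m<n$, there exists a line $L_j$ not among $L_{i_1},\ldots,L_{i_m}$. By the consecutivity conditions, none of the edges $e_a$ contains the point $L_{i_a}\cap L_j$, so $L_j$ is disjoint from the entire broken line. Removing $L_j$ from $\mathbb{RP}^2$ leaves an affine plane $\mathbb{R}^2$ containing the whole closed polygonal curve, and in $\mathbb{R}^2$ a simple closed polygonal curve bounds a polygon. This is the missing ingredient in your sketch: the extra line furnishes an affine chart in which contractibility is automatic. If you want to salvage the covector approach, you must build $m\le n-1$ into it in an equivalent way.
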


The proof from a polygon to the consecutive conditions is straightforward. Now we want to prove it backward.
\begin{proof}
Let us denote the
intersecting point of $L_i$ and $L_j$ as $P_{ij}$.  Obviously, the connected broken line  $P_{i_1i_2}-P_{i_2i_3}-\cdots-P_{i_mi_1}-P_{i_1i_2}$, which has the same topology as a circle,
must divide the whole ${\mathbb {RP}}^2$ into two regions. Since we consider an arrangement of $n$ generic lines with $m>n$, there is at least one more line $L_j\notin \{L_{i_1},L_{i_2},\cdots, L_{i_m} \} $ that does not intersect with the connected broken line $P_{i_1i_2}-P_{i_2i_3}-\cdots-P_{i_mi_1}-P_{i_1i_2}$. That line also divides the whole ${\mathbb {RP}}^2$ space into two regions and the whole connected broken line is localized in one of the regions, which is forced 
 to form a polygon.
\end{proof}

When there is no other straight line that does not intersect with the broken line, we are not sure whether 
the broken line forms a polygon.  
Take the $(3,5)$ GCO  which descends from the standard canonical $(2,5)$ color ordering as an example. 
Combinatorically, one would get two connected broken lines with 5 segments, $P_{12}-P_{23}-\cdots-P_{51}-P_{12}$ and $P_{13}-P_{35}-P_{52}-P_{24}-P_{41}-P_{13}$. However, once we draw out the corresponding arrangement of lines, as shown in the graph on the left in \cref{35arrangement}, only the first broken line (in blue in the figure) forms a pentagon $\{1,2,3,4,5\}$, which is consistent with the canonical $(2,5)$ color ordering. But the second 
broken line (in red in the figure) fails.\footnote{One can easily figure out that the whole ${\mathbb {RP}}^2$ plane has been spanned by one pentagon, five triangles, and five quadrangles. So there is no space for the second broken line to form a polygon}.
In an equivalent representation of the arrangement of lines, as shown on the right in the figure, the second 
broken line is manifestly connected and divides the whole ${\mathbb {RP}}^2$ into two parts but it fails to become a polygon. As one can imagine, if a connected broken line does not have any intersection with a straight line, it is forced to form a polygon even in the projective space.

 \def \inter #1,#2 \inter { (intersection of  A#1--B#1 and A#2--B#2) }

 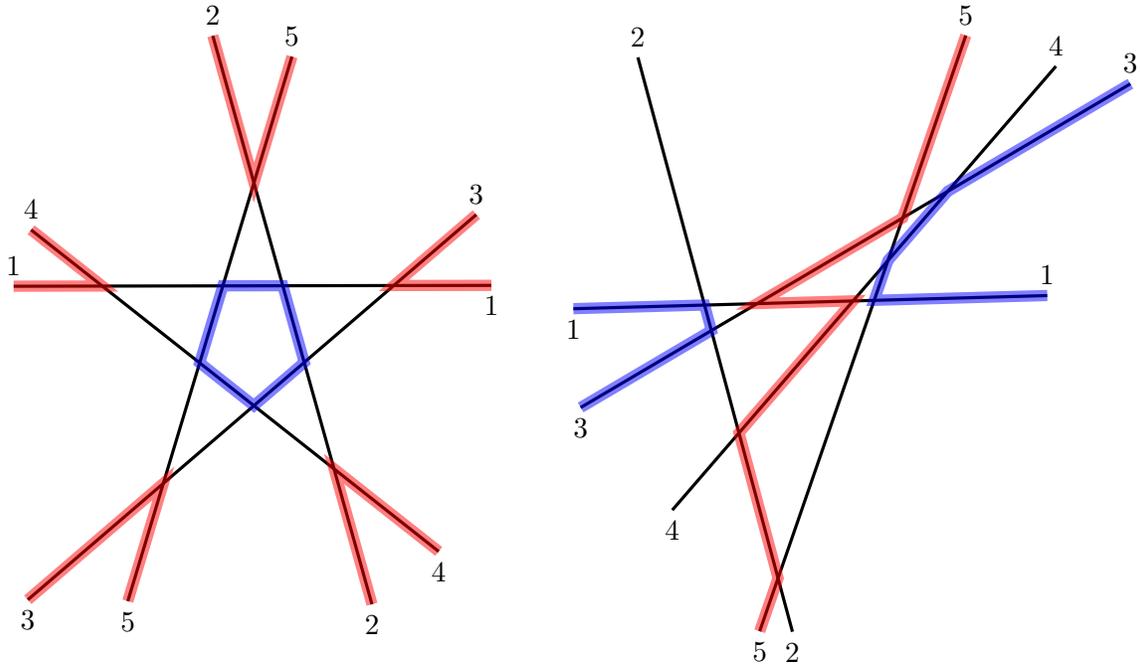
\begin{figure}
     \centering
\begin{tikzpicture}[scale=1.4]
  \draw [very thick](0.259384,0.894526) coordinate(A3) node [below]{3}
  node [below=18pt]{~
  {\color{white}.}}
  -- (4.51871,4.55482) coordinate (B3) node [above]{3} ;
 \draw [very thick](1.20921,0.881451) coordinate(A5) node [below]{5}-- (2.7679,6.05625) coordinate (B5) node [above]{5} ;
 \draw [very thick](3.52735,0.852126) coordinate(A2) node [below]{2}-- (2.01611,6.25446) coordinate (B2) node [above]{2} ;
 \draw [very thick](4.16121,1.35392) coordinate(A4) node [below]{4}-- (0.290148,4.41026) coordinate (B4) node [above]{4} ;
 \draw [very thick](4.66115,3.88243) coordinate(A1) node [below]{1}-- (0.124827,3.87435) coordinate (B1) node [above]{1} ;
 
 \draw [blue, line width=1.5mm, opacity=0.5] \inter 1,2 \inter --\inter 2,3 \inter --\inter 3,4 \inter--\inter 3,4 \inter --\inter 4,5 \inter--\inter 5,1 \inter-- cycle;
 
  \draw [red, line width=1.5mm, opacity=0.5]  (B1)--\inter 1,4 \inter--(B4)
   (B2)--\inter 2,5 \inter--(B5)
      (B3)--\inter 3,1 \inter--(A1)
            (A4)--\inter 4,2 \inter--(A2)
                (A5)--\inter 5,3 \inter--(A3)

  ;
 
    \end {tikzpicture}
    ~
    ~
     \begin{tikzpicture}[scale=1.4]
 \draw [very thick](4.80608,4.738) coordinate(A1) node [above]{1}-- (0.306075,4.61301) coordinate (B1) node [below]{1} ;
 \draw [very thick](5.59551,6.75014) coordinate(A3) node [above]{3}-- (0.375204,3.67355) coordinate (B3) node [below]{3} ;
 \draw [very thick](4.88872,6.91745) coordinate(A4) node [above]{4}-- (1.24561,2.69935) coordinate (B4) node [below]{4} ;
 \draw [very thick](4.03299,7.20941) coordinate(A5) node [above]{5}-- (2.07544, 1.54756) coordinate (B5) node [below]{5} ;
 \draw [very thick](0.918898,7.0017) coordinate(A2) node [above]{2}-- (2.38667,1.54307) coordinate (B2) node [below]{2} ;

     \draw [red, line width=1.5mm, opacity=0.5] (B5)-- \inter 5,2 \inter --\inter 2,4 \inter --\inter 4,1 \inter--\inter 1,3 \inter --\inter 3,5 \inter--(A5);
     
     \draw [blue, line width=1.5mm, opacity=0.5] (B1)-- \inter 1,2 \inter --\inter 2,3 \inter -- (B3)
    (A1)-- \inter 1,5 \inter--\inter 5,4 \inter --\inter 4,3 \inter --(A3);
  
        \end {tikzpicture}

     \caption{
     Two equivalent representations of a $(3,5)$ arrangement of lines.  The connected broken line in blue forms a pentagon but the one in red fails.
     }
     \label{35arrangement}
 \end{figure}

Anyway, the above claim does not apply to $n$-gons. However, it is easy to see that only type 0 $(3,n)$ GCO has an $n$-gon whose ordering is consistent with the $k=2$ color ordering from which the type 0 $(3,n)$ GCO descends. 
Therefore, we are still able to find all polygons for any $k=3$ GCOs combinatorially.

\begin{claim}
Any $(3,n)$ arrangement of lines has $n(n-1)/2+1
$ polygons. 
\end{claim}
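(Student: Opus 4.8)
The plan is to use Euler's formula for the cell decomposition of $\mathbb{RP}^2$ induced by an arrangement of $n$ lines in generic position, and to count vertices, edges, and faces directly. Recall that for a CW decomposition of $\mathbb{RP}^2$ one has $V - E + F = \chi(\mathbb{RP}^2) = 1$. Since the arrangement is generic (no two lines parallel in the projective sense — every pair meets — and no three concurrent), every pair of lines contributes exactly one intersection point, and no point lies on three or more lines, so $V = \binom{n}{2}$.

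Next I would count edges. Each line $L_i$ is a projective line (a circle) subdivided by the $n-1$ points where the other lines cross it; a circle with $n-1$ marked points is cut into $n-1$ arcs. Hence $E = n(n-1)$. Then Euler's formula gives $F = 1 - V + E = 1 - \binom{n}{2} + n(n-1) = 1 + \binom{n}{2}$, since $n(n-1) - \binom{n}{2} = \binom{n}{2}$. Every $2$-cell of this decomposition is a polygon bounded by arcs of the lines (genericity guarantees the closure of each cell is a genuine polygon, with each bounding line contributing a single edge, so the combinatorial polygon language of the preceding claims applies), and conversely every polygon of the arrangement is one of these $2$-cells. Therefore the number of polygons is $F = n(n-1)/2 + 1$, as claimed.

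The one point that needs a little care — and which I expect to be the main (minor) obstacle — is the verification that every $2$-cell really is a polygon in the sense used earlier in the appendix, i.e. a region each of whose boundary edges lies on a \emph{distinct} line, so that no line bounds the same cell in two separate arcs. This is where genericity is essential: if three lines were concurrent or a cell "wrapped around" projectively it could fail. One clean way to handle it is to note that passing from $\mathbb{RP}^2$ to its double cover $S^2$, the arrangement of $n$ lines lifts to an arrangement of $n$ great circles in general position, whose cells are all convex spherical polygons with the desired property; the quotient map is a local homeomorphism on cell interiors, and for $n \geq 3$ the antipodal map has no fixed cell, so cells of the arrangement in $\mathbb{RP}^2$ are in bijection with antipodal pairs of cells on $S^2$ and inherit the polygon property. (Alternatively one can invoke the sphere count $V_{S^2}=n(n-1)$, $E_{S^2}=2n(n-1)$, $F_{S^2}=2 + n(n-1)$ from $\chi(S^2)=2$ and divide by $2$.) Either route confirms that the $F$ computed above is exactly the number of polygons, completing the proof; note this count is consistent with the explicit examples, e.g. $n=6$ gives $6\cdot5/2+1 = 16$ polygons, matching $6$ triangles plus further quadrilaterals/pentagons/hexagon in each $(3,6)$ arrangement.
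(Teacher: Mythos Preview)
Your argument via Euler's formula is correct: with $V=\binom{n}{2}$, $E=n(n-1)$, and $\chi(\mathbb{RP}^2)=1$ you get $F=1+\binom{n}{2}$ directly, and your care about each $2$-cell genuinely being a polygon (no bigons, no line contributing two disjoint boundary arcs) is well placed and correctly handled via the spherical double cover for $n\ge 3$.

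The paper, however, takes a different and more elementary route: it simply observes that a $(3,3)$ arrangement has $4$ polygons and that adding one more generic line to an $(3,n-1)$ arrangement creates exactly $n-1$ new polygons, since the new line is cut into $n-1$ arcs and each arc splits one existing region into two. Summing $4+3+4+\cdots+(n-1)$ gives $1+\binom{n}{2}$. Your approach has the advantage of being a one-line global computation once the Euler characteristic is known, while the paper's induction is self-contained (no topology input beyond the picture) and makes transparent \emph{why} the count grows linearly with each added line. Both are valid; the paper's is shorter to state, yours is more structural.
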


This can be proved easily as there are 4 polygons in $(3,3)$ arrangement of lines and whenever we add a generic line to an $(3,n-1)$ arrangement of lines, we get $n-1$ more polygons.

\section{Comments on (3,9) Pseudo-GCOs}

In this appendix, we collect two interesting properties of $(3,9)$ pseudo-GCOs that we expect to be present in all $(3,n>8)$.

\subsection{$(3,9)$ Non-realizable Pseudo-GCOs \label{secnonreal39}}

Combinatorially, the non-realizable pseudo-GCO \eqref{pseudoGCO39} is indistinguishable from any other GCO and so one could try to compute its corresponding integrand by looking at the set of ``combinatorial" triangles, i.e., look for all triples $\{i,j,k\}$, such that $\{j,k\}$, $\{i,k\}$ and $\{i,j\}$ are consecutive in the $i^{\rm th}$, $j^{\rm th}$, and $k^{\rm th}$ orderings respectively. Following this recipe, the set of triangles is 
\be\label{pseudo39T}
\{1,4,7\},\{1,5,9\},\{1,6,8\},\{2,4,9\},\{2,5,8\},\{2,6,7\},\{3,4,8\},\{3,5,7\},\{3,6,9\},\{4,5,6\}.
\ee 
This would suggest that the corresponding integrand should have ten poles of the form $1/\Delta_{abc}$, with $\{a,b,c\}$, in the list of triangles. For $n=9$, this would be a basic integrand with $p=1$ and hence completely determined by the torus action on it. A simple counting of labels reveals that the numerator should be $\Delta_{456}$. Now we find a contradiction. The list \eqref{pseudo39T} contains the triangle $\{4,5,6\}$ and hence the denominator has a factor of $\Delta_{456}$. But now we find that it cancels with the numerator. It is as if the triangle had disappeared.

\usetikzlibrary{calc}

\begin{figure}

	\def \inter #1,#2 \inter { (intersection of  A#1--B#1 and A#2--B#2) }

\centering
		
  \begin{tikzpicture}[scale=4]
  \draw [very thick,black](-0.860811,1.26763) coordinate(A1) node [left]{1}-- (1.08682,-0.648311) coordinate (B1) node [below]{1} ;
 \draw [very thick,black](1.62205,0.137639) coordinate(A2) node [above]{2}-- (-0.787374,-0.52919) coordinate (B2) node [below]{2} ;
 \draw [very thick,black](-0.0424588,1.03193) coordinate(A3) node [above]{3}-- (-0.648549,-1.30657) coordinate (B3) node [below]{3} ;
 \draw [very thick,blue](-0.819615,1.41962) coordinate(A4) node [left]{4}-- (0.54641,-0.94641) coordinate (B4) node [below]{4} ;
 \draw [very thick,blue](-0.75,0.) coordinate (B5) node [below]{5} ;
  \draw [very thick,blue](1.7,0.) coordinate(A5) node [above]{5} ;
 \draw [very thick,blue](0.43923,0.76077) coordinate(A6) node [above]{6}-- (-0.768653,-1.33135) coordinate (B6) node [below]{6} ;
 \draw [very thick,red](-0.716712,1.56957) coordinate(A7) node [left]{7};
  \draw [very thick,red](-0.0694833,-0.927656) coordinate (B7) node [below]{7} ;
 \draw [very thick,red](1.69886,-0.158896) coordinate(A8) node [above]{8} ;
  \draw [very thick,red](-0.789933,0.529898) coordinate (B8) node [below]{8} ;
 \draw [very thick,red](1.04752,0.609653) coordinate(A9) node [above]{9};
  \draw [very thick,red](-0.96888,-1.37394) coordinate (B9) node [below]{9} ;

   \coordinate (O) at (intersection of A4--B4 and A6--B6);
  
  \draw [very thick,blue]  [rounded corners]  (A5) --  ($ (A5)! 9.5/10 ! (O) $) -- ($(O)+(0,.05)$)--  ($ (O)! 1.0/10 ! (B5) $) -- (B5);

    \coordinate (O0) at (intersection of A1--B1 and A4--B4);
     \coordinate (O1) at (intersection of A3--B3 and A5--B5);
      \coordinate (O2) at (intersection of A2--B2 and A6--B6);
  
  \draw [very thick,red]  [rounded corners]  (A7) -- ($ (A7)! 8.5/10 ! (O0) $) -- ($(O0)+(.03,.015)$) -- ($ (O0)! 1.5/10 ! (O1) $) -- ($ (O0)! 8.5/10 ! (O1) $) -- ($(O1)+(-.03,-.015)$) --  ($ (O1)! 2.5/10 ! (O2) $) --  ($ (O1)! 7.5/10 ! (O2) $) -- ($(O2)+(.03,.03)$)--   ($ (O2)! 1.0/10 ! (B7) $) -- (B7);
  
      \coordinate (O0) at (intersection of A3--B3 and A6--B6);
         \coordinate (O1) at (intersection of A2--B2 and A4--B4);
      \coordinate (O2) at (intersection of A1--B1 and A5--B5);

    \draw [very thick,red]  [rounded corners]  (B9) -- ($ (B9)! 8.5/10 ! (O0) $) -- ($(O0)+(-.015,.03)$) -- ($ (O0)! 1.5/10 ! (O1) $) -- ($ (O0)! 8.5/10 ! (O1) $) -- ($(O1)+(.03,-.015)$)--  ($ (O1)! 2.5/10 ! (O2) $) --  ($ (O1)! 7.5/10 ! (O2) $) -- ($(O2)+(-.03,.03)$)--    ($ (O2)! 1.0/10 ! (A9) $) -- (A9);
  
        \coordinate (O0) at (intersection of A2--B2 and A5--B5);
           \coordinate (O1) at (intersection of A1--B1 and A6--B6);
      \coordinate (O2) at (intersection of A3--B3 and A4--B4);
      
    \draw [very thick,red]  [rounded corners]  (A8) -- ($ (A8)! 8.5/10 ! (O0) $) -- ($(O0)+(0,-.04)$) -- ($ (O0)! 1.5/10 ! (O1) $) -- ($ (O0)! 8.5/10 ! (O1) $)  -- ($(O1)+(0,.035)$)-- ($ (O1)! 2.5/10 ! (O2) $) --  ($ (O1)! 7.5/10 ! (O2) $) -- ($(O2)+(0,-.04)$)--     ($ (O2)! 1.0/10 ! (B8) $) -- (B8);
    
		\end{tikzpicture}
			
\caption{An arrangement of pseudo-lines for a non-realizable pseudo-GCO.}	
	
\label{arrangementpseudo}

\end{figure}

\begin{figure}

\centering

 \begin{tikzpicture}[scale=4]
  \draw [very thick,black](-0.860811,1.26763) coordinate(A1) node [left]{1}-- (1.08682,-0.648311) coordinate (B1) node [below]{1} ;
 \draw [very thick,black](1.62205,0.137639) coordinate(A2) node [right]{2}-- (-0.787374,-0.52919) coordinate (B2) node [left]{2} ;
 \draw [very thick,black](-0.0424588,1.03193) coordinate(A3) node [above]{3}-- (-0.648549,-1.30657) coordinate (B3) node [below]{3} ;
 \draw [very thick,blue](-0.819615,1.41962) coordinate(A4) node [left]{4}-- (0.54641,-0.94641) coordinate (B4) node [below]{4} ;
 \draw [very thick,blue](1.7,0.) coordinate(A5) node [right]{5}-- (-0.75,0.) coordinate (B5) node [left]{5} ;
 \draw [very thick,blue](0.43923,0.76077) coordinate(A6) node [above]{6}-- (-0.768653,-1.33135) coordinate (B6) node [below]{6} ;
 \draw [very thick,red](-0.716712,1.56957) coordinate(A7) node [left]{7}-- (-0.0694833,-0.927656) coordinate (B7) node [below]{7} ;
 \draw [very thick,red](1.69886,-0.158896) coordinate(A8) node [right]{8}-- (-0.789933,0.529898) coordinate (B8) node [left]{8} ;
 \draw [very thick,red](1.04752,0.609653) coordinate(A9) node [above]{9}-- (-0.96888,-1.37394) coordinate (B9) node [below]{9} ;	
 \end{tikzpicture}	
		
\caption{Straightening the arrangement of pseudo-lines for a non-realizable pseudo-GCO.}		
	
\label{arrangementpseudo2}	
		
\end{figure}
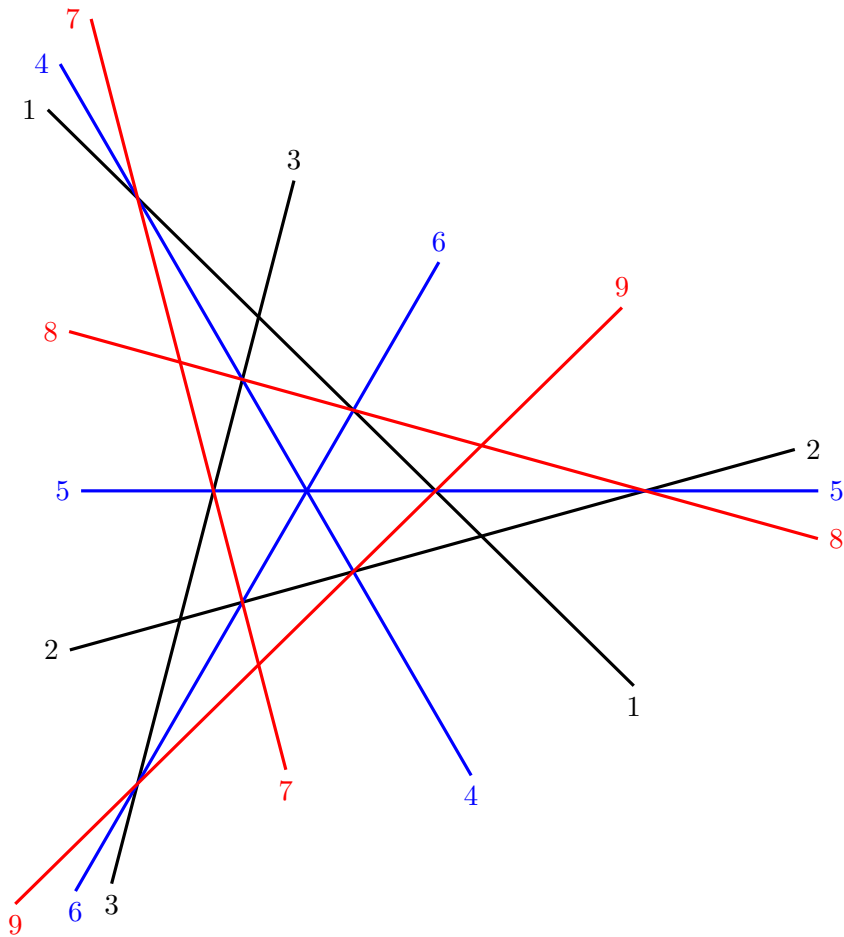

This phenomenon has a beautiful geometric interpretation. While it is impossible to realize \eqref{pseudoGCO39} as an arrangement of lines, as shown in \cref{arrangementpseudo}, which was first proposed in \cite{grunbaumarrangements} (see also \cite{celaya2020oriented}), it is possible to construct an arrangement of pseudo-lines, i.e., lines that bend, that has the intersection structure dictated by  \eqref{pseudoGCO39}. If one tries to ``straighten" the pseudo-lines to get an arrangement of lines, as shown in  \cref{arrangementpseudo2},  one finds that all pseudo-triangles with bent edges, including $\{4,5,6\}$, are forced to degenerate into a point and hence disappear!

\def \inter #1,#2 \inter { (intersection of  A#1--B#1 and A#2--B#2) }

\def \tri #1,#2,#3 \tri {  \inter #1,#2 \inter   -- \inter #2,#3 \inter  --  \inter #3,#1 \inter  --\inter #1,#2 \inter }

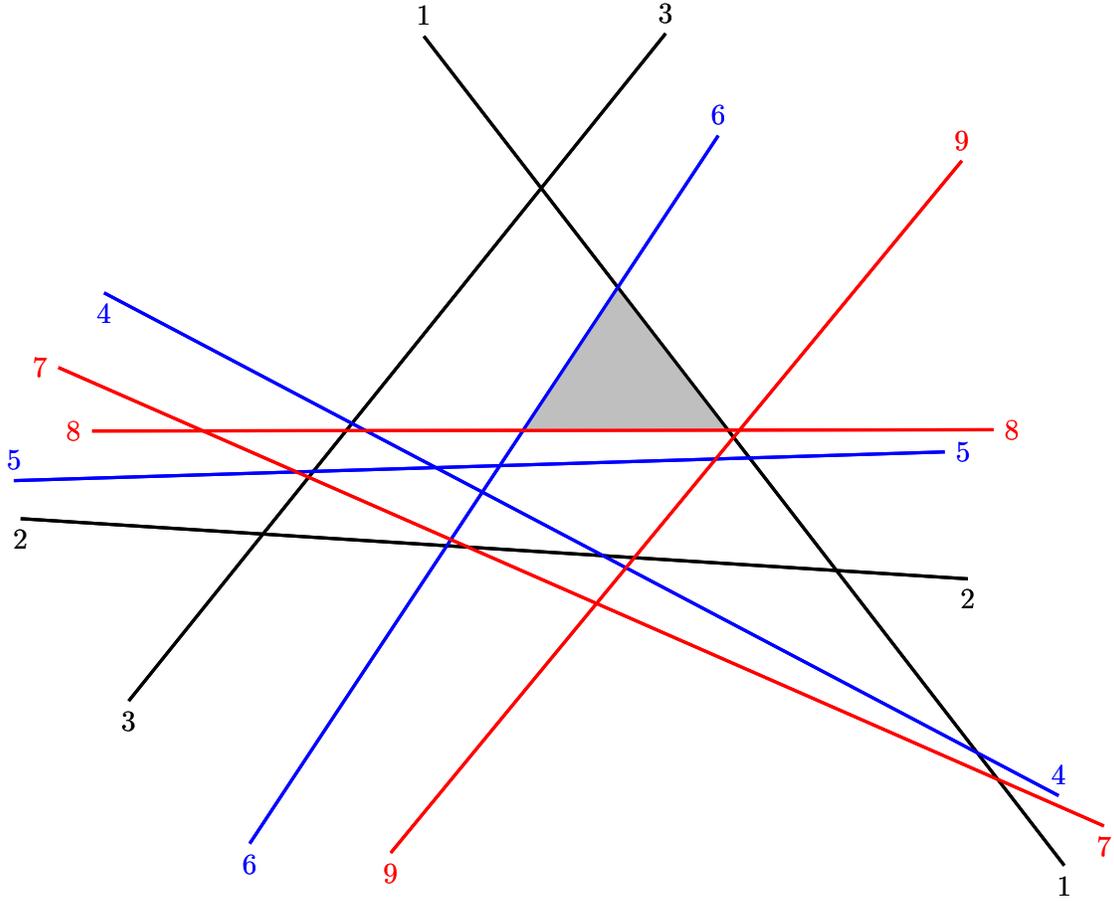
\begin{figure}
    \centering

 \begin{tikzpicture}[scale=2.4]
	 \draw [very thick,black](2.92196,5.40318) coordinate(A1) node [above]{1}-- (6.46941,0.805727) coordinate (B1) node [below]{1} ;
 \draw [very thick,black](0.687892,2.72869) coordinate(A2) node [below]{2}-- (5.93527,2.39488) coordinate (B2) node [below]{2} ;
 \draw [very thick,black](4.26116,5.41813) coordinate(A3) node [above]{3}-- (1.28691,1.71782) coordinate (B3) node [below]{3} ;
 \draw [very thick,blue](6.43791,1.19309) coordinate(A4) node [above]{4}-- (1.15056,3.98044) coordinate (B4) node [below]{4} ;
 \draw [very thick,blue](5.80718,3.09859) coordinate(A5) node [right]{5}-- (0.651911,2.9394) coordinate (B5) node [above]{5} ;
 \draw [very thick,blue](4.55269,4.85227) coordinate(A6) node [above]{6}-- (1.95718,0.927295) coordinate (B6) node [below]{6} ;
 \draw [very thick,red](6.68899,1.02507) coordinate(A7) node [below]{7}-- (0.89753,3.56653) coordinate (B7) node [left]{7} ;
 \draw [very thick,red](6.07725,3.22153) coordinate(A8) node [right]{8}-- (1.08425,3.21453) coordinate (B8) node [left]{8} ;
 \draw [very thick,red](5.90186,4.71263) coordinate(A9) node [above]{9}-- (2.73848,0.876012) coordinate (B9) node [below]{9} ;

\fill [fill=gray!50]  \tri 1,6,8 \tri ;

	 \draw [very thick,black](2.92196,5.40318) coordinate(A1) node [above]{1}-- (6.46941,0.805727) coordinate (B1) node [below]{1} ;
 \draw [very thick,black](0.687892,2.72869) coordinate(A2) node [below]{2}-- (5.93527,2.39488) coordinate (B2) node [below]{2} ;
 \draw [very thick,black](4.26116,5.41813) coordinate(A3) node [above]{3}-- (1.28691,1.71782) coordinate (B3) node [below]{3} ;
 \draw [very thick,blue](6.43791,1.19309) coordinate(A4) node [above]{4}-- (1.15056,3.98044) coordinate (B4) node [below]{4} ;
 \draw [very thick,blue](5.80718,3.09859) coordinate(A5) node [right]{5}-- (0.651911,2.9394) coordinate (B5) node [above]{5} ;
 \draw [very thick,blue](4.55269,4.85227) coordinate(A6) node [above]{6}-- (1.95718,0.927295) coordinate (B6) node [below]{6} ;
 \draw [very thick,red](6.68899,1.02507) coordinate(A7) node [below]{7}-- (0.89753,3.56653) coordinate (B7) node [left]{7} ;
 \draw [very thick,red](6.07725,3.22153) coordinate(A8) node [right]{8}-- (1.08425,3.21453) coordinate (B8) node [left]{8} ;
 \draw [very thick,red](5.90186,4.71263) coordinate(A9) node [above]{9}-- (2.73848,0.876012) coordinate (B9) node [below]{9} ;

\end{tikzpicture}	

\caption{
An arrangement of lines whose flip via triangle $\{1,6,8\}$ is geometrically forbidden.
}
 \label{neighborarrangement}
\end{figure}

Due to the presence of the non-realizable pseudo-GCOs, some triangle flips of GCOs are geometrically forbidden even though their arrangements of lines indeed contain those triangles. See an example in \cref{neighborarrangement} whose GCO reads,
\begin{align}\label{pseudoGCO39nei}
\big(&
(24736895),(18536749),(14857296),(17385629),(18273469),(13927458),
\nonumber\\
&(14853629),(16437259),(15247638)
\big)\,.
\end{align}
Combinatorially, this GCO is connected to the non-realizable pseudo-GCO \eqref{pseudoGCO39} by a flip via triangle $\{1,6,8\}$ as their six $k=2$ color orderings are the same and their three remaining color orderings are related by flipping a pair of labels.  However, the triangle flip of this GCO via $\{1,6,8\}$ is geometrically forbidden because if we force triangle $\{1,6,8\}$ to shrink in \cref{neighborarrangement}, then the arrangement of lines will be forced to degenerate into the singular case in \cref{arrangementpseudo2} where all other triangles are also gone. We cannot blow up the triangle  $\{1,6,8\}$ again in the opposite direction unless we allow the straight lines to bend a little bit, which leads to the arrangements of pseudo-lines in \cref{arrangementpseudo}.

Now let us comment on what we can do in Algorithm I once a triangle flip of a target GCO leads to a pseudo-GCO. By making use of the other 11 triangle flips of the GCO \eqref{pseudoGCO39nei} and the residue relations there, the integrand for that GCO is fixed up to a single parameter, 
\begin{align}
\label{pseudoGCO39neiinte}
\frac{\Delta _{168} \left(\Delta _{139} \Delta _{456}-\Delta _{169} \Delta _{345}\right)+y_0  \Delta _{136} \Delta _{189}  \Delta _{456}}{\Delta _{136} \Delta _{147} \Delta _{159} \Delta _{168} \Delta _{189} \Delta _{249} \Delta _{258} \Delta _{267} \Delta _{348} \Delta _{357} \Delta _{369} \Delta _{456}} .
\end{align}
If we assume that the associated integrand for the non-realizable pseudo-GCO is zero and
require that the residue of \eqref{pseudoGCO39neiinte} at $ \Delta _{168} $ to vanish, i.e., that it has a spurious pole, $1/ \Delta _{168} $, we can fix the parameter, $y_0\to 0$.  

If instead, we could force the residue to match that of the non-realizable psedu-GCO. Recall we concluded that the $(3,9)$ pseudo-GCO is basic, i.e., $p=1$. So we can assign it  a ``naive integrand'',
$
1/(\Delta _{147} \Delta _{159} \Delta _{168} \Delta _{249} \Delta _{258} \Delta _{267} \Delta _{348} \Delta _{357} \Delta _{369})$. 
Then we find $y_0\to \pm1$ instead and \eqref{pseudoGCO39neiinte} would have the same residue at $ \Delta _{168} $ up to a sign with the naive integrand of the non-realizable pseudo-GCO.

\begin{figure}
    \centering

 \begin{tikzpicture}[scale=2.2]
  \draw [very thick](1.14745,2.71824) coordinate(A9) node [below]{9}-- (5.09675,4.26533) coordinate (B9) node [above]{9} ;
 \draw [very thick](0.952212,2.16088) coordinate(A8) node [below]{8}-- (4.63319,4.72623) coordinate (B8) node [above]{8} ;
 \draw [very thick](0.579394,3.43845) coordinate(A2) node [below]{2}-- (5.56247,6.89843) coordinate (B2) node [above]{2} ;
 \draw [very thick](1.79259,1.74797) coordinate(A7) node [below]{7}-- (5.5188,7.1901) coordinate (B7) node [above]{7} ;
 \draw [very thick](0.713652,3.11897) coordinate(A3) node [below]{3}-- (5.10205,7.17195) coordinate (B3) node [above]{3} ;
 \draw [very thick](1.59027,1.7718) coordinate(A6) node [below]{6}-- (4.22768,7.04896) coordinate (B6) node [above]{6} ;
 \draw [very thick](0.966711,1.81416) coordinate(A4) node [below]{4}-- (2.88379,6.7929) coordinate (B4) node [above]{4} ;
 \draw [very thick](3.44367,2.03684) coordinate(A5) node [below]{5}-- (1.633,6.15922) coordinate (B5) node [above]{5} ;

 \draw [very thick,dashed](5.44779,3.98417) coordinate(A1) node [right]{1}-- (0.236765,3.54077) coordinate (B1) node [left]{1} ;

  \draw [very thick,dashed](5.44779,4.22417) coordinate(A1) node [right]{1}-- (0.236765,3.78077) coordinate (B1) node [left]{1} ;
 
    \end {tikzpicture}

    \caption{
    Two $(3,9)$ arrangements of lines connected by a flip via the triangle $\{1,5,6\}$ but also sharing all the remaining triangles. The bottom dash line corresponds to GCO \eqref{bottomgco39} while the upper one corresponds to \eqref{uppergco39}.  The key feature is that the triangle $\{1,5,6\}$ in either arrangement of lines is surrounded by quadrangles or pentagons but no triangles. Hence when the bottom dash line $L_1$ moves up to the position of the upper dash line, all other triangles remain the same and all the standard $k=2$ color ordering in the lines stay the same except in  $L_1,L_5$ and $L_6$. 
    }
    \label{fig:twoGCO39}
\end{figure}
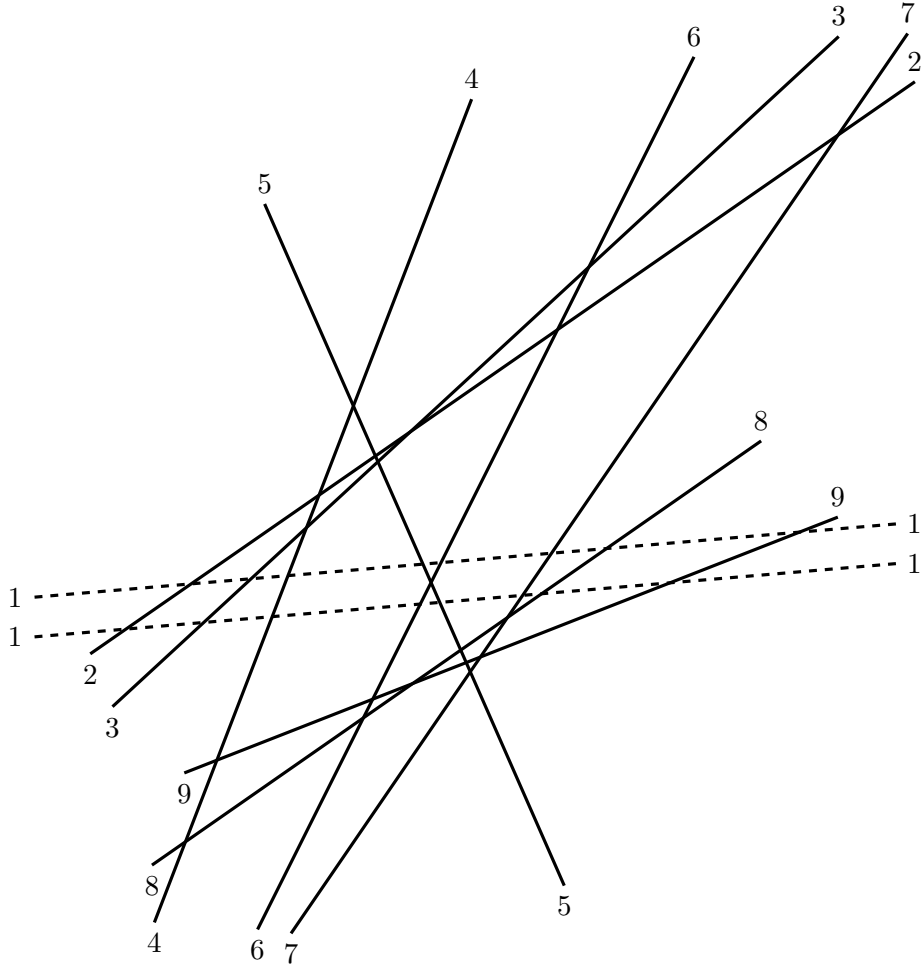

\subsection{Another New Phenomenon for $(3,9)$ GCOs\label{39newphe}}
In $(3,9)$, there are GCOs connected by a triangle flip and share the same set of triangles.
 For example, the following two GCOs 
\begin{align}
\label{uppergco39}
   \big( (23456789),(14536789),(14526789),(13256789),(13247986),(12347895),\qquad&
\nonumber   \\
   (12346598),(12346957),(12346857)\big)\,,&
   \\
   \label{bottomgco39}
      \big((23465789),(14536789),(14526789),(13256789),(16324798),(15234789),\qquad&
  \nonumber  \\
    (12346598),(12346957),(12346857)\big)\,,&
    \end{align}
are connected to each other by a flip of triangle $\{1,5,6\}$. Their arrangements of lines are shown in \cref{fig:twoGCO39}. By construction, they share the triangle  $\{1,5,6\}$
but they also share the remaining 10 triangles,
\begin{align}\nonumber
&  \{1,2,9\},\{1,3,4\},
  \{1,7,8\},\{2,3,5\},\{2,3,6\}, \\  & \{2,4,5\},\{4,6,7\},\{5,7,9\},\{5,8,9\},\{6,8,9\}  \,.
\end{align}
Their integrands have the same denominators but different numerators,
\begin{align}
    \frac{\Delta _{159} \Delta _{256}}{\Delta _{129} \Delta _{134} \Delta _{156} \Delta _{178} \Delta _{235} \Delta _{236} \Delta _{245} \Delta _{467} \Delta _{579} \Delta _{589} \Delta _{689}}\,,
    \\
    -\frac{\Delta _{125} \Delta _{569}}{\Delta _{129} \Delta _{134} \Delta _{156} \Delta _{178} \Delta _{235} \Delta _{236} \Delta _{245} \Delta _{467} \Delta _{579} \Delta _{589} \Delta _{689}}
  \,.
\end{align}
The two GCOs are related by $2\leftrightarrow9,3\leftrightarrow8,4\leftrightarrow7$ as are their integrands.

Such GCOs are not necessarily of the same type. For example, the following two GCOs are also connected by flipping triangle $\{1,5,6\}$,
\begin{align}
\label{uppergco3922}
   \big( (23465789),(14635789),(14625789),(13265789),(12347896),(13249785),
   \qquad&
  \nonumber  \\
   (12345968),(12345967),
    (12345876) \big)\,,
    \\
    \label{bottomgco3922}
    \big(
    (23456789),(14635789),(14625789),(13265789),(16234789),(15324978),
    \qquad&
  \nonumber  \\(12345968),(12345967),
    (12345876)
    \big)\,,
\end{align}
and share the same set of triangles,
\begin{align}\nonumber &
 \{1,5,6\},\,\{1,2,9\},\{1,3,4\},
 \{1,7,8\},\{2,3,5\},\{2,3,6\},\\ &\{2,4,6\},\{4,5,7\},\{5,8,9\},\{6,7,8\},\{6,7,9\}\,,
\end{align}
but they are not related by relabelling. Here are their integrands, 
\begin{align}
\frac{
\Delta _{167} \Delta _{256}
}{\Delta _{129} \Delta _{134} \Delta _{156} \Delta _{178} \Delta _{235} \Delta _{236} \Delta _{246} \Delta _{457} \Delta _{589} \Delta _{678} \Delta _{679}}\,,
\\
-\frac{\Delta _{126} \Delta _{567}}{\Delta _{129} \Delta _{134} \Delta _{156} \Delta _{178} \Delta _{235} \Delta _{236} \Delta _{246} \Delta _{457} \Delta _{589} \Delta _{678} \Delta _{679}}
\,,
\end{align}
where their numerators are related by $2\leftrightarrow7$ but their denominators are not.

\section{Proof of $k=3$ Irreducible Decoupling Identities \label{appproof}}
Here we prove Proposition \ref{irreProof} in the main text by making use of the residue theorem.

\begin{proof}
Assume that $L = \{\Sigma_1,\Sigma_2,\ldots ,\Sigma_N\}$ is a set of GCOs produced by using algorithm II. Compute the corresponding CEGM integrands $\{ {\cal I}(\Sigma_1),{\cal I}(\Sigma_2), \ldots ,{\cal I}(\Sigma_N)\}$. An identity is found if there exists an $N$-dimensional vector $v\in \{-1,1\}^N$ such that 
\be\label{goal} 
\left( {\cal I}(\Sigma_1),{\cal I}(\Sigma_2), \ldots ,{\cal I}(\Sigma_N)\right)\cdot v = 0.
\ee 

Treat each integrand as a rational function of a single complex variable $z$ by using a parameterization of $X(3,n)$ of the form,
\be\label{cegm3para} 
	M := \left[ \begin{array}{cccccccc}
		1 & 0 & 0 & 1 & y_{15}  & y_{16} & \cdots & y_{1n}+z  \\
		0 & 1 & 0 & 1 &  y_{25}  & y_{26} & \cdots & y_{2n} \\
		0 & 0 & 1 & 1 &    1  & 1   & \cdots & 1 
\end{array}  \right] .
\ee 
Here all $y$ variables are considered to be generic and fixed. Every minor $\Delta_{abn}$ becomes a linear function of $z$ and without loss of generality, we assume that no two $\Delta_{abn}(z)$ share a root. 

This produces a rational function for every choice of $v$,
\be 
f_v(z):= \left( {\cal I}(\Sigma_1,z),{\cal I}(\Sigma_2,z), \ldots ,{\cal I}(\Sigma_N,z) \right)\cdot v .
\ee 

It is clear that as $z\to \infty$, $f_v(z) = {\cal O}(z^{-3})$. The reason is the torus scaling \eqref{torusS}. We also know that all possible poles of $f_v(z)$ are simple and of the form $\Delta_{abn}(z)$. We now have to prove that there exists a choice of $v$ such that $f_v(z)$ has a zero residue at $z^*$, and such that $\Delta_{abn}(z^*)=0$ for any $\{a,b\}$.  

Consider the poles in ${\cal I}(\Sigma_1,z)$. If $\{a,b\} \neq \{i,j\}$ and $\{a,b\} \neq \{p,q\}$ then, by construction (i.e. steps 2 and 3 in Algorithm II), there exist another GCO in $L$ connected to $\Sigma_1$ by a triangle flip along $\{a,b,n\}$, say it is $\Sigma_2$. This means that 
\be\label{oneSigma} 
{\rm Res}_{\Delta_{abn}(z^*)=0} {\cal I}(\Sigma_1,z) = \pm {\rm Res}_{\Delta_{abn}(z^*)=0} {\cal I}(\Sigma_2,z).
\ee 
Denote the entries in $v=(v_1,v_2,\ldots ,v_N)$. Clearly, one can set $v_1=1$ since if $v$ satisfies \eqref{goal} so does $-v$. Using \eqref{oneSigma}, one can fix $v_2$ so that ${\cal I}(\Sigma_1,z)+v_2 {\cal I}(\Sigma_2,z)$ has zero residue at $z^*$.

The procedure can be continued for all poles corresponding to the triangle flips performed during the implementation of the algorithm. 

Finally, we are left to prove that the choice of $v$ found above ensures that $f_v(z)$ has a zero residue at the solution to $\Delta_{ijn}(z)=0$ and to $\Delta_{pqn}(z)=0$.

In order to proceed it is necessary to introduce another degree of freedom by modifying our parametrization to include a second coordinate, $w$,
\be\label{cegm3para2D} 
	M := \left[ \begin{array}{cccccccc}
		1 & 0 & 0 & 1 & y_{15}  & y_{16} & \cdots & y_{1n}+z  \\
		0 & 1 & 0 & 1 &  y_{25}  & y_{26} & \cdots & y_{2n}+w \\
		0 & 0 & 1 & 1 &    1  & 1   & \cdots & 1 
\end{array}  \right] .
\ee 
Now, localizing on $\Delta_{ijn}(z,w)=0$ by solving for $z=z(w)$ and treating each column in $M$ \eqref{cegm3para2D} as a point in $\mathbb{CP}^{2}$ implies that the points labeled $i,j,n$ are on a projective line, i.e., a $\mathbb{CP}^{1}$. We can take $w$ as the inhomogenous coordinate of the line. 

Our function $f_v(z(w),w) = g_v(w)$ is now a rational function of the $\mathbb{CP}^{1}$ defined by points $i$ and $j$. Those elements in  $\{ {\cal I}(\Sigma_1),{\cal I}(\Sigma_2), \ldots ,{\cal I}(\Sigma_N)\}$ that contain poles on the $w$, $\mathbb{CP}^{1}$, different from $\Delta_{pqn}(z(w),w)=0$, by construction always have a companion with the same residue but opposite in sign. This is because such poles correspond to triangle flips used in the algorithm. Therefore the only possible singularities of the function $g_v(w)$ is a simple pole at $\Delta_{pqn}(z(w),w)=0$. However, $g_v(w)={\cal O}(w^{-2})$ as $w\to \infty$ and since $\Delta_{pqn}(z(w),w)$ is linear in $w$ it must be that $g_v(w)=0$. Hence the residue of $f_v(z,w)$ at $\Delta_{ijn}(z,w)=0$ vanishes.

Repeating the same argument but localizing to the line $\Delta_{pqn}(z,w)=0$ we conclude that the function $f_v(z)$ is identically zero and hence we have a decoupling identity.

The last statement to prove is that the identity is irreducible. But this follows from the fact that on the $\mathbb{CP}^{1}$, the functions cancel their poles pairwise in analogy to a $k=2$ decoupling identity, and therefore there are no proper subsets that can be made to vanish.

\end{proof}

\section{Tables of GCOs and Identities}

In this appendix, we present useful information needed to follow some of the arguments in the main text.

\subsection{$(3,7)$ GCOs \label{37gco}}

Here we list the 11 representatives of (3,7) GCOs of different types given in Table 2 of \cite{Cachazo:2022pnx} for completeness, 
\allowdisplaybreaks[1]
\begin{align}
 &\Sigma_0=      
((234567),(134567),(124567),(123567),(123467),(123457),(123456)) \,,
\\ 
 &\Sigma_I= 
((234567),(134567),(124567),(123567),(123476),(123475),(123465))\,,
\\ 
& \Sigma_{II}= 
((234567),(134567),(124567),(123576),(123476),(123745),(123645))\,,
\\ 
& \Sigma_{III}= 
((234567),(134567),(124567),(123756),(123746),(123745),(123654))\,,
\\ 
& \Sigma_{IV}= 
((234567),(134567),(124576),(123756),(123746),(127345),(126354))\,,
\\ 
& \Sigma_{V}= 
((234567),(134576),(124576),(123756),(123746),(154327),(145326))\,,
\\ 
& \Sigma_{VI}= 
((234756),(134576),(124567),(123567),(164327),(127345),(143625))\,,
\\ 
& \Sigma_{VII}= 
((234756),(134756),(124567),(123567),(127346),(127345),(125634))\,,
\\ 
& \Sigma_{VIII}= 
((234567),(134576),(124576),(123765),(123764),(145327),(145326))\,,
\\ 
& \Sigma_{IX}= 
((234567),(134576),(124756),(123765),(127364),(145327),(143526))\,,
\\ 
& \Sigma_{X}= 
((234576),(134576),(124756),(123675),(127364),(123574),(125364))   \,.
\end{align}
\allowdisplaybreaks[0]

\subsection{ 31 Explicit GCOs and Integrands in Figure \ref{P36point1Text}  \label{appb}}

Here are the 31 explicit forms of GCOs, their types, and the integrands in \cref{P36point1Text}. 

\allowdisplaybreaks[1]
\small{
\begin{align}
&\text{C}_1\,\,\text{I}\,\,\text{((23456), (13456), (12456), (12365), (12364), (12354))}\,\,-\frac{1}{\Delta _{123} \Delta _{126} \Delta _{145} \Delta _{234} \Delta _{356} \Delta _{456}}\nonumber \\ &\text{C}_2,0\,\,\text{((23456), (13456), (12456), (12356), (12346), (12345))}\,\,\frac{1}{\Delta _{123} \Delta _{126} \Delta _{156} \Delta _{234} \Delta _{345} \Delta _{456}}\nonumber \\ &\text{C}_3\,\,\text{I}\,\,\text{((23465), (13456), (12456), (12356), (14326), (14325))}\,\,\frac{1}{\Delta _{123} \Delta _{146} \Delta _{156} \Delta _{234} \Delta _{256} \Delta _{345}}\nonumber \\ &\text{C}_4\,\,\text{II}\,\,\text{((23645), (13456), (12456), (15326), (14326), (13254))}\,\,\frac{\Delta _{134}}{\Delta _{123} \Delta _{136} \Delta _{145} \Delta _{146} \Delta _{234} \Delta _{256} \Delta _{345}}\nonumber \\ &\text{C}_5\,\,\text{II}\,\,\text{((23465), (13465), (12456), (12356), (12634), (12534))}\,\,\frac{\Delta _{235}}{\Delta _{123} \Delta _{125} \Delta _{146} \Delta _{234} \Delta _{256} \Delta _{345} \Delta _{356}}\nonumber \\ &\text{C}_6\,\,\text{I}\,\,\text{((25436), (13456), (15426), (15326), (14326), (12543))}\,\,\frac{1}{\Delta _{126} \Delta _{136} \Delta _{145} \Delta _{234} \Delta _{256} \Delta _{345}}\nonumber \\ &\text{C}_7\,\,\text{I}\,\,\text{((23465), (13465), (12465), (12356), (12364), (12354))}\,\,-\frac{1}{\Delta _{123} \Delta _{125} \Delta _{146} \Delta _{234} \Delta _{356} \Delta _{456}}\nonumber \\ &\text{C}_8\,\,\text{III}\,\,\text{((23645), (13465), (12456), (15326), (12634), (13524))}\,\,\frac{\Delta _{123} \Delta _{156} \Delta _{246} \Delta _{345}-\Delta _{126} \Delta _{135} \Delta _{234} \Delta _{456}}{\Delta _{123} \Delta _{125} \Delta _{136} \Delta _{145} \Delta _{146} \Delta _{234} \Delta _{246} \Delta _{256} \Delta _{345} \Delta _{356}}\nonumber \\ &\text{C}_9\,\,\text{II}\,\,\text{((23645), (13465), (12465), (15326), (12364), (14235))}\,\,-\frac{\Delta _{124}}{\Delta _{123} \Delta _{125} \Delta _{145} \Delta _{146} \Delta _{234} \Delta _{246} \Delta _{356}}\nonumber \\ &\text{C}_{10}\,\,\text{II}\,\,\text{((25436), (13465), (15426), (15326), (12634), (13425))}\,\,-\frac{\Delta _{245}}{\Delta _{125} \Delta _{136} \Delta _{145} \Delta _{234} \Delta _{246} \Delta _{256} \Delta _{345}}\nonumber \\ &\text{C}_{11}\,\,\text{I}\,\,\text{((25436), (13645), (15426), (12635), (12634), (13245))}\,\,\frac{1}{\Delta _{125} \Delta _{136} \Delta _{145} \Delta _{236} \Delta _{246} \Delta _{345}}\nonumber \\ &\text{C}_{12}\,\,\text{II}\,\,\text{((23645), (13645), (12456), (12635), (12634), (12453))}\,\,\frac{\Delta _{135}}{\Delta _{123} \Delta _{125} \Delta _{136} \Delta _{145} \Delta _{246} \Delta _{345} \Delta _{356}}\nonumber \\ &\text{C}_{13}\,\,\text{I}\,\,\text{((23645), (13645), (12465), (12635), (12364), (12435))}\,\,\frac{1}{\Delta _{123} \Delta _{125} \Delta _{145} \Delta _{246} \Delta _{346} \Delta _{356}}\nonumber \\ &\text{C}_{14},0\,\,\text{((23645), (13645), (12645), (12365), (12364), (12345))}\,\,-\frac{1}{\Delta _{123} \Delta _{125} \Delta _{145} \Delta _{236} \Delta _{346} \Delta _{456}}\nonumber \\ &\text{C}_{15},0\,\,\text{((25436), (15436), (12645), (12635), (12634), (12345))}\,\,-\frac{1}{\Delta _{125} \Delta _{126} \Delta _{145} \Delta _{236} \Delta _{345} \Delta _{346}}\nonumber \\ &\text{C}_{16}\,\,\text{I}\,\,\text{((25436), (15436), (12465), (12365), (12634), (12435))}\,\,-\frac{1}{\Delta _{125} \Delta _{126} \Delta _{145} \Delta _{234} \Delta _{346} \Delta _{356}}\nonumber \\ &\text{C}_{17},0\,\,\text{((23465), (13465), (12465), (12365), (12346), (12345))}\,\,\frac{1}{\Delta _{123} \Delta _{125} \Delta _{156} \Delta _{234} \Delta _{346} \Delta _{456}}\nonumber \\ &\text{C}_{18},0\,\,\text{((23456), (15436), (15426), (15326), (14326), (12345))}\,\,\frac{1}{\Delta _{126} \Delta _{145} \Delta _{156} \Delta _{234} \Delta _{236} \Delta _{345}}\nonumber \\ &\text{C}_{19}\,\,\text{I}\,\,\text{((25436), (15436), (12456), (12356), (12346), (12543))}\,\,\frac{1}{\Delta _{125} \Delta _{126} \Delta _{136} \Delta _{234} \Delta _{345} \Delta _{456}}\nonumber \\ &\text{C}_{20}\,\,\text{I}\,\,\text{((23456), (13465), (12465), (12365), (14326), (14325))}\,\,\frac{1}{\Delta _{123} \Delta _{145} \Delta _{156} \Delta _{234} \Delta _{256} \Delta _{346}}\nonumber \\ &\text{C}_{21}\,\,\text{I}\,\,\text{((23456), (13645), (12645), (15326), (14326), (13245))}\,\,\frac{1}{\Delta _{123} \Delta _{145} \Delta _{156} \Delta _{236} \Delta _{246} \Delta _{345}}\nonumber \\ &\text{C}_{22}\,\,\text{I}\,\,\text{((23456), (13456), (12645), (12635), (12634), (12543))}\,\,-\frac{1}{\Delta _{123} \Delta _{126} \Delta _{145} \Delta _{256} \Delta _{345} \Delta _{346}}\nonumber \\ &\text{C}_{23}\,\,\text{II}\,\,\text{((25436), (15436), (12456), (12365), (12364), (12453))}\,\,\frac{\Delta _{156}}{\Delta _{125} \Delta _{126} \Delta _{136} \Delta _{145} \Delta _{234} \Delta _{356} \Delta _{456}}\nonumber \\ &\text{C}_{24}\,\,\text{II}\,\,\text{((23456), (13456), (12465), (12365), (12634), (12534))}\,\,\frac{\Delta _{236}}{\Delta _{123} \Delta _{126} \Delta _{145} \Delta _{234} \Delta _{256} \Delta _{346} \Delta _{356}}\nonumber \\ &\text{C}_{25}\,\,\text{II}\,\,\text{((23645), (15436), (15426), (12356), (12346), (13254))}\,\,\frac{\Delta _{346}}{\Delta _{125} \Delta _{136} \Delta _{146} \Delta _{234} \Delta _{236} \Delta _{345} \Delta _{456}}\nonumber \\ &\text{C}_{26}\,\,\text{I}\,\,\text{((23465), (15436), (15426), (15326), (12346), (14325))}\,\,-\frac{1}{\Delta _{125} \Delta _{146} \Delta _{156} \Delta _{234} \Delta _{236} \Delta _{345}}\nonumber \\ &\text{C}_{27}\,\,\text{I}\,\,\text{((23645), (13645), (12645), (12356), (12346), (12354))}\,\,\frac{1}{\Delta _{123} \Delta _{125} \Delta _{146} \Delta _{236} \Delta _{345} \Delta _{456}}\nonumber \\ &\text{C}_{28}\,\,\text{I}\,\,\text{((23645), (15436), (15426), (12365), (12364), (13245))}\,\,-\frac{1}{\Delta _{125} \Delta _{136} \Delta _{145} \Delta _{234} \Delta _{236} \Delta _{456}}\nonumber \\ &\text{C}_{29}\,\,\text{II}\,\,\text{((23456), (13465), (12645), (12635), (14326), (13425))}\,\,-\frac{\Delta _{456}}{\Delta _{123} \Delta _{145} \Delta _{156} \Delta _{246} \Delta _{256} \Delta _{345} \Delta _{346}}\nonumber \\ &\text{C}_{30}\,\,\text{I}\,\,\text{((23465), (13465), (12645), (12635), (12346), (12435))}\,\,\frac{1}{\Delta _{123} \Delta _{125} \Delta _{156} \Delta _{246} \Delta _{345} \Delta _{346}}\nonumber \\ &\text{C}_{31}\,\,\text{II}\,\,\text{((23465), (13645), (12645), (15326), (12346), (14235))}\,\,-\frac{\Delta _{126}}{\Delta _{123} \Delta _{125} \Delta _{146} \Delta _{156} \Delta _{236} \Delta _{246} \Delta _{345}}
\nonumber
\end{align}
}
\allowdisplaybreaks[0]

\def \wi {.8}

\setlength{\abovecaptionskip}{-5pt}

\begin{figure}[h!]
	\centering
   \includegraphics
   [width=\wi \linewidth]
{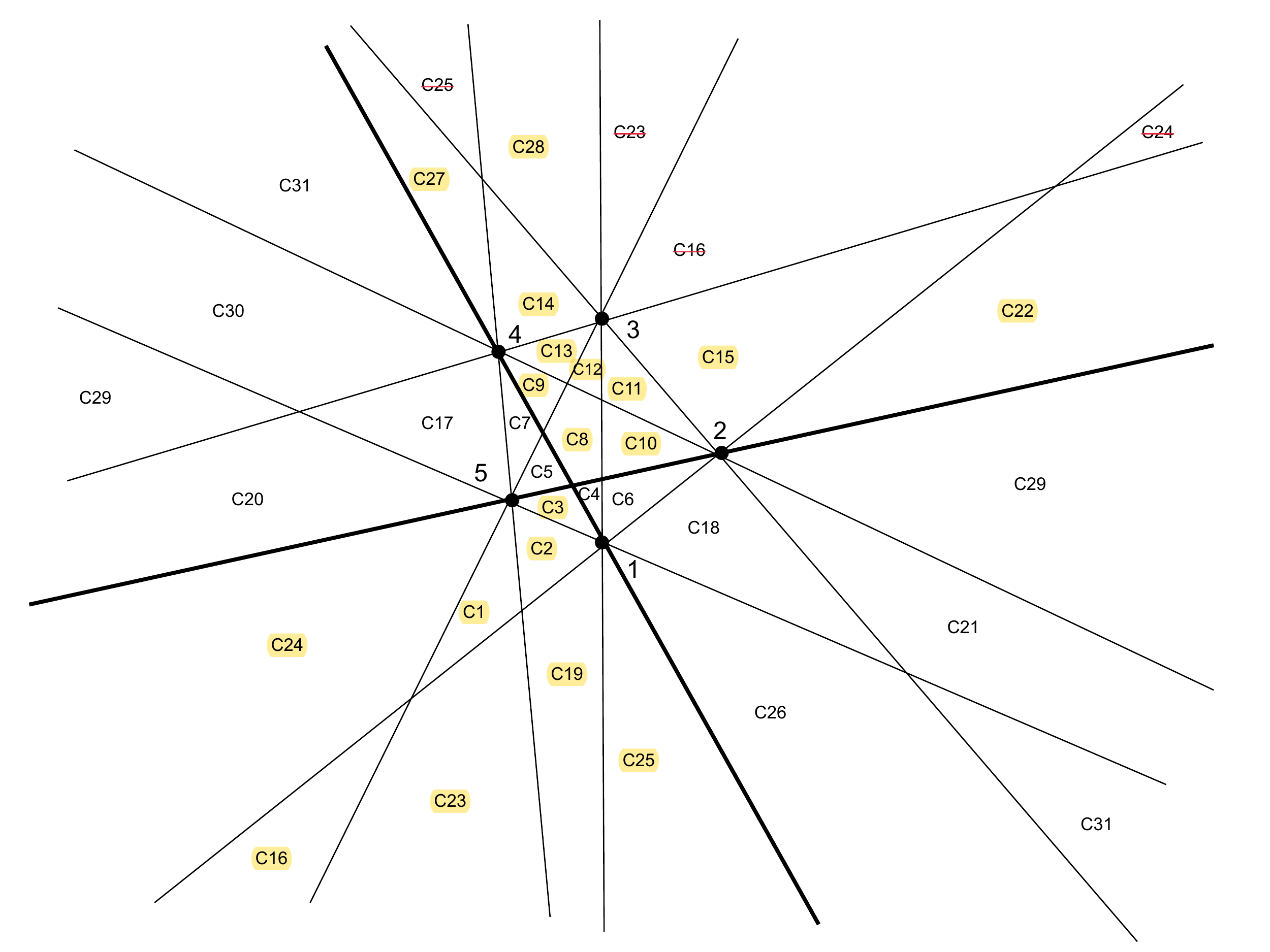}
\caption{
Partitioning identity obtained by considering lines $L_{14}$ and $L_{25}$. The identity that contains chamber $C_8$ has $12$ terms while the one that contains $C_{17}$ has $19$ terms.   \label{P36Irred1219A} }
\end{figure}

\begin{figure}[h!]
	\centering
   \includegraphics
   [width=\wi \linewidth]
{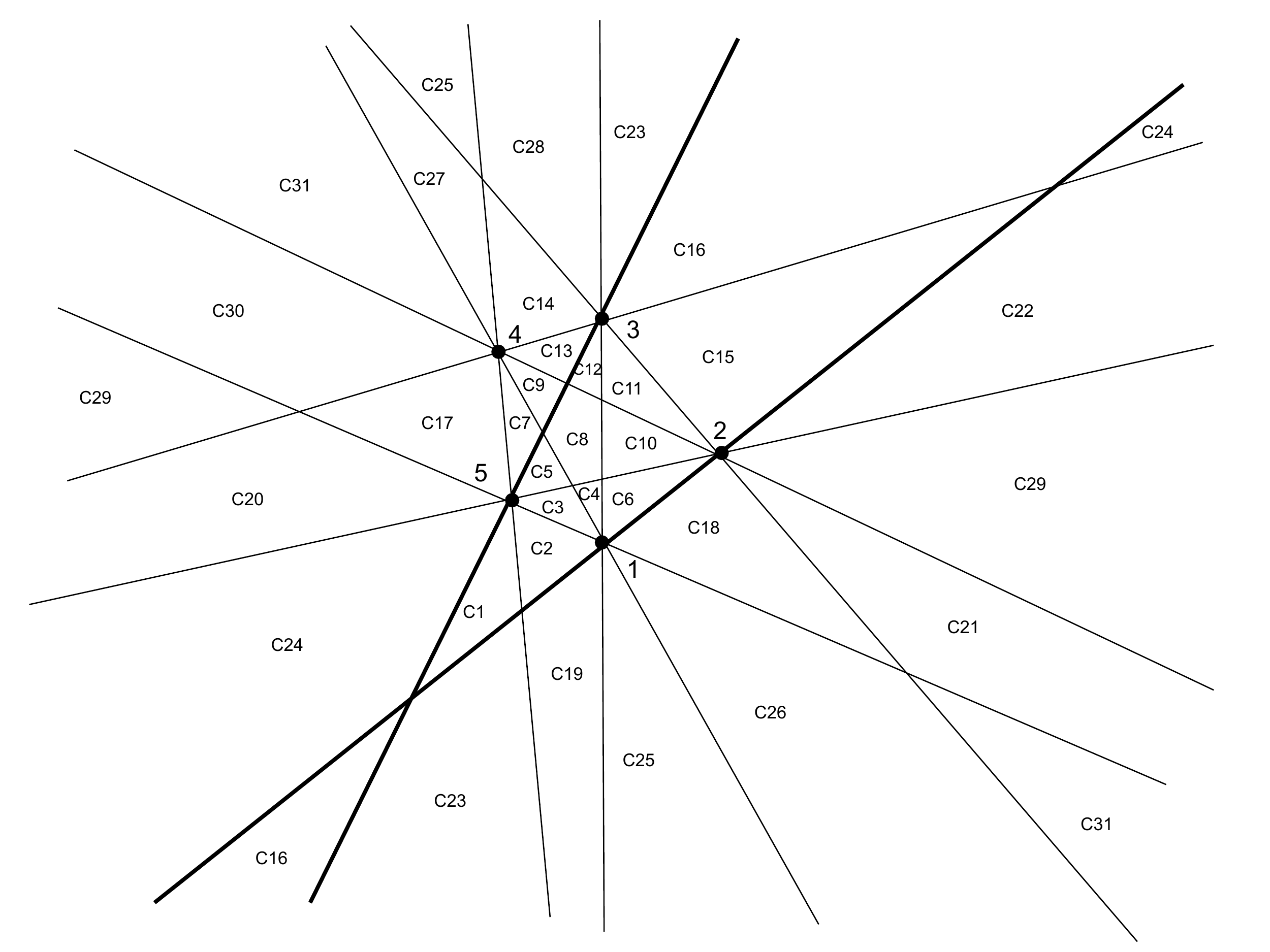}
\caption{
Partitioning identity obtained by considering lines $L_{12}$ and $L_{35}$. The identity that contains chamber $C_8$ has $12$ terms while the one that contains $C_{17}$ has $19$ terms.   \label{P36Irred1219B} }
\end{figure}

\begin{figure}[h!]
	\centering
   \includegraphics
   [width=\wi \linewidth]
{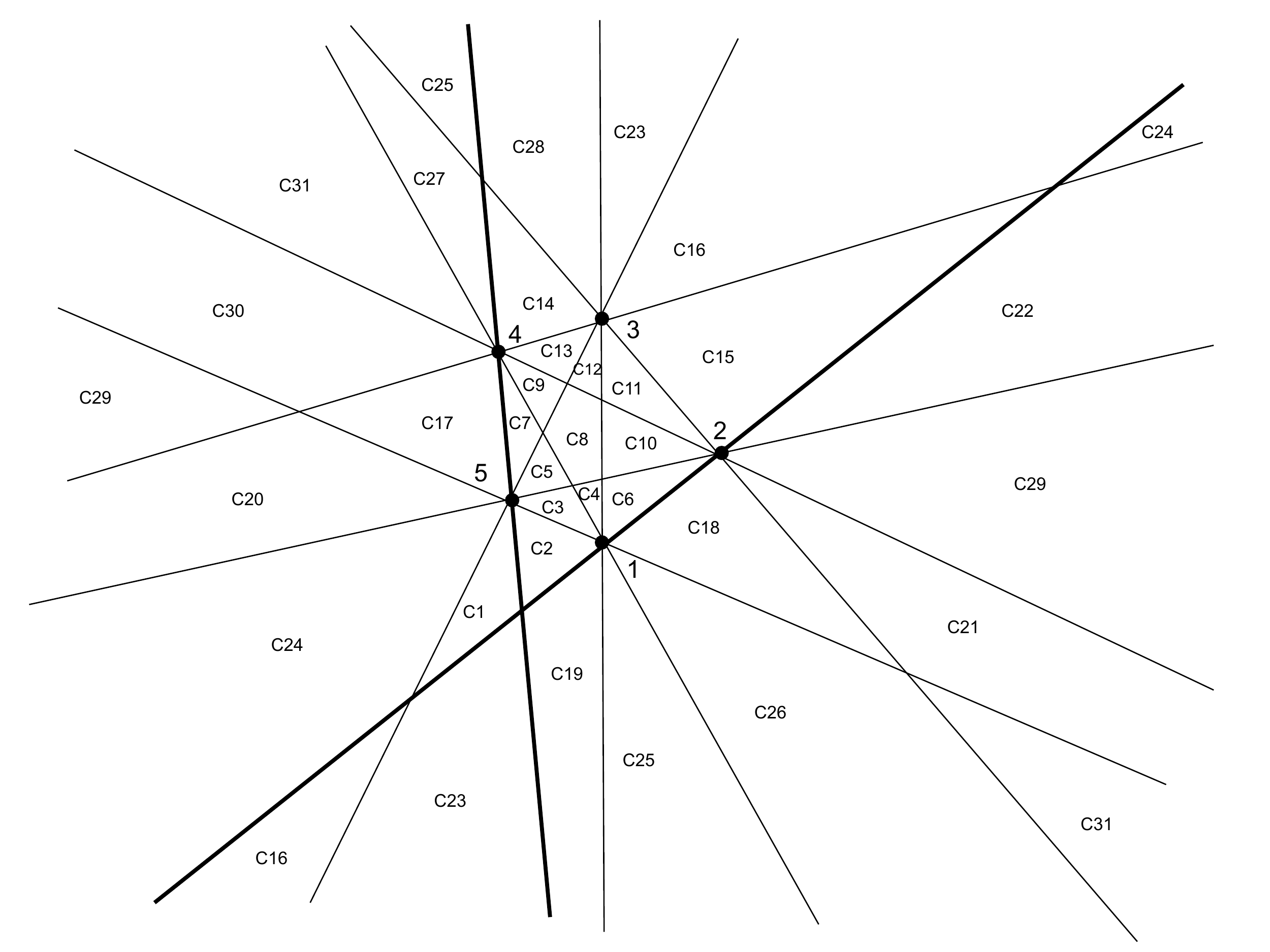}
\caption{
Partitioning identity obtained by considering lines $L_{12}$ and $L_{45}$. The identity that contains chamber $C_8$ has $14$ terms while the one that contains $C_{17}$ has $17$ terms.   \label{P36Irred1417}}
\end{figure}

\subsection{All $(3,6)$ Irreducible Identities \label{appc}}

In \cref{sec4}, we introduced irreducible identities and a combinatorial way to find them. In the case $(3,6)$, one can explicitly visualize the identities by selecting two of the ten lines in $\mathbb{RP}^2$ constructed using the five points left after removing point $6$. Each pair of lines intersects at a point and separates $\mathbb{RP}^2$ into two regions. 

When the intersection point is not one of the five special points, then the identities obtained by combining the GCOs on either one of the two regions give rise to an irreducible identity, i.e., these are partition identities. There are three different such configurations in $(3,6)$. Two of them give rise to a pair of a $12$-term and a $19$-term identities. The other one leads to a pair of identities with $14$ and $17$ terms. These are shown in figure \ref{P36Irred1219A}, \ref{P36Irred1219B}, and \ref{P36Irred1417}.

When the intersection point is one of the five special points, then the identities obtained by combining the GCOs on either one of the two regions are not guaranteed to be irreducible. This is the case discussed in the main text but we repeat the description here for completeness. 

Any special point belongs to four lines. The four lines split $\mathbb{RP}^2$ into four regions, and each region gives rise to an identity. This is shown in 
\cref{P36point1Text}
where one can easily see that the identities contain $7$, $7$, $8$ and $9$ terms. Clearly, $7+7+8+9=31$ since they partition the projective plane.

\bibliographystyle{jhep}
\bibliography{references}

\end{document}